\documentclass[11pt]{article}

\usepackage{latexsym, amsmath, amssymb, amsthm}

\setlength{\textwidth}{5.9in} \setlength{\textheight}{8.7in}
\setlength{\topmargin}{0pt} \setlength{\headsep}{0pt}
\setlength{\headheight}{0pt}
\setlength{\oddsidemargin}{6pt} \setlength{\evensidemargin}{6pt}

\newif\iffull\fulltrue

\newtheorem{theorem}{\bf Theorem}[section]
\newtheorem{lemma}[theorem]{\bf Lemma}
\newtheorem{corollary}[theorem]{\bf Corollary}
\newtheorem{proposition}[theorem]{\bf Proposition}

\newtheorem{fact}{\bf Fact}

\def\rk{{\rm rk}}
\def\chara{{\rm char}}
\def\Z{{\mathbb Z}}
\def\Q{{\mathbb Q}}

\def\F{{\mathbb F}}
\def\qed{\raisebox{-.2ex}{$\Box$}}
\def\Aalg{{\cal A}}
\def\Balg{{\cal B}}
\def\Calg{{\cal C}}
\def\Dalg{{\cal D}}

\def\Iid{{\cal I}}


\def\forevery{{\mbox{ for every }}}
\def\Aut{{\rm Aut}}
\def\rem#1{}

\hyphenation{ma-nu-script}

\begin{document}

\date{}
\title{
\bf Trading GRH for Algebra: Algorithms for Factoring Polynomials and
Related Structures
}
\author{
G\'abor Ivanyos
\thanks{Computer and Automation Research Institute
of the Hungarian Academy of Sciences (MTA SZTAKI),
L\'agym\'anyosi u. 11,
1111 Budapest, Hungary. 
E-mail: {\tt Gabor.Ivanyos@sztaki.hu}}
\and
Marek Karpinski
\thanks{Department of Computer Science and Hausdorff Center for Mathematics, 
University of Bonn, 53117 Bonn, Germany.
E-mail: {\tt marek@cs.uni-bonn.de}}
\and
Lajos R\'onyai
\thanks{
MTA SZTAKI
and 
Department of Algebra, 
Budapest University of Technology and Economics,
M\H{u}egyetem rkp.~3-9,
1111 Budapest, Hungary.
E-mail: {\tt lajos@ilab.sztaki.hu}}
\and 
Nitin Saxena
\thanks{Hausdorff Center for Mathematics,
Endenicher Allee 62,
53115 Bonn, Germany.
E-mail: {\tt ns@hcm.uni-bonn.de}}
\thanks{
The authors thank the Hausdorff Research Institute for Mathematics for its kind support.}
}

\maketitle
\begin{abstract}
In this paper we develop techniques that eliminate the need of the Generalized 
Riemann Hypothesis (GRH) from various (almost all) known results about deterministic 
polynomial factoring over finite fields. Our main result shows that given a 
polynomial $f(x)$ of degree $n$ over a finite field $k$, we can find in deterministic 
$poly(n^{\log n}$, $\log |k|)$ time {\em either} a nontrivial factor 
of $f(x)$ {\em or} a nontrivial automorphism of $k[x]/(f(x))$ of order $n$. 
This main tool leads to various new GRH-free results, most striking of which are:
\begin{enumerate}
\item Given a noncommutative algebra $\Aalg$ of dimension $n$ over a finite field 
$k$. There is a deterministic $poly(n^{\log n}$, $\log |k|)$ time algorithm to find
a zero divisor in $\Aalg$. This is the best known deterministic GRH-free result 
since Friedl and R\'onyai (STOC 1985) first studied the problem of finding zero divisors in finite 
algebras and showed that this problem has the same complexity as factoring 
polynomials over finite fields.
\item Given a positive integer $r$ such that either $8|r$ or 
$r$ has at least two distinct odd prime factors. There is a deterministic polynomial time 
algorithm to find a nontrivial factor of the $r$-th cyclotomic 
polynomial over a finite field. This is the best known deterministic GRH-free result 
since Huang (STOC 1985) showed that cyclotomic polynomials can be factored 
over finite fields in deterministic polynomial time assuming GRH.
\end{enumerate}
In this paper, following the seminal work of Lenstra (1991)
on constructing isomorphisms between finite fields,
we further generalize classical Galois
theory
constructs like cyclotomic extensions, Kummer extensions,
Teichm\"uller subgroups, to the case of commutative 
semisimple algebras with automorphisms. These
generalized constructs help eliminate the dependence on GRH.
\end{abstract}

\iffull
\else
\newpage
\fi

\section{Introduction}

The problem of finding a nontrivial factor of a given polynomial over a finite 
field is a fundamental computational problem. There are many problems whose known
algorithms first require factoring polynomials. Thus, polynomial factoring is an
intensely studied question and various randomized polynomial time algorithms are
known -- Berlekamp \cite{Be}, Rabin \cite{rabin}, Cantor and Zassenhaus \cite{cantor-z}, von 
zur Gathen and Shoup \cite{gathen-shoup}, Kaltofen and Shoup \cite{kaltofen-shoup} 
-- but its deterministic complexity is a longstanding open problem. There are
although several partial results known about the deterministic complexity of 
polynomial factoring based on the conjectured truth of the generalized Riemann 
Hypothesis (GRH). The surprising connection of GRH with polynomial factoring is 
based on the fact that if GRH is true 
and $r$ is a prime dividing $(|k|-1)$ then
one can find primitive $r$-th 
nonresidues in the finite field $k$, which can then be used to factor `special' 
polynomials, $x^r-a$ over $k$, in deterministic polynomial time (see 
\cite{ev-solvable}). 

Based on this are many 
deterministic factoring algorithms known, but all of them are super-polynomial time 
except on special instances. 

The special instance when the degree $n$ of the input polynomial $f(x)$ has a 
``small'' prime factor $r$ has been particularly interesting. R\'onyai \cite{ro1} 
showed that under GRH one can find a nontrivial factor of $f(x)$ in deterministic 
polynomial time. Later it was shown by Evdokimov \cite{Ev} that R\'onyai's 
algorithm can be modified to get under GRH a deterministic algorithm that factors 
{\em any} input polynomial $f(x)\in k[x]$ of degree $n$ in {\em sub-exponential} time 
$poly(n^{\log n}, \log|k|)$. This line of approach has since been investigated, in
an attempt to remove GRH or improve the time complexity, leading to several 
algebraic-combinatorial conjectures and quite special case solutions 
\cite{ch, gao, iks}. 

Some other instances studied have been related to the {\em Galois group} of the 
given polynomial over rationals. R\'onyai \cite{ro4} showed under GRH that any
polynomial $f(x)\in\Z[x]$ can be factored modulo $p$ deterministically in time 
polynomial in the size of the Galois group over $\Q$ of $f$, except for finitely 
many primes $p$. 
Other results of a similar flavor are: Evdokimov \cite{ev-solvable} showed under 
GRH that $f(x)$ can be factored in deterministic polynomial time if it has a 
{\em solvable} Galois group while Huang \cite{huang} showed under GRH that $f(x)$ 
can be factored in deterministic polynomial time if it has an {\em Abelian} Galois 
group. 

Another instance studied is that of ``special'' finite fields. Bach, von zur 
Gathen and Lenstra \cite{bgl} showed under GRH that polynomials over finite fields 
of characteristic $p$ can be factored in deterministic polynomial time if 
$\phi_k(p)$ is ``smooth'' for some integer $k$, where $\phi_k(x)$ is the $k$-th 
cyclotomic polynomial. This result generalizes the previous works of R\'onyai 
\cite{ro2}, Mignotte and Schnorr \cite{mignotte}, von zur Gathen \cite{gathen}, 
Camion \cite{camion} and Moenck \cite{moenck}.

Polynomial factoring has several applications both in the real world - coding 
theory and cryptography - and in fundamental computational algebra problems. The 
latter kind of applications are relevant to this work. Friedl and R\'onyai \cite{FR} studied
the computational problem of finding the 
simple components and a zero 
divisor of a given finite algebra over a finite field. They showed that all these 
problems depend on factoring polynomials over finite fields and hence have 
randomized polynomial time algorithms. Furthermore, they have under GRH 
deterministic subexponential time algorithms. In this work we give an unconditional 
version of this result. We show that if the given algebra is noncommutative then 
in fact we can find a zero divisor in deterministic subexponential time {\em 
without} needing GRH.

\subsection{Our Results and Techniques} 

As we saw above there are several results on polynomial factoring that assume the 
truth of the GRH. Of course one would like to eliminate the need of GRH but that 
goal is still elusive. As a first step in that direction we give in this work GRH 
free versions of all the results mentioned above. In these versions the basic tool 
is that we either successfully find a nontrivial factor of a polynomial $f(x)$ 
over a finite field $k$ {\em or} we find a nontrivial automorphism of the algebra 
$k[x]/(f(x))$. Formally speaking the main result of the paper is:
\\

\noindent
{\bf Main Theorem: }
{\em Let $\Aalg$ be a commutative semisimple algebra of dimension
$n$ over a finite field $k$ and let $\Aalg$ be given in the input in terms of basis 
elements over $k$. Then there is a deterministic
algorithm which in subexponential time $poly(n^{\log n}, \log |k|)$ computes
a decomposition of $\Aalg$ into a direct sum 
$\Aalg_1\oplus\ldots\oplus \Aalg_t$ and finds an automorphism of order 
$\dim_k\Aalg_i$ of the algebra $\Aalg_i$, for each $1\le i\le t$.}
\\

This main theorem can be considered as a GRH-free version of Evdokimov's factoring result
\cite{Ev}, but its proof leads us to significantly generalize standard notions and 
develop novel algebraic techniques that suggest a general paradigm for GRH elimination.
We are going to use it
as a tool for more important applications but first let 
us explain the importance of this result itself.
It is the first deterministic subexponential time algorithm to find a nontrivial 
automorphism of a given commutative semisimple algebra over a finite field. Finding
a nontrivial automorphism of a given arbitrary ring is in general as hard as integer
factoring \cite{ks06} but our result shows that it might be a lot easier for 
a commutative semisimple algebra over a finite field.
Note that in the special case when $\Aalg=k[x]/(f(x))$ with $f(x)$ splitting over 
$k$ as $\prod_{j=1}^n$ $(x-\alpha_j)$, with $\alpha_1,\ldots,\alpha_n$ 
all distinct,
we have $\Aalg\cong\oplus_{j=1}^n$ 
$k[x]/(x-\alpha_j)$. The above algorithm either gives $t>1$ components of $\Aalg$ 
-- in which case it effectively yields a nontrivial factor of $f(x)$ -- or $t=1$ and 
it gives an automorphism $\sigma$ of $\Aalg$ of order $n$, thus yielding $n$ 
distinct ``roots'' of $f(x)$ -- $x$, $\sigma(x),\ldots$, $\sigma^{n-1}(x)$ -- all 
living in $\Aalg\setminus k$. This latter case can be interpreted as finding roots 
over finite fields in terms of ``radicals'', in analogy to classical Galois theory where
one studies rational polynomials whose roots can be expressed by radicals,
see Section~\ref{sec-kummer-extn} for details. 

The key ideas in finding a nontrivial automorphism of a given commutative 
semisimple $\Balg$-algebra $\Aalg$ over a finite field $k\subseteq\Balg$ are as follows. We 
consider a special ideal $\Aalg^\prime$ (what we call the {\em essential part} 
\iffull in Section \ref{sec-essential}\fi 
) of the tensor product $\Aalg\otimes_\Balg\Aalg$. The ideal $\Aalg'$ is just
the kernel of a standard homomorphism of $\Aalg\otimes_\Balg\Aalg$ onto
$\Aalg$ and has rank (``dimension'') 
$\rk_\Balg\Aalg(\rk_\Balg\Aalg-1)$ over $\Balg$. 
The algebra $\Aalg$ gets naturally 
embedded in $\Aalg^\prime$ by a map $\phi$, hence $\Aalg^\prime$ is an extension 
algebra of $\phi(\Aalg)\cong\Aalg$ which in turn is an extension algebra of 
$\phi(\Balg)\cong\Balg$. Also, we know a natural automorphism of $\Aalg^\prime$ fixing 
$\Balg$ -- the map $\tau:x\otimes y\mapsto$ $y\otimes x$. 
A lot of technical effort 
goes into ``bringing down'' this automorphism 
(or certain other automorphism $\sigma$ of order $2$
obtained by recursion)
from $\Aalg^\prime$ to $\Aalg$, i.e. getting a
$\Balg$-automorphism $\sigma^\prime$ of $\Aalg$. The technical arguments fall into two 
cases, depending on whether $\rk_\Aalg\Aalg'=\rk_\Balg\Aalg'/\rk_\Balg\Aalg$ is odd or 
even.

{\bf (1)} If the rank $\rk_\Balg\Aalg$ is even then
$\rk_\Aalg\Aalg'$ is odd. We find an element 
$u\in\Aalg^\prime$ with $u^\tau=-u$. If $u\in\Aalg$
then the restriction of $\tau$ is a $\Balg$-automorphism
of the subalgebra $\Balg[u]$ of $\Aalg$ 
generated by $\Balg$ and $u$. If 
$u\not\in\Aalg$ then either the subalgebra 
$\Aalg[u]$ of $\Aalg'$ is not a free
$\Aalg$-module or $\Aalg'$ is not a free
$\Aalg[u]$-module. Both cases give us a zero divisor 
in $\Aalg^\prime$ to go to a smaller 
ideal $\Iid$ of $\Aalg^\prime$ 
such that we know an automorphism of $\Iid$, 
it contains a ``copy'' of $\Aalg$ and 
$\rk_\Aalg\Iid$ is odd, thus we can continue this 
``descent'' (from $\Aalg'$ to $\Iid$) till we have a $\Balg$-automorphism 
of $\Aalg$ or of a subalgebra of $\Aalg$ 
\iffull (this process appears in Section \ref{sec-bring-down})\fi.
In the former case we are done while in
the latter case we use two recursive calls 
and certain techniques to ``glue'' the 
three available automorphisms.
{\bf (2)}
If the rank $\rk_\Balg\Aalg$ is odd then
$\rk_\Aalg\Aalg'$ is even
and we can use the technique above to 
find an $\Aalg$-automorphism $\sigma$
of $\Aalg'$. It turns out that
$\sigma$ and $\tau$ generate a group
of automorphisms of $\Aalg'$ which is big enough 
to find a proper ideal $\Iid$ of $\Aalg'$ efficiently.
We may further assume that the rank of
$\Iid$ over $\Aalg$ is at most 
$\rk_\Aalg\Aalg'/2=(\rk_\Balg\Aalg-1)/2$. This allows
us a recursive call with $(\Iid,\Aalg)$ in place
of $(\Aalg,\Balg)$ to get an $\Aalg$-automorphism of $\Iid$,
which we eventually show is enough to extract an automorphism of 
$\Aalg$ using tensor properties and a recursive call 
\iffull(this case 2 gets handled in \ref{Evd-sect})\fi. 

This algebraic-extensions jugglery {\em either} goes through and 
yields a nontrivial automorphism $\sigma^\prime$ of $\Aalg$ 
fixing $\Balg$ {\em or} it ``fails'' and yields a zero divisor 
in $\Aalg$ which we use to ``break'' $\Aalg$ into smaller subalgebras 
and continue working there. 
As in each recursive call, in the above two cases, the rank of the bigger algebra 
over the subalgebra is at most half of the original
one, the depth of the recursion is at most $\log\rk_\Balg\Aalg$.
This gives an $n^{\log n}$ term in the time complexity analysis.


Roots of unity play a significant role in gluing
automorphisms (i.e. in extending an automorphism of
a subalgebra, of elements fixed by another
automorphism, to the whole algebra). 
\iffull The gluing process is described in Section \ref{sec-glue}. \fi 
As we do not know roots of unity in $k$ we resort to attaching 
virtual $r$-th roots of unity for a suitable prime $r$, i.e. working in the cyclotomic extension 
$k[\zeta_r]:=k[x]/(\sum_{i=1}^{r-1}x^i)$ and $\Aalg^\prime[\zeta_r]:=$ 
$k[\zeta_r]\otimes_k\Aalg^\prime$. We then need to generalize standard algebraic 
constructions, like {\em Kummer extensions} and {\em Teichm\"uller subgroups} which were 
first used in a context similar to ours
by Lenstra \cite{Len} to find isomorphisms between fields, to our situation of 
commutative semisimple algebras. 

\medskip

The above theorem and its proof techniques have important applications. The first 
one is in finding zero divisors in a noncommutative algebra.

\smallskip
\noindent
{\bf Application 1: }
{\em Let $\Aalg$ be an algebra of dimension $n$ over a finite field $k$ and let 
$\Aalg$ be given in the input in terms of basis elements over $k$. Assume that 
$\Aalg$ is noncommutative. Then there is a deterministic algorithm which finds a 
zero divisor in $\Aalg$ in time $poly(n^{\log n}, \log |k|)$. }

The previous best result was due to R\'onyai \cite{ro3} who gave an 
algorithm invoking polynomial factorization over finite fields and hence taking 
subexponential time assuming GRH. Our result removes the GRH assumption. It is
interesting to note that if we prove such a result for {\em commutative} algebras
as well then we would basically be able to factor polynomials in subexponential 
time without needing GRH. 

If $\Aalg$ is a simple algebra over the finite field $k$ then
it is isomorphic to the algebra $M_m(K)$ of the $m\times m$ matrices with
entries from an extension field $K$ of $k$. By Application~1
we find a proper left ideal of $\Aalg$. A recursive call
to a certain subalgebra of the left ideal will ultimately give
a minimal left ideal of $\Aalg$ and using this minimal one-sided 
ideal an isomorphism with $M_m(K)$ can be efficiently computed.
Thus, for constant $m$, Application~1 extends Lenstra's result (on computing 
isomorphisms between input fields) to
noncommutative simple algebras, i.e, the {\em explicit isomorphism problem}
is solved in this case. We note that, in general, algebra isomorphism problem 
over finite fields is not ``believed'' to be NP-hard but it is at least as hard as 
the graph isomorphism problem \cite{ks06}. We also remark that the analogous
problem of constructing isomorphism with the algebra of matrices
over the rationals has a surprising application to rational
parametrization of certain curves, see \cite{ghps}.

\bigskip

The techniques used to prove Main Theorem can be applied to find a nontrivial 
factor of an $r$-th cyclotomic polynomial over a finite field $k$, for almost all
$r$'s, in deterministic polynomial time. 

\smallskip
\noindent
{\bf Application 2: }
{\em Let $r$ be a positive integer such that the multiplicative group $\Z_r^*$ is noncyclic and let 
$\phi_r(x)$ be the $r$-th 
cyclotomic polynomial. Then we can find a nontrivial factor of $\phi_r(x)$ over a finite field $k$ in 
deterministic $poly(r,\log |k|)$ time.}

Roots of an $r$-th cyclotomic polynomial over $k$ are the $r$-th roots of unity and
thus naturally related to all polynomial factoring algorithms. Assuming GRH several
algorithms are known to factor these important polynomials (see \cite{ev-solvable}).
The above result gives the first deterministic polynomial time algorithm to 
nontrivially factor ``most'' of the cyclotomic polynomials without assuming GRH.

\bigskip

The third application of the techniques used to prove Main Theorem is in the instance
of polynomial factoring over prime fields when we know the Galois group of the input 
polynomial. The following theorem can be seen as the GRH-free version of the main 
theorem of R\'onyai \cite{ro4}.

\smallskip
\noindent
{\bf Application 3: }
{\em Let $F(X)\in\Z[X]$ be a polynomial irreducible over $\Q$ with Galois group of size 
$m$ and let $L$ be the maximum length of the coefficients of $F(X)$. Let $p$ be a 
prime not dividing the discriminant of $F(X)$ and let $f(x)=F(X)\pmod{p}$. Then by a 
deterministic algorithm of running time $poly(m,L,\log p)$ we can find either a 
nontrivial factor of $f(x)$ or a nontrivial automorphism of $\F_p[x]/(f(x))$ of order 
$\deg f$.}

\bigskip

The fourth application of the techniques used to prove Main Theorem is in the instance
of polynomial factoring over $\F_p$ when $p$ is a prime with smooth $(p-1)$. The 
following theorem can be seen as the GRH-free version of the main theorem of R\'onyai 
\cite{ro2}.

\smallskip
\noindent
{\bf Application 4: }
{\em Let $f(x)$ be a polynomial of degree $n$, that splits into linear factors 
over $\F_p$. Let $r_1<\ldots<r_t$ be the prime factors of $(p-1)$. Then by a deterministic 
algorithm of running time $poly(r_t,n,\log p)$, we can find either a nontrivial factor of 
$f(x)$ or a nontrivial automorphism of $\F_p[x]/(f(x))$ of order $n$. In fact, we always find
a nontrivial factor of $f(x)$ in case $n\not|\text{ lcm}\{r_i-1 | 1\le i\le t\}$.}

Thus over ``special'' fields (i.e. when $p-1$ has only small prime factors) the above 
actually gives a deterministic polynomial time algorithm, a significant improvement over 
Main Theorem.  

\iffull
\subsection{Organization}

In Section \ref{sec-prelim} we collect various standard objects and structural
facts associated to algebras. We also discuss the three basic methods that lead
to discovering a zero divisor in an algebra -- finding discrete log for elements 
of prime-power order, finding a free base of a module and refining an ideal by 
a given automorphism.

In this work we use methods for finding zero divisors in algebras in the case 
when certain groups of automorphisms are given. One of such methods is computing 
fixed subalgebras and testing freeness over them. In Section~\ref{sec-semireg} we give 
a characterization of algebras and groups which survive these kinds of attacks. 
These algebras, called {\em semiregular} wrt the group, behave like fields in the
sense that the whole algebra is a free module over the subalgebra of fixed points 
of the group and the rank equals the size of the group.

In Section~\ref{sec-kummer-extn} we build a small theory for the main algebraic 
construction, {\em Kummer-type extensions} over algebras, that we are going to 
use. We investigate there the action of the automorphisms of an algebra $\Aalg$ 
on a certain 
subgroup, {\em Teichm\"uller subgroup}, of the multiplicative group of a Kummer-type 
extension of $\Aalg$. The proofs of Applications 2 and 3 get completed in this section. 

In Section~\ref{sec-findaut} we apply the machinery of Section~\ref{sec-kummer-extn}
to the tensor power algebras and complete the proof of Main Theorem.

In Section \ref{sec-noncomm} we find suitable subalgebras of a given noncommutative 
algebra to invoke Main Theorem and complete the proof of Application 1.

In Section~\ref{sec-p-1} we use the techniques developed for the Main Theorem in the
case of special finite fields and complete the proof of Application 4.

\else
\medskip
In the rest of this extended abstract we give an exposition
on the technical tools by presenting a sketch of the proof
of Application 2. Omitted details of that proof as well as the
proofs of other results can be found in the full version
of the paper which has a theorem numbering compatible
with this abstract and 
downloadable from \\
{\tt http://www.math.uni-bonn.de/people/saxena/research.html}.
\fi

\section{Preliminaries}\label{sec-prelim}

\iffull

In this section we list some algebraic notions that we use in this work and that can be 
found in standard algebra texts, for example \cite{Lang}.



\smallskip
\noindent
{\bf Rings, Units and Zero-divisors:}
A {\em ring with identity} (or ring, for short) $R$ is a set of elements together with two 
operations -- denoted
by {\em addition} $+$ and {\em multiplication} $\cdot$ -- such that $(R,+)$ is an
Abelian group, $\cdot$ is associative, distributes over $+$ and has an {\em identity}
element $1_R$. Note that the set $R^*$, containing all the elements of $R$ that 
have a multiplicative inverse, is a multiplicative group called the group of {\em units}. For a 
prime integer $r$ we call a unit $x$ an {\em $r$-element} if the multiplicative order of $x$
is a power of $r$. An element $x$ is called a {\em zero divisor} if $x\not=0$ and there exist 
nonzero $y,y^\prime\in\Aalg$ such that $yx=xy^\prime=0$. 

\smallskip
\noindent
{\bf Modules:}
Let $(R,+,\cdot)$ be a commutative ring and $(M,+)$ be an Abelian group. We call $M$ 
an {\em $R$-module} wrt an operation $R\times M\rightarrow M$ (called {\em scalar multiplication} 
and denoted as $rx$ for $r\in R$ and $x\in M$) if for all $r,s\in R; x,y\in M$, we have:
$r(x+y) = rx+ry$; $(r+s)x = rx+sx$; $(rs)x = r(sx)$ and $1x = x$.
Note that a vector space $V$ over a field $\F$ is also an $\F$-module.

\smallskip
\noindent
{\bf Free and Cyclic:}
For an $R$-module $M$, a set $E\subset M$ is called a {\em free basis} of $M$ 
if: $E$ is a {\em generating set} for $M$, i.e. every element of $M$ is a finite sum of elements of 
$E$ multiplied by coefficients in $R$, and $E$ is a {\em free set}, i.e. for all $r_1,\ldots,r_n\in R$; 
$e_1,\ldots,e_n\in E$,  $r_1e_1+\cdots+ r_ne_n=0$ implies that $r_1=\cdots= r_n = 0$.
A {\em free module} is a module with a free basis. $|E|$ is called the {\em rank} or {\em 
dimension} of the free module $M$ over $R$. Clearly, a vector space is a free module.
A module is called a {\em cyclic} module if it is generated by one element.

\smallskip
\noindent
{\bf Algebras:}
Let $(R,+,\cdot)$ be a commutative ring and $(\Aalg,+,\cdot)$ be a ring 
which is also an $R$-module, where the additive
operation of $\Aalg$ as a module coincides with $+$. We say that 
$\Aalg$ is an associative $R$-algebra with identity
(or just an {\em $R$-algebra} for short) if multiplication 
by elements of $R$ commutes with multiplication by elements of
$\Aalg$: for every $r\in R$ and for every $a,b\in\Aalg$ we have
$r(ab)=(ra)b=a(rb)$.

\smallskip
\noindent
{\bf Subalgebras:}
A {\em subalgebra} $\Balg$ of an $R$-algebra $(\Aalg,+,\cdot)$ is just a
submodule of $\Aalg$ closed under multiplication. In this paper unless
otherwise stated, by a subalgebra of $\Aalg$ we mean a subalgebra 
containing the identity element $1_\Aalg$. Note that 
if $\Balg$ is a commutative subalgebra of $\Aalg$ then
$\Aalg$ is a $\Balg$-module in a natural way. 
If, furthermore, $\Balg$ is contained in the center of 
$\Aalg$ (that is, $ab=ba$ for every $a\in\Aalg$ and
for every $b\in\Balg$) then $\Aalg$ is a $\Balg$-algebra.


\smallskip
\noindent
{\bf Presentation:}
In this work we will consider only $k$-algebras $\Aalg$ 
that are {\em finite dimensional} over a finite field $k$. 
So we can assume that an algebra $\Aalg$ is always presented 
in the input-output in terms of an additive basis of $(\Aalg,+)$ over $k$, i.e. there are {\em 
basis elements} $b_1,\ldots,b_n\in\Aalg$ such that $\Aalg=kb_1+\cdots+kb_n$ and furthermore 
$a_{i,j,\ell}\in k$ are given such that $b_i\cdot b_j=$ $\sum_{\ell}a_{i,j,\ell}b_\ell$.
Such an $n$ is called the {\em dimension}, $dim_k\Aalg$, of $\Aalg$ over $k$. 

\smallskip
\noindent
{\bf Extension:}
If $\Balg$ is a commutative $k$-algebra and a $\Balg$-algebra 
$\Aalg$ is also a free module over $\Balg$ then we call $\Aalg$ an 
{\em algebra extension} or an {\em extension algebra} over $\Balg$. This
terminology is justified by the fact that $\Balg$ is embedded into
(the center of) $\Aalg$ by the map $b\mapsto b1_\Aalg$. We denote the
rank (``dimension") of $\Aalg$ as a $\Balg$-module by 
$\rk_\Balg\Aalg$ or $[\Aalg:\Balg]$. We sometimes use this 
notation also when there is an implicit embedding of $\Balg$ in $\Aalg$.

\smallskip
\noindent
{\bf Primitive Element:}
We call an algebra extension $\Aalg$ over $\Balg$ {\em simple} if there is an $\alpha\in\Aalg$
such that $\{1,\alpha,\ldots$, $\alpha^{n-1}\}$ forms a free basis of $\Aalg$ over $\Balg$. We 
call $\alpha$ a {\em primitive element} and write $\Aalg=\Balg[\alpha]$.

Following is a version of the standard {\em Primitive Element Theorem}.
\begin{fact}\label{fac-primitive}
If $K\supseteq F$ are fields such that $char\ F$ is $0$ or $>[K:F]^2$, then $K$ has a primitive
element over $F$.
\end{fact}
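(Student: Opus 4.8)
The plan is to reduce to the classical theory, handling the two hypotheses separately. If $\chara F = 0$, the extension $K/F$ is automatically separable, and the classical Primitive Element Theorem for finite separable extensions (e.g.\ \cite{Lang}) gives a primitive element directly. So the substantive case is $\chara F = p > [K:F]^2$, where I must argue that the extension is still separable — since over imperfect fields finite extensions can fail to be simple precisely when inseparability intervenes.

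First I would show $K/F$ is separable under the hypothesis $p > [K:F]^2$. Let $n = [K:F]$. Suppose for contradiction some $\alpha \in K$ is inseparable over $F$; then the minimal polynomial of $\alpha$ over $F$ has the form $g(x^{p^e})$ for some $e \ge 1$ and some separable $g$, so $[F(\alpha):F]$ is a multiple of $p$. But $[F(\alpha):F] \le n < p^2$, and in fact $[F(\alpha):F] \le n$; combined with $p \mid [F(\alpha):F]$ and $n < p^2$ one still has room, so I need the sharper observation: the inseparable degree $[F(\alpha):F]_{\mathrm{insep}}$ is a power of $p$ dividing $[K:F] = n < p^2$, hence it is either $1$ or $p$. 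If it is $p$, then $p \mid n$, but then $n \ge p$ forces $n^2 \ge p^2 > n$, which is fine and gives no contradiction — so the bound $[K:F]^2$ rather than $[K:F]$ must be doing more work. The cleaner route: recall that the number of intermediate fields between $F$ and $K$ is what controls simplicity (a finite extension is simple iff it has finitely many intermediate fields, and this always holds for separable extensions). For the potentially inseparable situation, I instead use a direct counting/pigeonhole argument in the style of the classical proof.

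So the core step I would actually carry out is the classical pigeonhole construction adapted to the bound: pick generators $\alpha_1, \dots, \alpha_m$ of $K$ over $F$; it suffices to treat $m = 2$, say $K = F(\alpha, \beta)$, and induct. Consider the elements $\gamma_c = \alpha + c\beta$ for $c \in F$. If two of them, $\gamma_c$ and $\gamma_{c'}$ with $c \ne c'$, generate the same subfield $L = F(\gamma_c)$, then $\beta = (\gamma_c - \gamma_{c'})/(c - c') \in L$ and hence $\alpha \in L$, so $L = K$ and $\gamma_c$ is primitive. The only way this can fail for every pair is if the map $c \mapsto F(\alpha + c\beta)$ is injective on $F$ — but there are at most (number of subfields of $K$ containing $F$) $\le 2^{n}$... no: I need a bound on subfields that is smaller than $|F|$, which fails for small $F$. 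The honest resolution is that $F$ here is \emph{infinite} whenever $\chara F = 0$, and whenever $\chara F = p > n^2 \ge 4$ the field $F$ has at least $p$ elements, while the distinct subfields of $K/F$ — once we know $K/F$ is separable — number at most $2^{?}$; more precisely, since a separable extension of degree $n$ sits inside a Galois extension of degree dividing $n!$, the intermediate fields correspond to subgroups, of which there are fewer than $2^{n^2}$... this still can exceed $p$.

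Given these tensions, the main obstacle — and the step I expect to require genuine care — is getting a correct, self-contained bound linking $[K:F]^2$ to simplicity. I believe the intended argument is: (i) show $p > [K:F]^2$ forces $K/F$ separable, because an inseparable subextension would have inseparability degree a power of $p$ that is at most $[K:F] < p$, hence trivial; wait — that argument needs $p > [K:F]$, which is weaker than the hypothesis, so separability is the \emph{easy} consequence. (ii) Then, for a separable extension $K/F$ of degree $n$, the number of intermediate fields is at most the number of subgroups of a group of order at most $n!$; but more usefully, each intermediate field $L$ is determined by a subset of a fixed set of $n$ embeddings $K \hookrightarrow \bar F$, so there are at most $2^n$ of them. (iii) Finally run the pigeonhole: among the $|F| \ge p > n^2 \ge 2^n$ (for the relevant range of $n$, checking small $n$ by hand) values $\alpha + c\beta$, two share an intermediate field, done. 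I would verify the inequality $p > n^2 \ge 2^n$ breaks only for small $n$, and dispatch $n \le$ (small constant) directly — e.g.\ $n=2,3$ — where the cyclic or small cases are immediate. The write-up would therefore be: (1) $\chara F = 0$ case by classical PET; (2) reduce to two generators by induction; (3) show separability from $p > n$; (4) bound intermediate fields by $2^n$; (5) pigeonhole over $F$ using $|F| \ge p > n^2$; (6) finite check of the exceptional small $n$.
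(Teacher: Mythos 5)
Your proposal has a genuine gap at the crucial final step. The pigeonhole at stage (5) requires the chain $|F|\ge p > n^2 \ge 2^n$, but $n^2 \ge 2^n$ is \emph{false} for every $n\ge 5$ (already $25 < 32$, and $2^n$ outgrows $n^2$ permanently). You remark that the inequality ``breaks only for small $n$,'' but the direction is reversed: it holds only for $n\le 4$, so the cases you propose to dispatch by hand are in fact all of the interesting ones. Separately, the asserted bound of $2^n$ on the number of intermediate fields is not actually established by your sketch (the ``subset of embeddings'' heuristic does not yield it without more work), and the clean way to get such a bound — counting monic divisors of the minimal polynomial of a primitive element — is circular here.

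The hypothesis $p>[K:F]^2$ is calibrated for a different, more elementary count, and you should use that instead of enumerating intermediate fields. Having reduced (by induction) to $K=F(\alpha,\beta)$ and established separability from $p>n$ — a point you do make correctly — pass to a splitting field and let $\alpha=\alpha_1,\dots,\alpha_r$ and $\beta=\beta_1,\dots,\beta_s$ be the conjugates of $\alpha$ and $\beta$ over $F$. The element $\gamma_c=\alpha+c\beta$ fails to be primitive only if $\alpha_1+c\beta_1=\alpha_i+c\beta_j$ for some $(i,j)$ with $\beta_j\ne\beta_1$, which forces $c=(\alpha_1-\alpha_i)/(\beta_j-\beta_1)$. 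There are at most $r(s-1)\le n(n-1)<n^2$ such bad values of $c$, and since $|F|\ge p>n^2$ one can choose a $c\in F$ avoiding all of them. This is precisely the count the bound $[K:F]^2$ is designed for; no intermediate-field enumeration is needed. (For what it is worth, the paper does not prove this Fact — it cites it as standard, and points to \cite{GI} for the algorithmic version used in Proposition~\ref{pro-galclosed}.)
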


There are two natural operations defined on algebras -- the {\em direct sum} and the
{\em tensor product} -- each constructs a bigger algebra.

\smallskip
\noindent
{\bf Direct Sum:}
Let $(\Aalg_1,+,\cdot)$ and $(\Aalg_2,+,\cdot) $ be two algebras. Then the {\em direct sum 
algebra}, $\Aalg_1\oplus\Aalg_2$, is the set $\{(a_1,a_2)\mid a_1\in\Aalg_1, a_2\in\Aalg_2\}$ 
together with component-wise addition and multiplication operations. In a similar vein, for 
subalgebras $\Aalg_1, \Aalg_2$ of an algebra $\Aalg$ we write $\Aalg=\Aalg_1\oplus\Aalg_2$, if
$\Aalg=\Aalg_1+\Aalg_2$ and $\Aalg_1, \Aalg_2$ are {\em orthogonal} i.e. $\forall$ 
$a_1\in\Aalg_1$, $a_2\in\Aalg_2$, $a_1a_2=a_2a_1=0$. 

\smallskip
\noindent
{\bf Tensor Product:}
Furthermore, if $\Balg$ is a commutative algebra such that $\Aalg_1, \Aalg_2$ are 
$\Balg$-algebras of dimensions $n_1, n_2$ respectively over $\Balg$ then their {\em tensor 
product algebra wrt } $\Balg$, $\Aalg_1\otimes_\Balg\Aalg_2$, is the set 
$\{a_1\otimes a_2\mid a_1\in\Aalg_1, a_2\in\Aalg_2\}$ naturally viewed as a $\Balg$-module 
having the multiplication operation: $(a_1\otimes a_2)\cdot$ 
$(a_1^\prime\otimes a_2^\prime)=$ $(a_1a_1^\prime \otimes a_2a_2^\prime)$ for all 
$a_1, a_1^\prime\in\Aalg_1$ and $a_2, a_2^\prime\in\Aalg_2$. Note that the tensor product 
algebra has dimension $n_1n_2$ over $\Balg$. Thus, if $\Balg$ is finite then 
$|\Aalg_1\oplus\Aalg_2|=|\Balg|^{n_1+n_2}$, while 
$|\Aalg_1\otimes_\Balg\Aalg_2|=|\Balg|^{n_1n_2}$.

\smallskip
\noindent
{\bf Nilpotent and Idempotent:}
In an algebra $\Aalg$ we call an element $x\in\Aalg$ {\em nilpotent} if $x^m=0$ for some
$m\in\Z$, while we call $x$ {\em idempotent} if $x^2=x$. It is called a {\em primitive} 
idempotent if it cannot be expressed as the sum of two idempotents whose product is zero.
It is called {\em nontrivial} if it is not $0$ or $1$.

\smallskip
\noindent
{\bf Decomposability:}
An algebra $\Aalg$ is called {\em indecomposable} if there are no nonzero algebras $R, S$
such that $\Aalg\cong R\oplus S$.

Following are some standard facts relating decomposability to idempotents in commutative 
algebras.

\begin{fact}\label{fac-idempotent}
Let $\Aalg$ be a commutative algebra then:
\\
(1) $\Aalg$ decomposes iff $\Aalg$ has a nontrivial idempotent.  
\\
(2) If $e$ is an idempotent in $\Aalg$ then $\Aalg\cong$ 
$e\Aalg\oplus (1-e)\Aalg$.
\\
(3) If $e$ is a primitive idempotent in $\Aalg$ then $e\Aalg$ is indecomposable. 
\end{fact}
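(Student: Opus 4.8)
The plan is to establish (2) by a direct computation with the relations $e^2=e$ and $(1-e)^2=1-e$, and then to read off (1) and (3) as formal consequences. For (2), I would first note that $1-e$ is idempotent and $e(1-e)=e-e^2=0$, so $e\Aalg$ and $(1-e)\Aalg$ are orthogonal ideals; the decomposition $a=ea+(1-e)a$ shows $\Aalg=e\Aalg+(1-e)\Aalg$, and the sum is direct because any $x$ in the intersection satisfies $x=ex$ (as $x\in e\Aalg$) and $ex=0$ (as $x\in(1-e)\Aalg$). The one point worth stating carefully is that $e\Aalg$ is itself a commutative ring whose identity is $e$ (not $1_\Aalg$), which follows at once from $e\cdot ea=ea$; this is needed so that $e\Aalg$ is an algebra in its own right. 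Then $a\mapsto(ea,(1-e)a)$ is checked to be an algebra isomorphism $\Aalg\to e\Aalg\oplus(1-e)\Aalg$.

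For (1): if $\Aalg\cong R\oplus S$ with $R,S$ nonzero, then $(1_R,0)$ is an idempotent which is neither $0$ (since $R\neq0$) nor $1_\Aalg=(1_R,1_S)$ (since $S\neq0$), hence nontrivial. Conversely, given a nontrivial idempotent $e$, apply (2): the summand $e\Aalg$ contains $e\neq0$ and $(1-e)\Aalg$ contains $1-e\neq0$, so neither is the zero algebra and $\Aalg$ decomposes. For (3): suppose $e\Aalg$ decomposed; since it is a commutative algebra with identity $e$, part (1) supplies a nontrivial idempotent $f\in e\Aalg$, so $f\neq0$ and $f\neq e$. From $f=ef$ one gets $(e-f)^2=e-2ef+f^2=e-f$ and $f(e-f)=f-f^2=0$, so $e=f+(e-f)$ writes $e$ as a sum of two nonzero orthogonal idempotents ($e-f\neq0$ because $f\neq e$), contradicting the primitivity of $e$; hence $e\Aalg$ is indecomposable.

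There is no deep obstacle here — everything reduces to manipulating $e^2=e$. The only delicate spot is bookkeeping about identities: one must remember that $e\Aalg$ has identity $e$ rather than $1_\Aalg$ when invoking (1) inside the proof of (3), and one must read the stated definition of a primitive idempotent (``cannot be written as a sum of two idempotents whose product is zero'') in the intended sense, namely as a sum of two \emph{nonzero} orthogonal idempotents, so that the trivial splitting $e=e+0$ is excluded and the argument for (3) actually yields a contradiction.
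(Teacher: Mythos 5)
Your proof is correct. Note, however, that the paper itself offers no proof of this statement: it is introduced with the phrase ``Following are some standard facts relating decomposability to idempotents in commutative algebras'' and left as a citation to standard algebra, so there is no argument of the paper's to compare against. Your argument fills that gap in an elementary and self-contained way, and all the computations check out: the orthogonality $e(1-e)=0$, the identification of $e$ as the identity of $e\Aalg$, the isomorphism $a\mapsto(ea,(1-e)a)$, and in (3) the verifications $ef=f$, $(e-f)^2=e-f$, and $f(e-f)=0$. Your parenthetical remark about the definition of a primitive idempotent is also a genuine observation, not mere pedantry: the paper's phrasing ``cannot be expressed as the sum of two idempotents whose product is zero'' would, read literally, be violated by $e=e+0$ and hence make no idempotent primitive; the standard definition (and the one you must and do use to get the contradiction in (3)) requires the two summands to be \emph{nonzero}.
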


\smallskip
\noindent
{\bf Ideal:}
An {\em ideal} $I$ of an algebra $\Aalg$ is a subset that is an additive subgroup of 
$\Aalg$, is closed under multiplication and it contains both $aI:=\{a\cdot i \mid i\in I\}$;
$Ia:=\{i\cdot a \mid i\in I\}$ for all $a\in\Aalg$. Note that $\{0\}$ and $\Aalg$ are 
ideals of $\Aalg$, we call them {\em trivial} ideals. Also note that
proper ideals are not subalgebras in the strict sense used in this paper.

\smallskip
\noindent
{\bf Semisimplicity:}
An algebra $\Aalg$ is called {\em simple} if it has no nontrivial ideal. An algebra
is called {\em semisimple} if it is a direct sum of simple algebras.

Following are some standard facts about commutative semisimple algebras.

\begin{fact}\label{fac-comm-semisimple}
Let $\Aalg$ be a commutative semisimple algebra then: 
\\
(1) $\Aalg$ is a direct sum of fields.
\\
(2)  If $I$ is an ideal of $\Aalg$ and $I^\perp:=\{a\in\Aalg\mid aI=0\}$ (called the {\em 
complement} of $I$) then $\Aalg=I\oplus I^\perp$. 
Furthermore, there exists an idempotent $e$ of $\Aalg$ such that $I=e\Aalg$ thus giving an
explicit projection from $\Aalg$ to $I$. 
\end{fact}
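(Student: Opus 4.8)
The plan is to deduce both parts straight from the definitions of \emph{simple} and \emph{semisimple} together with Fact~\ref{fac-idempotent}. For part (1): by definition $\Aalg = S_1 \oplus \cdots \oplus S_t$ with each $S_i$ simple, and each $S_i$ is a commutative algebra carrying its own identity (the corresponding component of $1_\Aalg$). I would then show that any commutative simple algebra $S$ with identity is a field: for $0 \neq a \in S$ the set $aS$ is an ideal containing $a = a\cdot 1_S \neq 0$, hence is a nonzero ideal and so $aS = S$ by simplicity; in particular $1_S \in aS$, so $a$ has a multiplicative inverse. Thus every nonzero element of $S$ is a unit, i.e. $S$ is a field, and (1) follows.

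For part (2), using (1) I would write $\Aalg = F_1 \oplus \cdots \oplus F_t$ with the $F_i$ fields, and let $e_i$ be the identity of $F_i$ inside $\Aalg$, so that the $e_i$ are pairwise orthogonal idempotents with $e_1 + \cdots + e_t = 1_\Aalg$ and $F_i = e_i\Aalg$. Given an ideal $I$, the crucial observation is that $e_iI$ is an ideal of the field $F_i = e_i\Aalg$, hence equals $0$ or $F_i$. Putting $T := \{\, i : e_iI = F_i \,\}$ and $e := \sum_{i\in T} e_i$, the element $e$ is an idempotent (by orthogonality), and $I = 1_\Aalg\, I = \sum_i e_iI = \sum_{i \in T} F_i = e\Aalg$, which supplies the idempotent and the explicit projection $a \mapsto ea$ onto $I$.

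To finish, I would identify $I^\perp$ with $(1-e)\Aalg$: the inclusion $(1-e)\Aalg \subseteq I^\perp$ is immediate from $(1-e)\Aalg\cdot e\Aalg = 0$, and conversely $aI = 0$ forces $ae = 0$ (since $e \in e\Aalg = I$), whence $a = a(1-e) \in (1-e)\Aalg$. Then Fact~\ref{fac-idempotent}(2) applied to $e$ gives $\Aalg = e\Aalg \oplus (1-e)\Aalg = I \oplus I^\perp$. The argument is entirely elementary; the only point needing slight care is the reduction of an arbitrary ideal $I$ to a sum of some of the $F_i$, but this is forced by $\sum_i e_i = 1_\Aalg$ together with the triviality of ideals in a field.
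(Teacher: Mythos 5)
The paper states Fact~\ref{fac-comm-semisimple} as a standard structural fact and gives no proof of its own, so there is no ``paper's approach'' to compare against; the relevant question is simply whether your argument is sound, and it is. Your proof is the canonical one and is complete: part~(1) correctly reduces to showing that a commutative simple ring with identity is a field, via the observation that $aS$ is a nonzero ideal for $a\neq 0$; part~(2) correctly decomposes $1_\Aalg$ as a sum of pairwise orthogonal idempotents $e_i$ with $F_i = e_i\Aalg$, observes that $e_i I$ is an ideal of the field $F_i$ and hence is $0$ or $F_i$, writes $I = e\Aalg$ with $e = \sum_{i\in T}e_i$, and then identifies $I^\perp = (1-e)\Aalg$, after which Fact~\ref{fac-idempotent}(2) gives the direct sum. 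Two points worth stating explicitly in a cleaned-up write-up: first, the reason $I = \sum_i e_iI$ in your chain of equalities is that $x = \sum_i e_i x$ for $x\in I$ and each $e_iI\subseteq I$; second, you are implicitly using the paper's standing convention that algebras are finite-dimensional, so the direct sum into simple components is finite and the sum $\sum_{i\in T}e_i$ is well-defined. Neither is a gap, just something to make visible.
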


Following is the celebrated {\em Artin-Wedderburn Theorem} that classifies semisimple 
algebras.

\begin{fact}\label{fac-wedderburn}
Any semisimple algebra $\Aalg$ is isomorphic to a direct sum of $n_i\times n_i$ matrix 
algebras over division rings $D_i$ (i.e. $D_i$ satisfies all field axioms except 
commutative multiplication). Both the $n_i$'s and $D_i$'s are uniquely determined up to 
permutation of the indices $i$.
\end{fact}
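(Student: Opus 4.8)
The plan is to prove the structural part first and the uniqueness afterwards, reducing at once to the case of a single simple algebra. Because $\Aalg$ is, by the notion of semisimplicity adopted here, a direct sum $S_1\oplus\cdots\oplus S_t$ of simple algebras, and a direct sum of matrix rings over division rings is already in the asserted form, it is enough to show that a finite-dimensional simple $k$-algebra $S$ is isomorphic to $M_n(D)$ for some division ring $D$. First I would pick a left ideal $L$ of $S$ of least positive $k$-dimension; it exists since $\dim_k S<\infty$, and minimality makes $L$ a simple left $S$-module. Scalar multiplication by $k$ commutes with the $S$-action, so $k$ sits in the center of $\End_S(L)$, and by Schur's Lemma $D:=\End_S(L)$ is a division ring; thus $L$ is a (right) $D$-vector space with $n:=\dim_D L\le\dim_k L<\infty$. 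Left multiplication of $S$ on $L$ then furnishes a $k$-algebra homomorphism $\rho\colon S\to\End_D(L)$.

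Injectivity and surjectivity of $\rho$ are the two substantive points, the second being the one I expect to be the main obstacle. Injectivity is immediate: $\ker\rho$ is a two-sided ideal of $S$ not containing $1_S$ (since $\rho(1_S)=\mathrm{id}_L\ne0$ because $L\ne0$), so $\ker\rho=0$ by simplicity of $S$. For surjectivity I would invoke the Jacobson Density Theorem: $L$ is now a faithful simple $S$-module, so $\rho(S)$ is a dense subring of $\End_D(L)$; since $L$ is finite-dimensional over $D$, density collapses to the equality $\rho(S)=\End_D(L)$. Choosing a $D$-basis of $L$ identifies $\End_D(L)$ with $M_n(D^{\mathrm{op}})$, and as $D^{\mathrm{op}}$ is again a division ring we conclude $S\cong M_n(D')$ with $D'$ a division ring, as required. (An alternative to quoting density is to establish directly, using finite-dimensionality and simplicity, that ${}_SS$ is a finite direct sum of copies of $L$ and then read off $S\cong\End_S({}_SS)^{\mathrm{op}}\cong M_m(D')$; this is essentially the same content.)

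For uniqueness I would show that the pair $(n,D)$ is intrinsic to the ring $M_n(D)$: such a ring has, up to isomorphism, exactly one simple left module $V\cong D^{\oplus n}$, its endomorphism ring $\End_{M_n(D)}(V)$ recovers $D$ (up to passage to the opposite ring, which does not affect the statement), and the composition length of $M_n(D)$ as a module over itself is $n$. For the full decomposition $\bigoplus_i M_{n_i}(D_i)$, the summands are exactly the minimal nonzero two-sided ideals of $\Aalg$ — equivalently, the components correspond bijectively to the isomorphism classes of simple $\Aalg$-modules — so the multiset $\{(n_i,D_i)\}$ is an isomorphism invariant of $\Aalg$, determined up to permutation of the indices.

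Two small points close the argument. First, the paper's standing assumption that all algebras are finite-dimensional over a finite field makes the chain condition behind Schur's Lemma and the choice of $L$ automatic, so no separate Artinian hypothesis need be verified. Second, in the present setting each $D_i$ is a \emph{finite} division ring, hence a field by Wedderburn's little theorem; the statement is phrased with division rings only to match its classical form (which is compatible with Fact~\ref{fac-comm-semisimple}(1) in the commutative case, where additionally all $n_i=1$).
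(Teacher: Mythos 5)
The paper does not prove this statement: it is labeled a ``Fact'' (Fact~\ref{fac-wedderburn}, the Artin--Wedderburn theorem) and, like Facts~\ref{fac-primitive}--\ref{fac-comm-semisimple}, is simply cited as a standard result from algebra texts such as \cite{Lang}. Your proof supplies what the paper omits, and it is correct and complete: the reduction to a single simple summand, the choice of a minimal left ideal $L$, Schur's Lemma giving the division ring $D=\End_S(L)$, injectivity of $\rho\colon S\to\End_D(L)$ from simplicity, surjectivity from Jacobson density plus finite dimensionality, and the identification $\End_D(L)\cong M_n(D^{\mathrm{op}})$ is the standard route, and you correctly handle the left/right--opposite-ring bookkeeping by observing that $D^{\mathrm{op}}$ is still a division ring. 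The uniqueness argument via simple modules and their endomorphism rings, and the closing remarks about the Artinian hypothesis being automatic in the paper's finite-dimensional setting and about Wedderburn's little theorem forcing the $D_i$ to be fields here, are also correct and apt. In short, where the paper asserts, you prove, and the proof is the canonical one.
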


\smallskip
\noindent
{\bf Morphisms:}
Let $\phi$ be a map between two algebras $\Aalg$, $\Balg$. If $\phi$ preserves the addition 
and multiplication operations of the algebras then we call it a {\em homomorphism}. If the 
homomorphism $\phi$ is injective then we call it an {\em embedding}. If the homomorphism 
$\phi$ is both injective and surjective then we call it an {\em isomorphism}. A 
homomorphism from an algebra to itself is called an {\em endomorphism}. An isomorphism
from an algebra to itself is called an {\em automorphism}. A set $S$ is said to be 
{\em invariant} under the automorphism $\phi$ of $\Aalg$ if for all $s\in S$, $\phi(s)\in S$.
$\phi$ is said to {\em fix} $S$ if $\phi$ fixes each element of $S$, i.e. for all $s\in S$, 
$\phi(s)=s$. The group of $S$-automorphisms of $\Aalg$, $Aut_S(\Aalg)$, is the set of all 
automorphisms of $\Aalg$ that fix $S$.

\else

For basic algebraic notions and facts such as groups, Sylow subgroups, algebras, characteristic,
subalgebras, modules, morphisms, ideals, units, zero divisors, 
idempotents, direct sum, tensor powers, semisimplicity, Wedderburn's structure theorems;
we refer the reader to standard algebra texts, for example \cite{Lang}. 

\fi

Throughout this paper all algebras are algebras with identity elements. 
Unless otherwise stated explicitly, by a subalgebra we mean a subalgebra 
{\it containing the identity element}. Thus, in this strict sense 
a proper ideal is {\it not} considered as a subalgebra. 
In the rest of this section $\Aalg$ stands for a commutative semisimple
algebra over the finite field $k$.

\iffull
\subsection{Discrete Log for $r$-elements}
\fi

Given two $r$-elements (i.e. having order a power of the prime $r$) in a commutative semisimple 
algebra there is an algorithm that computes the discrete logarithm or finds a zero divisor (of a 
special form) in $\Aalg$. 
\iffull
We describe this algorithm below, it 
\else
It
\fi
is a variant of the Pohlig-Hellman 
\cite{pohlig} algorithm with the equality testing of elements replaced by testing whether their 
difference is a zero divisor.

\begin{lemma}\label{lem-discretelog}
\iffull
Given a prime $r$ distinct from the characteristic of a finite field $k$, a finite dimensional 
commutative semisimple algebra $\Aalg$ over $k$ and two 
$r$-elements $a,b\in\Aalg^*$, 
\else
Let $r$ be a prime distinct from $\chara k$ and
let $a,b\in\Aalg^*$ be two $r$-elements
\fi
such 
that the order of $a$ is greater than or equal to the order of $b$. There is a deterministic 
algorithm which computes in time $poly(r, \log |\Aalg|)$:
\iffull
\\(1) 
\fi
either two non-negative integers $s,s^\prime$ such that $a^s-b^{s^\prime}$ is a  
zero divisor in $\Aalg$,
\iffull
\\(2) 
\fi
or an integer $s\geq 0$ with $a^s=b$.
\end{lemma}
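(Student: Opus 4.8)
The plan is to follow the Pohlig--Hellman paradigm, working one ``digit'' of the discrete logarithm at a time in the base-$r$ expansion, but replacing every equality test between algebra elements by a zero-divisor test. Write $r^\alpha$ for the order of $a$ and $r^\beta$ for the order of $b$, so $\beta\le\alpha$ by hypothesis. The element $g:=a^{r^{\alpha-1}}$ has order exactly $r$, and the cyclic subgroup $\langle g\rangle$ consists of $r$ elements $1,g,\dots,g^{r-1}$. For any pair of exponents $0\le i<j<r$ the difference $g^i-g^j$ is either $0$ (if the two powers coincide as algebra elements) or a zero divisor: indeed $g^i-g^j = g^i(1-g^{j-i})$, and $1-g^{j-i}$ annihilates the nonzero element $1+g^{j-i}+\cdots+g^{(j-i)(r-1)}$ (which is nonzero because multiplying it by the unit $g$ fixes it but it is not the all-ones-times-a-unit obstruction — more carefully, $\sum_{\ell=0}^{r-1} g^{(j-i)\ell}=0$ would force, after projecting to each field component of the semisimple $\Aalg$, that a nontrivial $r$-th root of unity sums to zero, impossible in characteristic $\ne r$ since $r\ne \chara k$). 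So as a subroutine I will test, for each $i$, whether a given $r$-element equals $g^i$ by checking whether their difference is a zero divisor; if for some $i$ it is a nonzero zero divisor we are in case (1) and we stop, otherwise exactly one $i$ gives difference $0$.

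Next I would strip off the $r$-elements digit by digit. Set $c_0:=b$. Having defined $c_m$, an $r$-element, raise it to the power $r^{\alpha-1-m}$ (legitimate as long as $m\le \alpha-1$, which will hold since the order of $c_m$ divides $r^{\beta-m}\le r^{\alpha-m}$ once $m<\beta$) to land in $\langle g\rangle$; using the subroutine, either output a zero divisor, or find the unique digit $d_m\in\{0,\dots,r-1\}$ with $c_m^{\,r^{\alpha-1-m}}=g^{d_m}$, equivalently $c_m^{\,r^{\alpha-1-m}}=a^{d_m r^{\alpha-1}}$. Then replace $c_{m+1}:=c_m\cdot a^{-d_m r^{m}}$, which kills the lowest surviving digit and divides the order of $c_m$ by $r$. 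After at most $\beta\le\alpha$ steps we reach $c_\beta$ with $c_\beta^{\,1}=1$, i.e.\ $c_\beta=1$, having assembled $b = a^{s}$ with $s:=\sum_{m} d_m r^m$, which is case (2). (If at some stage $c_m=1$ early we are already done.) Every step is a constant number of exponentiations and at most $r$ zero-divisor tests in $\Aalg$, and there are at most $\log_r(\text{order of }b)=O(\log|\Aalg|)$ steps, so the total running time is $\mathrm{poly}(r,\log|\Aalg|)$ as claimed; exponentiation uses repeated squaring, and a zero-divisor test for an element $z$ amounts to checking whether the $k$-linear map ``multiply by $z$'' on $\Aalg$ is invertible, which is a determinant/rank computation over $k$.

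The one genuinely delicate point — and the place I expect to have to argue carefully rather than wave hands — is the claim that a nonzero difference $g^i-g^j$ in $\langle g\rangle$ really is a \emph{zero divisor} in $\Aalg$ and not merely a non-unit, together with the fact that powers of the order-$r$ element $g$ are pairwise distinct as soon as they are not forced equal. Both rest on the structure of $\Aalg$ as a direct sum of fields (Fact~\ref{fac-comm-semisimple}(1)): in a field a nonzero element is a unit, so a nonzero non-unit of $\Aalg$ must vanish on at least one component and be nonzero on another, hence is a zero divisor; and the sum $1+h+\cdots+h^{r-1}=0$ for $h$ of order dividing $r$ in a field of characteristic $\ne r$ forces $h=1$ by the usual $(h-1)(1+h+\cdots+h^{r-1})=h^r-1=0$ argument (if $h\ne1$ then $h-1$ is invertible). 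I would also need to note that the order of an $r$-element $c$ in $\Aalg$ is easy to obtain (repeatedly $r$-th power until reaching $1$, or reaching a zero divisor if $c^{r^t}\ne 1$ and $c^{r^{t+1}}=1$ cannot both be detected — actually one simply computes $c, c^r, c^{r^2},\dots$ and watches for $1$), and handle the bookkeeping ensuring $\alpha-1-m\ge 0$ throughout; these are routine. Thus the only real subtlety is converting the Pohlig--Hellman equality tests into zero-divisor tests, and I have sketched above why that conversion is sound.
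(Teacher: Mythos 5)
Your overall plan — Pohlig--Hellman with zero-divisor tests replacing equality tests — is the same as the paper's, and the subroutine idea is sound, but two of your supporting claims are wrong as stated, and one of them causes the algorithm, read literally, to return an incorrect answer.

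The claim that ``$g^i - g^j$ is either $0$ or a zero divisor'' is false, and the argument you give for it has the roles of trivial and nontrivial roots of unity reversed: in a field of characteristic $\ne r$, a \emph{nontrivial} $r$-th root of unity $\omega$ satisfies $1+\omega+\cdots+\omega^{r-1} = (\omega^r-1)/(\omega-1) = 0$, whereas the \emph{trivial} root $\omega=1$ gives $r\ne 0$. Consequently $\sum_\ell g^{(j-i)\ell}$ is nonzero only when $g^{j-i}$ restricts to $1$ on some simple component, which is not guaranteed; if $a$ has exact order $r^\alpha$ on every component (e.g.\ when $\Aalg$ is a field), then $g^i-g^j$ is a \emph{unit} for $i\ne j$. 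This particular error is ultimately harmless, because what the subroutine really tests is $c_m^{r^{\alpha-1-m}} - g^i$, and the correct justification — that $c_m^{r^{\alpha-1-m}}$ has order dividing $r$, and on some component $g$ has order exactly $r$ so the order-$\le r$ elements there form $\langle g\rangle$, forcing some $i$ to make the difference non-invertible — is the argument your last paragraph gestures toward; but that is the argument you should state and use, not the $g^i-g^j$ one.

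The more serious error is the termination count. The order of $c_m$ divides $r^{\alpha-m}$, not $r^{\beta-m}$, and the loop must run $\alpha$ iterations, not $\beta$. For a concrete failure take $\Aalg$ a field, $a$ of order $r^3$, and $b := a^{r^2}$ of order $r$, so $\alpha=3$, $\beta=1$. At $m=0$ you compute $c_0^{r^2}=b^{r^2}=1=g^0$, hence $d_0=0$ and $c_1 = b \ne 1$; your assertion $c_\beta = c_1 = 1$ is false, and you would wrongly output $s=0$. The fix is either to run the loop for $\alpha$ iterations (after which $c_\alpha$ genuinely has order dividing $r^0=1$), or, as the paper does, to first replace $a$ by $a^{r^{\alpha-\beta}}$ so that both elements have the same order $r^\beta$ before iterating; either way the $\mathrm{poly}(r,\log|\Aalg|)$ bound is unaffected. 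A final cosmetic point: the zero divisor you actually produce, $c_m^{r^{\alpha-1-m}} - g^i$, has the form $b^{s'}a^{-u}-a^v$; to land in the promised form $a^s-b^{s'}$ you should multiply by the unit $a^u$ and note that multiplying a zero divisor by a unit again gives a zero divisor.
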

\iffull
\begin{proof}
Let $t_a$ be the smallest non negative integer such that $a^{r^{t_a}}-1$ is zero or a  
zero divisor in $\Aalg$. Since $t_a\leq \log_r|\Aalg|$ we can compute 
$a^{r^0}-1, a^{r^1}-1, \ldots, a^{r^{t_a}}-1$ in
$poly(\log|\Aalg|)$ time via fast exponentiation.
We are done if $0\not=a^{r^{t_a}}-1=a^{r^{t_a}}-b^0$ is a
zero divisor. Therefore we may assume that $a^{r^{t_a}}=1$, i.e. 
the order of $a$ is $r^{t_a}$. Let $t_b$ be the smallest non-negative integer 
such that $b^{r^{t_b}}-1$ is a zero divisor. Like $t_a$,
$t_b$ can be computed in polynomial time and we may again assume that
$r^{t_b}$ is the order of $b$. Replacing $a$ with $a^{r^{t_a-t_b}}$ we may 
assure that $t_a=t_b=t$. In this case for every primitive idempotent $e$ of $\Aalg$: $ea, eb$ 
have order $r^t$ in the finite field $e\Aalg$. As the multiplicative group of a finite field is cyclic, 
this means that there exists a nonnegative integer $s<r^t$ such that $(ea)^s=eb$. So
we now attempt to find this discrete log, $s$, and the corresponding idempotent $e$ as well.

We iteratively compute the consecutive sections of the base $r$ expansion of $s$. 
To be more specific, we compute integers $s_0=0,s_1,s_2,\ldots,s_t$ together with 
idempotents $e_1,\ldots,e_t$ of $\Aalg$ such that, for all $1\le j\le t$: $0\leq s_j<r^j$, 
$s_j\equiv s_{j-1}\pmod{r^{j-1}}$ and $a^{{s_j}r^{t-j}}e_j=b^{r^{t-j}}e_j$.

In the initial case $j=1$ we find by exhaustive search, in at most $r$ rounds, an
$s_1\in\{1,\ldots,r-1\}$ such that $z_1=$ $(a^{r^{t-1}{s_1}}-b^{r^{t-1}})$
is zero or a zero divisor. If it is zero then we set $e_1=1$ otherwise we compute and 
set $e_1$ equal to the identity element of the annihilator ideal $\{x\in\Aalg|z_1x=0\}$.

Assume that for some $j<t$ we have found already $s_j$ and $e_j$ with the desired property.
Then we find by exhaustive search, in at most $r$ rounds, an integer 
$d_{j+1}\in \{0,\ldots,r-1\}$ such that $z_{j+1}=$ 
$(a^{(s_j+r^jd_{j+1})r^{t-j-1}}-b^{r^{t-j-1}})$ is zero or a zero divisor. We set 
$s_{j+1}=(s_j+d_{j+1}r^j)$ and take as $e_{j+1}$ the identity element of the annihilator 
ideal $\{x\in e_j\Aalg|xz_{j+1}=0\}$. 

The above procedure clearly terminates in $t$ rounds and using fast exponentiation
can be implemented in $poly(r, \log |\Aalg|)$ time.
\end{proof}
\fi

\iffull
\subsection{Free Bases of Modules}
\fi

One of the possible methods for finding zero divisors in algebras is attempting to compute 
a free basis of a module over it. 
\iffull
Following Lemma states the basic tool to do that.
\else
In this direction we have the following.
\fi

\begin{lemma}\label{lem-nonfree}
Let $V$ be a finitely generated module 
\iffull
over a finite dimensional algebra $\Aalg$ over a finite field $k$. 
\else
over $\Aalg$.
\fi
If $V$ is not a free $\Aalg$-module
then one can find a zero divisor in $\Aalg$ deterministically in time 
$poly(\dim_\Aalg V, \log |\Aalg|)$.
\end{lemma}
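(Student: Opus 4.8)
The plan is to exploit the following mechanism: if $V$ is not free, then it cannot have a free basis, so \emph{any} irredundant generating set of $V$ must admit a nontrivial $\Aalg$-linear relation; and in such a relation over an irredundant set at least one coefficient has to be a non-unit of $\Aalg$, which — since $\Aalg$ is finite — is automatically a zero divisor. I will assume $V$ is presented in the usual way, by a finite generating set together with the action of $\Aalg$ on it (equivalently, a $k$-basis of $V$ with the matrices of multiplication by the structure basis of $\Aalg$); note $V\neq 0$, since $0$ is free.

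First I would compute an \emph{irredundant} generating set $w_1,\dots,w_m$ of $V$ over $\Aalg$: starting from the given generators, repeatedly test for each $w_l$ whether $\sum_{j\neq l}\Aalg w_j$ — obtained as the $k$-span of $\{b\,w_j : b\in E,\ j\neq l\}$ for a fixed $k$-basis $E$ of $\Aalg$ — still equals $V$, discarding $w_l$ when it does; each test is Gaussian elimination over $k$, and the loop halts after at most the initial number of generators many rounds, leaving a generating set no member of which lies in the submodule spanned by the others. Next I would compute the relation module $N=\{(a_1,\dots,a_m)\in\Aalg^m : a_1w_1+\dots+a_mw_m=0\}$; since $\sum_l a_lw_l=0$ is a system of $k$-linear equations in the $m\cdot\dim_k\Aalg$ coordinates of $(a_1,\dots,a_m)$, a $k$-basis of $N$ comes out of linear algebra over $k$. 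Here $N\neq 0$: were $N=0$, the $w_l$ would form a free basis and $V$ would be free, contrary to hypothesis.

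Then, from any nonzero $(a_1,\dots,a_m)\in N$ I would extract the zero divisor. Not all $a_l$ vanish; and if every nonzero $a_l$ were a unit, picking such an $a_j$ and left-multiplying $a_jw_j=-\sum_{l\neq j}a_lw_l$ by $a_j^{-1}$ would write $w_j$ in terms of the remaining generators, contradicting irredundancy. Hence some $a_l$ is a nonzero non-unit $a$ of $\Aalg$ (the right index $l$ is found by checking, for each $l$, whether left multiplication by $a_l$ is an invertible $k$-linear operator on $\Aalg$). Since $k$ is finite, $\Aalg$ is finite, so neither $x\mapsto ax$ nor $x\mapsto xa$ is bijective; each therefore has nonzero kernel, giving nonzero $y,y'$ with $ya=ay'=0$, so $a$ is a zero divisor, which I output. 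All of the above is linear algebra in $k$-spaces of dimension $O(m\cdot\dim_k\Aalg)$, hence runs in time $poly(m,\dim_k\Aalg,\log|k|)=poly(\dim_\Aalg V,\log|\Aalg|)$.

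The delicate points, both easy to overlook but routine to check, are exactly in the last stage: it is \emph{irredundancy} of the generating set, not minimality of its cardinality, that forces a nonzero relation to contain a non-unit coefficient; and it is the finiteness of $\Aalg$ (so that one-sided invertible elements of $\Aalg$ are two-sided units) that upgrades that non-unit to a genuine two-sided zero divisor in the sense of the definition. Everything else is bookkeeping. In fact the argument uses only that $\Aalg$ is a finite-dimensional algebra over the finite field $k$; commutativity and semisimplicity of $\Aalg$ play no role.
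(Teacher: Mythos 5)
Your proof is correct, but it takes a genuinely different route from the paper's. The paper's proof builds a free submodule greedily: it picks $v_1,v_2,\ldots$ one at a time, at each step testing (by linear algebra over $k$) whether some nonzero $x\in\Aalg$ sends the new element $v_{i+1}\in V\setminus V_i$ into $V_i:=\Aalg v_1+\cdots+\Aalg v_i$; such an $x$ is a non-unit (else $v_{i+1}\in V_i$) and hence a zero divisor, while if no such $x$ exists then $V_{i+1}$ is again free. Since $V_i=V$ would make $V$ free, a zero divisor must appear within $\dim_\Aalg V$ rounds. Your argument instead first prunes the given generators to an irredundant set $w_1,\dots,w_m$, computes the whole relation module $N$, notes that $N\neq 0$ (otherwise the $w_l$ would be a free basis), and extracts a coefficient $a_l$ from a nonzero relation that must be a non-unit — because over an irredundant set a relation with all nonzero coefficients units would expose some $w_j$ as redundant. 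Both proofs ultimately rest on the same elementary fact (a nonzero non-unit of a finite ring is a two-sided zero divisor, since neither $x\mapsto ax$ nor $x\mapsto xa$ can be bijective), and both are $poly(\dim_k V,\dim_k\Aalg,\log|k|)$ linear algebra. What the paper's version saves is the pruning step and the computation of the full solution space $N$ (it stops at the first obstruction); what your version buys is a cleaner structural picture — the zero divisor is pinpointed as a non-unit coefficient in a relation over an irredundant generating set — and an explicit remark, also true of the paper's argument, that neither commutativity nor semisimplicity of $\Aalg$ is used, even though the enclosing section assumes them.
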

\iffull
\begin{proof}
We give an algorithm that attempts to find a free basis of $V$ over $\Aalg$, but as there is
no free basis it ends up finding a zero divisor.

Pick a nonzero $v_1\in V$. We can efficiently check whether a nonzero $x\in\Aalg$ exists 
such that $xv_1=0$, and also find it by linear algebra over $k$. If we get such an $x$ then it 
is a zero divisor, for otherwise $x^{-1}$ would exist implying $v_1=0$. So suppose such an 
$x$ does not exist, hence $V_1:=\Aalg v_1$ is a free $\Aalg$-module. Now 
$V_1\neq V$ so find a $v_2\in V\setminus V_1$ by linear algebra over $k$. Again we can 
efficiently check whether a nonzero $x\in\Aalg$ exists such that $xv_2\in V_1$, and also find 
it by linear algebra over $k$. If we get such an $x$ then it is a zero divisor, for otherwise 
$x^{-1}$ would exist implying $v_2\in V_1$.  So suppose such an $x$ does not exist, hence 
$V_2:=\Aalg v_1+\Aalg v_2$ is a free $\Aalg$-module. Now $V_2\neq V$ so we can find a 
$v_3\in V\setminus V_2$ by linear algebra over $k$ and continue this process. This process
will, in at most $dim_\Aalg V$ iterations, yield a zero divisor as $V$ is not a free 
$\Aalg$-module. 
\end{proof}
\fi

\iffull
\subsection{Automorphisms and Invariant Ideal Decompositions}\label{autgrp-sect}
\else
\paragraph*{Automorphisms}\label{autgrp-sect}
\fi


Automorphisms of $\Aalg$ are assumed to be given as linear transformations 
of the $k$-vector space $\Aalg$ in terms of a $k$-linear basis of $\Aalg$. 
For images we use the superscript notation while for the fixed 
points the subscript notation: if $\sigma$ is an automorphism of $\Aalg$ 
then the image of $x\in\Aalg$ under $\sigma$ is denoted by $x^\sigma$. 
If $\Gamma$ is a set of automorphisms of $\Aalg$ then $\Aalg_\Gamma$ 
denotes the set of the elements of $\Aalg$ fixed by every 
$\sigma\in \Gamma$. It is obvious that $\Aalg_\Gamma$ is a subalgebra of 
$\Aalg$. For a single automorphism $\sigma$ we use $\Aalg_\sigma$ in 
place of $\Aalg_{\{\sigma\}}$.

Given an ideal $I$ of $\Aalg$ and an automorphism $\sigma$ of $\Aalg$ we 
usually try to find zero divisors from the action of $\sigma$ on $I$. 
\iffull
Note that, by Fact \ref{fac-comm-semisimple}, $\Aalg=I\oplus I^\perp$. 
Now 
$I^\sigma$ is an ideal of $\Aalg$, and if it is neither $I$ nor $I^\perp$
then we try computing $I\cap I^\sigma$. This can be easily computed by 
first finding the identity element $e$ of $I$, and then $I\cap I^\sigma$ 
is simply $\Aalg ee^\sigma$. By the hypothesis this will be a proper ideal
of $I$, thus leading to a {\em refinement} of the decomposition: 
$\Aalg=I\oplus I^\perp$. This basic idea can be carried all the way to give
the following tool that finds a refined, invariant, ideal decomposition.
\else
Note that $\Aalg=I\oplus I^\perp$ where $I^\perp$ is
the ideal consisting of the elements of $\Aalg$
annihilating $I$. If $I^\sigma$ is neither $I$ nor $I^\perp$
then we can further decompose $I$ by computing $I\cap I^\sigma$
and its complement within $I$. This basic idea can be carried all the way 
to give the following tool. 
\fi

\begin{lemma}\label{lem-refine}
\iffull
Given $\Aalg$, a commutative semisimple algebra over a finite field $k$
together with a set of $k$-automorphisms $\Gamma$ of $\Aalg$
and a decomposition of $\Aalg$ into a sum of pairwise orthogonal ideals 
$J_1,\ldots,J_s$, there is a deterministic algorithm of time complexity
$poly(|\Gamma|, \log |\Aalg|)$ that computes a decomposition of $\Aalg$ 
into a sum of pairwise orthogonal ideals $I_1,\ldots,I_t$ such that: 
\\(1) the new decomposition is a refinement of the original one --
for every $j\in\{1,\ldots,t\}$, there exists $i\in\{1,\ldots,s\}$ such that
$I_j\subseteq J_i$, and
\\(2) the new decomposition is invariant under $\Gamma$ -- the group 
generated by $\Gamma$ permutes the ideals $I_1,\ldots,I_t$,
i.e. for every $\sigma\in \Gamma$ and for every index $j\in \{1,\ldots,t\}$, 
we have $I_j^\sigma=I_{j^\sigma}$ for some index $j^\sigma\in \{1,\ldots,t\}$.
\else
Given a set of $k$-automorphisms $\Gamma$ of $\Aalg$, and a decomposition
of $\Aalg$ into a sum of pairwise orthogonal ideals, 
in deterministic $poly(|\Gamma|, \log |\Aalg|)$ time we can find
a $\Gamma$-invariant decomposition which is a refinement of the 
original one.
\fi
\end{lemma}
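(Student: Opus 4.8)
The plan is to build the refined decomposition iteratively, starting from the given orthogonal ideal decomposition $\Aalg = J_1 \oplus \cdots \oplus J_s$ and repeatedly splitting a block whenever some $\sigma \in \Gamma$ fails to map it to another block. Concretely, I would maintain a current decomposition $\Aalg = I_1 \oplus \cdots \oplus I_m$ into pairwise orthogonal ideals, each contained in some $J_i$ (so property (1) is an invariant of the process), together with, for each $\ell$, the identity idempotent $e_\ell$ of $I_\ell$ (these are easy to compute by linear algebra over $k$, since $I_\ell = \Aalg e_\ell$). The main loop searches for a pair $(\ell, \sigma)$ with $\sigma \in \Gamma$ such that $I_\ell^\sigma$ is \emph{not} equal to any current block $I_j$. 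Since $\sigma$ is an automorphism, $I_\ell^\sigma = \Aalg e_\ell^\sigma$ is again an ideal, and by Fact~\ref{fac-comm-semisimple} the algebra $\Aalg$ is a direct sum of fields, so $I_\ell^\sigma$ is spanned by a subset of the primitive idempotents; if it is not one of the current blocks, then for some $j$ the intersection $I_\ell \cap I_j^{\sigma^{-1}}$ (equivalently $I_j \cap I_\ell^\sigma$, computed as $\Aalg e_j e_\ell^\sigma$) is a proper nonzero sub-ideal of $I_\ell$. I would then replace $I_\ell$ by $I_\ell \cap I_\ell^\sigma$ intersected against \emph{all} blocks, or more simply replace $I_\ell$ by the two pieces $\Aalg e_\ell e_j^{\sigma}$ and its complement inside $I_\ell$, namely $\Aalg e_\ell(1 - e_j^{\sigma})$. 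This strictly increases the number of blocks, hence terminates after at most $\dim_k \Aalg$ refinements.

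For the termination and complexity bookkeeping, the key observation is that every block $I_\ell$ in every intermediate decomposition is a sum of primitive idempotents of $\Aalg$, so the number of blocks is at most the number of primitive idempotents, which is at most $\dim_k \Aalg \le \log_{|k|}|\Aalg|$; thus the outer loop runs polynomially many times. Inside each iteration we scan over all $\sigma \in \Gamma$ and all current blocks, computing products of idempotents and testing ideal equality, each of which is linear algebra over $k$ and costs $poly(\log|\Aalg|)$; so the total running time is $poly(|\Gamma|, \log|\Aalg|)$ as claimed. Property (1) is preserved because splitting a block contained in some $J_i$ yields two sub-ideals still contained in $J_i$.

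For property (2), once the loop terminates we have a decomposition $\Aalg = I_1 \oplus \cdots \oplus I_t$ such that for every $\sigma \in \Gamma$ and every $\ell$, the ideal $I_\ell^\sigma$ equals some block $I_j$; define $\ell^\sigma := j$. I would then check that $\sigma \mapsto (\ell \mapsto \ell^\sigma)$ is a homomorphism into the symmetric group on $\{1,\ldots,t\}$: each $\sigma$ induces a permutation because $\sigma$ is a bijection preserving the orthogonal direct-sum structure (distinct blocks go to distinct blocks, and all $t$ of them are hit since $\sigma$ is invertible and also maps the terminated decomposition into itself), and composition of automorphisms corresponds to composition of the induced permutations. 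Hence the group $\langle\Gamma\rangle$ permutes $I_1,\ldots,I_t$, which is exactly statement (2).

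The step I expect to be the main (though modest) obstacle is arguing the termination bound cleanly: one must be sure that refining against one $(\ell,\sigma)$ does not secretly undo progress elsewhere. This is handled by the monovariant ``number of blocks,'' which only ever increases and is bounded by the number of primitive idempotents of $\Aalg$; everything else is routine idempotent arithmetic justified by Fact~\ref{fac-comm-semisimple} and Fact~\ref{fac-idempotent}. A minor point to get right is that when $I_\ell^\sigma$ is not a current block, \emph{some} $I_j$ genuinely cuts it nontrivially --- this follows because the $I_j$ cover $\Aalg$ and $I_\ell^\sigma \ne I_j$ for all $j$, so at least one intersection $I_\ell^\sigma \cap I_j$ is proper and nonzero, giving the desired split of $I_\ell$ after pulling back by $\sigma^{-1}$.
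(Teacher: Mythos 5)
Your overall strategy --- iteratively refine the decomposition by intersecting ideals with their images under the $\sigma$'s, computed via idempotent products, with the number of blocks as a terminating monovariant --- is the same as the one the paper sketches just before the lemma. However, the specific splitting step contains a genuine error. You claim that whenever $I_\ell^\sigma$ is not a current block, ``for some $j$ the intersection $I_\ell \cap I_j^{\sigma^{-1}}$ is a proper nonzero sub-ideal of $I_\ell$,'' and you then split $I_\ell$. This is false when $I_\ell$ is already a primitive ideal but $I_\ell^\sigma$ happens to sit properly inside a \emph{different} block. Concretely, let $\Aalg \cong k^3$ with primitive idempotents $f_1,f_2,f_3$, let $\sigma$ be the automorphism with $f_1\mapsto f_2\mapsto f_3\mapsto f_1$, and take blocks $I_1=\Aalg f_1$ and $I_2=\Aalg(f_2+f_3)$. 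Then $I_1^\sigma=\Aalg f_2$ is not a current block, so your loop selects $(\ell,\sigma)=(1,\sigma)$; yet $I_1\cap I_1^{\sigma^{-1}}=0$, $I_1\cap I_2^{\sigma^{-1}}=I_1$, and likewise $I_1\cap I_j^\sigma\in\{0,I_1\}$ for both $j$. There is simply no split of $I_1$ available, because it is already primitive. The ``pulling back by $\sigma^{-1}$'' in your last paragraph is the precise point that breaks: $I_\ell^\sigma\cap I_j$ being proper and nonzero \emph{in $I_j$} (which does hold here, with $j=2$) does not make its preimage proper and nonzero \emph{in $I_\ell$}.

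The fix is to split the block that $I_\ell^\sigma$ genuinely cuts rather than $I_\ell$ itself: search over all triples $(j,m,\sigma)$ for one with $\Aalg\, e_j e_m^\sigma$ a proper nonzero sub-ideal of $I_j$, and split $I_j$ into $\Aalg\, e_j e_m^\sigma$ and $\Aalg\, e_j(1-e_m^\sigma)$. (In the example, $j=2$, $m=1$ works and splits $I_2$ into $\Aalg f_2$ and $\Aalg f_3$.) You then also need the converse direction to certify termination: if $I_j\cap I_m^\sigma\in\{0,I_j\}$ for all $j,m$, then $\sigma$ permutes the blocks. Indeed each $I_j$ then lies inside a unique $I_{m(j)}^\sigma$; the fibers of $j\mapsto m(j)$ partition $\{1,\ldots,t\}$ into $t$ nonempty parts, hence are singletons, so each $I_m^\sigma$ equals exactly one $I_j$. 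With this correction, the rest of your proposal --- the primitive-idempotent bound giving at most $\dim_k\Aalg$ refinement steps, preservation of property (1), the $poly(|\Gamma|,\log|\Aalg|)$ accounting, and passing from invariance under each $\sigma\in\Gamma$ to invariance under $\langle\Gamma\rangle$ --- is sound and matches the paper's intended argument.
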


\section{Semiregularity}\label{sec-semireg}

In this section we 
continue to assume that $\Aalg$ is 
a commutative semisimple algebra over a finite field $k$.
Given $\Gamma\subseteq\Aut_k(\Aalg)$, a basis of $\Aalg_\Gamma$ 
can be computed by solving a system of linear equations in $\Aalg$.
Thus, we can apply the method of Lemma \ref{lem-nonfree} considering $\Aalg$ as a 
$\Aalg_\Gamma$-module wrt the multiplication in 
$\Aalg$. 
\iffull
In this section we describe 
\else
We describe 
\fi
a class of algebras, together with automorphisms, 
that are free modules over the subalgebra of the fixed points of the corresponding set 
of automorphisms, i.e. on which the tool of Lemma \ref{lem-nonfree} is ineffective. 

Let $\sigma$ be a $k$-automorphism of $\Aalg$. We 
say that $\sigma$ is {\em fix-free} if there is no nontrivial ideal $I$ of $\Aalg$ such 
that $\sigma$ fixes $I$. We call a group $G\leq \Aut(\Aalg)$ {\em semiregular} if every 
non-identity element of $G$ is fix-free. A single automorphism $\sigma$ of $\Aalg$ is 
{\em semiregular} if $\sigma$ generates a semiregular group of automorphisms of $\Aalg$. 

We have the following characterization of semiregularity.

\begin{lemma}\label{lem-semireg-ineq}
\iffull
Let $\Aalg$ be a commutative semisimple algebra over a finite field $k$ and 
let 
\else
Let
\fi
$G$ 
be a group of $k$-automorphisms of $\Aalg$. Then 
$\dim_k\Aalg\leq |G|\cdot \dim_k\Aalg_G$, where equality holds if and only if $G$ is 
semiregular. 
This condition is also equivalent to saying that 
$\Aalg$ is a free $\Aalg_G$-module of rank $|G|$.
\end{lemma}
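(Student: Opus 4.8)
The plan is to reduce the statement to Galois theory of finite fields via the action of $G$ on the simple components of $\Aalg$. Since $\Aalg$ is commutative semisimple over $k$, I would first write $\Aalg=K_1\oplus\cdots\oplus K_m$ with $K_i=e_i\Aalg$ finite fields and $e_1,\dots,e_m$ the primitive idempotents (Fact~\ref{fac-comm-semisimple}). Every $k$-automorphism permutes $\{e_1,\dots,e_m\}$, so $G$ acts on $\{1,\dots,m\}$; for each orbit $O$ I set $f_O=\sum_{i\in O}e_i$, a $G$-invariant idempotent, so that $f_O\Aalg=\bigoplus_{i\in O}K_i$ is a $G$-invariant ideal, $\Aalg=\bigoplus_O f_O\Aalg$, and (a $G$-fixed element has $G$-fixed $f_O$-components) $\Aalg_G=\bigoplus_O (f_O\Aalg)_G$. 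Because the $f_O\Aalg$ are pairwise orthogonal, $\Aalg$ is a free $\Aalg_G$-module of rank $r$ iff each $f_O\Aalg$ is a free $(f_O\Aalg)_G$-module of rank $r$, and $k$-dimensions add over orbits while $|G|$ is orbit-independent; so the whole lemma reduces to the case where $G$ is transitive on the components.

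For a fixed orbit with representative $i_0$, writing $K=K_{i_0}$, $d=\dim_k K$ and $H$ for the stabiliser $\{\sigma\in G:e_{i_0}^\sigma=e_{i_0}\}$, one has $|O|=[G:H]$ and hence $\dim_k f_O\Aalg=[G:H]\,d$. I would then check that the component projection $x\mapsto e_{i_0}x$ restricts to a $k$-algebra isomorphism $(f_O\Aalg)_G\cong K^H$: injective because a $G$-fixed element supported on $O$ is recovered from its $i_0$-component via $e_i x=(e_{i_0}x)^{\sigma}$ for $i_0^{\sigma}=i$, surjective by spreading an $H$-fixed element of $K$ around the orbit by the same formula. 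In particular $(f_O\Aalg)_G$ is a field and $\dim_k(f_O\Aalg)_G=\dim_k K^H$.

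The key input is now Galois theory of the finite field $K$: letting $N$ be the kernel of the action $H\to\Aut(K/k)$ (a normal subgroup of $H$), so $K^H=K^{H/N}$, Artin's fixed-field theorem (equivalently, $\Aut(K/k)$ is cyclic generated by a power of the Frobenius) gives $[K:K^{H/N}]=|H/N|$ and hence $\dim_k K^H = d\,|N|/|H|$. This gives
\[
|G|\dim_k(f_O\Aalg)_G=[G:H]\,|H|\cdot\frac{d\,|N|}{|H|}=[G:H]\,d\,|N|\ \ge\ [G:H]\,d=\dim_k f_O\Aalg ,
\]
with equality exactly when $N$ is trivial, i.e. when $H$ acts faithfully on $K$. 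Summing over orbits yields $\dim_k\Aalg\le|G|\dim_k\Aalg_G$ and shows equality is equivalent to: the stabiliser of each orbit acts faithfully on its block. Finally I would match this with semiregularity: a nontrivial ideal is $\bigoplus_{i\in S}K_i$ with $\emptyset\ne S\subsetneq\{1,\dots,m\}$ and contains each $K_i$, $i\in S$, pointwise, so a non-identity $\sigma$ fails to be fix-free precisely when it fixes some block $K_i$ pointwise, i.e. when $\sigma$ lies in the kernel $N_i$ of the action of $\mathrm{Stab}_G(i)$ on $K_i$. Hence $G$ is semiregular iff every $N_i$ is trivial — exactly the equality condition — and when this holds each $(f_O\Aalg)_G\cong K^H$ is a field over which $f_O\Aalg$ is automatically free of rank $\dim_k f_O\Aalg/\dim_k(f_O\Aalg)_G=[G:H]|H|=|G|$, so $\Aalg$ is free of rank $|G|$ over $\Aalg_G$; conversely, freeness of rank $|G|$ forces $\dim_k\Aalg=|G|\dim_k\Aalg_G$ and hence semiregularity by the above.

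I expect the main obstacle to be the equality analysis in the third paragraph: keeping precise track of the kernel $N$ of each stabiliser's action on its block, handling the a priori non-faithful case through Artin's theorem, and verifying that ``fixing a nontrivial ideal pointwise'' is the same as ``lying in some $N_i$''. The orbit-decomposition bookkeeping — in particular that freeness of $\Aalg$ over $\Aalg_G$ is equivalent to orbit-wise freeness of equal rank — is routine but must use the orthogonality of the $f_O\Aalg$ and the fact that $\Aalg_G$ is a product of the fields $(f_O\Aalg)_G$.
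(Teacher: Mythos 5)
Your proposal is correct and takes essentially the same route as the paper's proof: decompose $\Aalg$ into its simple components, analyze the orbit--stabilizer structure of $G$ acting on the primitive idempotents, pass to the kernel $N$ of the stabilizer's action on a component field, and invoke Galois theory of finite fields to compare dimensions. The paper phrases the bookkeeping in terms of a fixed primitive idempotent $e$ and the corresponding idempotent $f=\sum_{\gamma\in C}e^\gamma$ of $\Aalg_G$ rather than explicitly reducing to the transitive case, but the content — including the identification $(f\Aalg)_G\cong (e\Aalg)^{G_e}$ and the criterion that equality, semiregularity, and triviality of the kernels $N_e$ all coincide — is the same.
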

\iffull
\begin{proof}
The proof is based on the observation that $\Aalg$ is a direct sum of fields and a 
$k$-automorphism of $\Aalg$ just {\em permutes} these component fields.  

Let $e$ be a primitive idempotent of $\Aalg$. We denote
the stabilizer of $e$ in $G$ by $G_e$, i.e, $G_e=\{\sigma\in G|e^\sigma=e\}$.
Let $C$ be a complete set of right coset representatives modulo
$G_e$ in $G$. The orbit of $e$ under $G$ is $\{e^\gamma|\gamma\in C\}$ and they are 
$|G:G_e|$ many pairwise orthogonal primitive idempotents in $\Aalg$. This means that 
the component field $e\Aalg$ is sent to the other component fields 
$\{e^\gamma\Aalg|\gamma\in C\}$ by $G$. Thus, the element $f:=\sum_{\gamma\in C}e^\gamma$ 
$\in\Aalg_G$ is a primitive idempotent of $\Aalg_G$ and equivalently $f\Aalg_G$ is a 
field.

The subgroup $G_e$ acts as a group of field automorphisms of $e\Aalg$. This gives a 
restriction map $\lambda:G_e\rightarrow Aut_k(e\Aalg)$ whose kernel say is $N_e$, so
$N_e=\{\sigma\in G|\sigma \text{ fixes } e\Aalg\}$ is a normal subgroup of $G_e$, thus $G_e/N_e$ are
distinct $k$-automorphisms of the field $e\Aalg$.
We claim that $(e\Aalg)_{G_e}=e\Aalg_G$. The inclusion $e\Aalg_G\subseteq (e\Aalg)_{G_e}$ 
is trivial. To see the reverse inclusion, let $x\in (e\Aalg)_{G_e}$ and consider 
$y:=\sum_{\gamma\in C}x^\gamma$. Since $x\in e\Aalg$ we get $ex=x$ and 
$y=\sum_{\gamma\in C}e^\gamma x^\gamma$, whence using the orthogonality of the idempotents 
$e^\gamma$, we infer $ey=x$. The fact that $y\in\Aalg_G$ completes the proof of the claim. As $G_e$
is a group of automorphisms of the field $e\Aalg$, this claim implies $e\Aalg_G$ is a field too and also by
Galois theory $[e\Aalg:e\Aalg_G]=|G_e/N_e|$. 

Observe that $ef=e$ and this makes multiplication by $e$ a onto homomorphism from $f\Aalg_G$ to 
$e\Aalg_G$. This homomorphism is also injective as $e\Aalg_G$, $f\Aalg_G$ are fields, thus making
$f\Aalg_G\cong e\Aalg_G$. Together with the fact that $f\Aalg$ is a free $e\Aalg$-module of dimension 
$|G:G_e|$ this implies that $\dim_{f\Aalg_G}f\Aalg=|G:G_e|\dim_{e\Aalg_G}e\Aalg$. Furthermore, from 
the last paragraph $\dim_{e\Aalg_G}e\Aalg=|G_e:N_e|$, thus $\dim_{f \Aalg_G}f\Aalg=|G:N_e|\leq |G|$.
Finally, this gives $\dim_k f\Aalg\leq \dim_k f\Aalg_G\cdot|G|$. 
Applying this for all the primitive idempotents $e$ of $\Aalg$ (and thus to all the corresponding primitive 
idempotents $f$ of $\Aalg_G$), we obtain the asserted inequality.

Observe that equality holds iff $|N_e|=1$ for every primitive idempotent $e$ of $\Aalg$. In that case 
for every primitive idempotent $e$ of $\Aalg$, there is no non-identity automorphism in $G$ that fixes 
$e\Aalg$, thus equivalently for every nontrivial ideal $I$ of $\Aalg$ there is no non-identity automorphism 
in $G$ that fixes $I$. This means that equality holds iff  $G$ is semiregular.

Also, equality holds iff $\dim_{f \Aalg_G}f\Aalg=|G|$ for every primitive idempotent $e$ of $\Aalg$. The latter 
condition is equivalent to saying that every component field of $\Aalg_G$ has multiplicity $|G|$ in the 
$\Aalg_G$-module $\Aalg$, this in turn is equivalent to saying that $\Aalg$ is a free $\Aalg_G$-module of 
dimension $|G|$. 
\end{proof}
\fi

\iffull
Using the above Lemma we can decide semiregularity in an efficient way.
\else
Using the Lemma we can decide semiregularity in an efficient way:
\fi

\begin{proposition}\label{pro-regularaut}
Given 
\iffull
a commutative semisimple algebra $\Aalg$ over a finite field $k$, together with 
\fi
a set $\Gamma$ of 
$k$-automorphisms of $\Aalg$. Let $G$ be the group generated by $\Gamma$. In deterministic 
$poly(|\Gamma|, \log |\Aalg|)$ time one can list all the elements of $G$ if $G$ is semiregular, 
or one can find a zero divisor of $\Aalg$ if $G$ is not semiregular.
\end{proposition}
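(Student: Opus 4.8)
The leverage is Lemma~\ref{lem-semireg-ineq}: the group $G$ is semiregular exactly when $\dim_k\Aalg=|G|\cdot\dim_k\Aalg_G$, equivalently when $\Aalg$ is a free $\Aalg_G$-module of rank $|G|$. In particular a semiregular $G$ has $|G|\le\dim_k\Aalg=O(\log|\Aalg|)$, hence is small enough to list explicitly; while a non-semiregular $G$ must contain a non-identity automorphism fixing some nontrivial ideal $I$ of $\Aalg$ pointwise, and the identity element of such an $I$ is then a nontrivial idempotent, so a zero divisor. Thus the whole task reduces to either (a) confirming that $G$ is small and writing it down, or (b) locating an automorphism that is not ``fix-free''.

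The plan is as follows. First compute a $k$-basis of $\Aalg_\Gamma$ by solving the linear system $\{x^\gamma=x:\gamma\in\Gamma\}$; since fixing every generator is the same as fixing all of $G$, this is $\Aalg_G$, and $\dim_k\Aalg_G$ is obtained in time $poly(|\Gamma|,\log|\Aalg|)$. Then enumerate $G=\langle\Gamma\rangle$ by closure: start with $\{1_\Aalg\}$ and repeatedly adjoin products with elements of $\Gamma$ (in a finite group the resulting set is all of $\langle\Gamma\rangle$), keeping the list $L$ of automorphisms produced and aborting as soon as $|L|$ exceeds $\dim_k\Aalg$. For every new $\sigma\neq 1_\Aalg$, test fix-freeness: compute $\Aalg_\sigma=\{x:x^\sigma=x\}$ and then the largest ideal of $\Aalg$ contained in it, namely $\{x\in\Aalg:x\Aalg\subseteq\Aalg_\sigma\}$, both by linear algebra over $k$; if this ideal is nonzero then, being a proper ideal since $\sigma\neq 1_\Aalg$, its identity element is a nontrivial idempotent, which we output as a zero divisor. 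If the closure terminates with $|L|=|G|\le\dim_k\Aalg$ and no non-fix-free element has appeared, then every non-identity element of $G$ is fix-free, so $G$ is semiregular (and indeed $\dim_k\Aalg=|G|\cdot\dim_k\Aalg_G$, a final consistency check), and we output $L$.

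The remaining situation is that the list overflows, i.e.\ $|G|>\dim_k\Aalg$; by Lemma~\ref{lem-semireg-ineq} this already certifies that $G$ is not semiregular, and what is left is to manufacture a zero divisor despite $|G|$ being possibly exponential. I would first apply Lemma~\ref{lem-refine} to $\Gamma$ and the trivial decomposition to obtain a $\Gamma$-invariant decomposition $\Aalg=I_1\oplus\cdots\oplus I_t$; if $t>1$ the identity of $I_1$ is a nontrivial idempotent and we are done. If $t=1$ then $\Aalg$ is not a field (a field $\Aalg$ would make $\Aalg_G$ a field and force $|G|\le[\Aalg:\Aalg_G]\le\dim_k\Aalg$), so $G$ permutes the primitive idempotents of $\Aalg$ transitively; writing $G_e$ for the stabilizer of a primitive idempotent $e$ and $N_e$ for the kernel of the resulting action of $G_e$ on the field $e\Aalg$, the orbit bound $|G:G_e|\cdot\dim_k(e\Aalg)\le\dim_k\Aalg$ together with $|G_e/N_e|\le\dim_k(e\Aalg)$ gives $|N_e|\ge|G|/\dim_k\Aalg>1$, so some non-identity element of $G$ lies in $N_e$, i.e.\ fixes the nontrivial ideal $e\Aalg$ pointwise and is not fix-free. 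I expect the genuinely delicate point to be algorithmic rather than structural: one must argue that such a non-fix-free element (equivalently, the zero divisor it yields) is reached after only $poly(\log|\Aalg|)$ additional steps of the closure process, rather than being buried deep inside the huge group $G$. Everything else is routine accounting: each invoked subroutine runs in $poly(|\Gamma|,\log|\Aalg|)$ time, and outside the overflow branch the closure performs at most $\dim_k\Aalg$ rounds of $poly(|\Gamma|,\log|\Aalg|)$ linear algebra.
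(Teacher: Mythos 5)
Your outline agrees with the paper's up to the overflow branch (compute $\Aalg_\Gamma$, enumerate with a polynomial cap, test each element), and you correctly identify where the real difficulty lies; but there you leave a genuine gap rather than close it. You write that ``one must argue that such a non-fix-free element $\ldots$ is reached after only $poly(\log|\Aalg|)$ additional steps of the closure process, rather than being buried deep inside the huge group $G$,'' and you provide no such argument. Indeed none is available: nothing forces the particular elements of $\langle\Gamma\rangle$ that the closure happens to produce first to include a non-fix-free one, and the non-fix-free elements could in principle be products of length far exceeding $poly(\log|\Aalg|)$ in the generators. Your side excursion through Lemma~\ref{lem-refine} does not help either, since applied to the trivial one-block decomposition it returns the trivial decomposition. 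The structural part of your argument (some $N_e$ is nontrivial because $|G|>\dim_k\Aalg$) is fine, but it only proves existence; it does not locate the element.

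The paper supplies the missing algorithmic idea via a coset pigeonhole, and uses a freeness test rather than your pointwise-fixed-ideal test. After confirming (Lemma~\ref{lem-nonfree}) that $\Aalg$ is a free $\Aalg_\Gamma$-module, set $m:=\dim_{\Aalg_\Gamma}\Aalg$; the computation inside the proof of Lemma~\ref{lem-semireg-ineq} shows that for every primitive idempotent $e$ of $\Aalg$, the subgroup $N_e=\{\sigma\in G:\sigma|_{e\Aalg}=\mathrm{id}\}$ has index exactly $m$ in $G$. Hence among any $m+1$ distinct enumerated elements, two lie in the same coset of $N_e$, and their quotient $\sigma=\sigma_i\sigma_j^{-1}\neq 1$ fixes $e\Aalg$ pointwise. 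Since $e$ is unknown, one tries all $O(m^2)$ quotients; for the good one, $e\Aalg\subseteq\Aalg_\sigma$ while $(1-e)\Aalg_\sigma\subsetneq(1-e)\Aalg$, so $\Aalg$ cannot be a free $\Aalg_\sigma$-module, and Lemma~\ref{lem-nonfree} over $\Aalg_\sigma$ outputs a zero divisor. (Your fix-freeness test would also detect this particular $\sigma$, since it does fix the nontrivial ideal $e\Aalg$; the essential missing ingredient is not the test but the insight to examine all pairwise quotients of the first $m+1$ elements rather than waiting for the closure to stumble on the right one.) Two minor points: the paper caps the enumeration at $m+1$ where $m=\dim_{\Aalg_\Gamma}\Aalg$, which is exactly the coset count needed, whereas your bound $\dim_k\Aalg$ is looser though still polynomial; and the paper first verifies freeness of $\Aalg$ over $\Aalg_\Gamma$ before invoking Lemma~\ref{lem-semireg-ineq}, a step you should add.
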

\iffull
\begin{proof}
We first compute $\Aalg_\Gamma$ by linear algebra over $k$. We can assume that $\Aalg$ is a free 
$\Aalg_\Gamma$-module otherwise the algorithm in Lemma \ref{lem-nonfree} finds a zero divisor. 
By Lemma \ref{lem-semireg-ineq} $|G|\ge \dim_{\Aalg_\Gamma}\Aalg=:m$ so try to enumerate 
$(m+1)$ different elements in the group $G$. If we are unable to get that many elements then, 
by Lemma \ref{lem-semireg-ineq}, $G$ is semiregular and we end up with a list of $m$ elements that 
exactly comprise $G$. 

If we do get a set $S$ of $(m+1)$ elements then $G$ is clearly not semiregular. Let $e$ be a primitive 
idempotent of $\Aalg$ such that the subgroup $N_e\leq G$, consisting of automorphisms that fix $e\Aalg$,
is of maximal size. Then from the proof of Lemma \ref{lem-semireg-ineq} we obtain $|G:N_e|\leq m$ which 
means, by pigeon-hole principle, that in the set $S$ there are two different elements 
$\sigma_1, \sigma_2$ such 
that $\sigma:=\sigma_1\sigma_2^{-1}\in N_e$, thus $\sigma$ fixes $e\Aalg$. We now compute 
$\Aalg_\sigma$ and we know from this discussion that $e\Aalg\subseteq\Aalg_\sigma$. Thus we 
get two orthogonal component algebras $e\Aalg_\sigma$ and $(1-e)\Aalg_\sigma$ of 
$\Aalg_\sigma$. We have from the proof of Lemma \ref{lem-semireg-ineq} that 
$e\Aalg_\sigma=(e\Aalg)_\sigma=e\Aalg$ while $(1-e)\Aalg_\sigma=$ 
$((1-e)\Aalg)_\sigma\neq (1-e)\Aalg$ (if $((1-e)\Aalg)_\sigma=(1-e)\Aalg$ then 
$\sigma$ would fix every element in $\Aalg$ and would be a trivial automorphism). 
As a result $\Aalg$ is not a free module over $\Aalg_\sigma$ and hence we can find a 
zero divisor of $\Aalg$ using the method of Lemma \ref{lem-nonfree}. 
\end{proof}
\fi

\iffull
\smallskip
\noindent
{\bf Subgroup $G_\Balg$: }
Let $G$ be a semiregular group of $k$-automorphisms of $\Aalg$ and let $\Balg$ be
a subalgebra of $\Aalg$. We define $G_\Balg$ to be the subgroup of automorphisms of 
$G$ that fix $\Balg$. We give below a Galois theory-like characterization of 
$G_\Balg$.

\begin{proposition}\label{pro-galclosed}
Given a semiregular group $G$ of automorphisms of a commutative semisimple algebra 
$\Aalg$ over a finite field $k$ and a subalgebra $\Balg$ of $\Aalg$ containing 
$\Aalg_G$, one can find a zero divisor in $\Aalg$ in deterministic polynomial time
if $\Balg\not=\Aalg_{G_\Balg}$.
\end{proposition}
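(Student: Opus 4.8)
The plan is to show that if $\Balg \subsetneq \Aalg_{G_\Balg}$ then one of the two freeness/rank hypotheses that hold when everything is ``Galois-closed'' must fail, and a failure of freeness is exactly what Lemma~\ref{lem-nonfree} converts into a zero divisor. First I would set $\Calg := \Aalg_{G_\Balg}$, noting $\Aalg_G \subseteq \Balg \subseteq \Calg \subseteq \Aalg$, and compute $G_\Balg$, $\Calg$, and the relevant dimensions by linear algebra over $k$ in polynomial time. Since $G$ is semiregular, Lemma~\ref{lem-semireg-ineq} gives $\dim_k\Aalg = |G|\cdot\dim_k\Aalg_G$ and, applied to the subgroup $G_\Balg$ acting on $\Aalg$, it gives $\dim_k\Aalg = |G_\Balg|\cdot\dim_k\Calg$. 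If along the way $\Aalg$ turns out not to be a free $\Calg$-module, or not a free $\Balg$-module, Lemma~\ref{lem-nonfree} already produces a zero divisor and we are done; so assume freeness in both cases. Then we have the multiplicativity of ranks, $\dim_k\Aalg = [\Aalg:\Calg]\cdot\dim_k\Calg = [\Aalg:\Balg]\cdot\dim_k\Balg$, and combined with the two displayed identities this yields $[\Aalg:\Calg] = |G_\Balg|$ and $[\Balg:\Aalg_G]$ ``large'' whenever $[\Aalg:\Balg] > [\Aalg:\Calg]$, i.e.\ whenever $\Balg \subsetneq \Calg$.

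Next I would exploit that $\Balg \subseteq \Calg$ with $\dim_k\Balg < \dim_k\Calg$ forces $\Aalg$ not to be free over $\Balg$ after all, or forces some proper containment to be detected. Concretely, writing $[\Aalg:\Balg] = [\Aalg:\Calg]\cdot[\Calg:\Balg]$ when $\Aalg$ is free over $\Calg$ and $\Calg$ is free over $\Balg$, and using $[\Aalg:\Calg] = |G_\Balg|$ together with $\dim_k\Aalg = |G_\Balg|\cdot\dim_k\Calg$, the assumption $\Balg \ne \Calg$ gives $[\Calg:\Balg] \ge 2$, hence $\dim_k\Aalg = |G_\Balg|\cdot[\Calg:\Balg]\cdot\dim_k\Balg \ge 2|G_\Balg|\cdot\dim_k\Balg$. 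On the other hand every element of $G_\Balg$ is by definition a $\Balg$-automorphism of $\Aalg$, so $G_\Balg \subseteq \Aut_\Balg(\Aalg)$; I would like to conclude $\dim_k\Aalg \le |G_\Balg|\cdot\dim_k\Balg$ from the analogue of Lemma~\ref{lem-semireg-ineq} applied over the base ring $\Balg$ — but $\Balg$ need not be a field, so I must instead argue componentwise. Decompose $\Balg$ (a commutative semisimple algebra, a direct sum of fields by Fact~\ref{fac-comm-semisimple}) into primitive idempotents; for each primitive idempotent $g$ of $\Balg$, $g\Aalg$ is a commutative semisimple algebra over the field $g\Balg$, the group $G_\Balg$ acts on it fixing $g\Balg$ pointwise (after checking $G_\Balg$ permutes the $g\Aalg$ trivially since it fixes $\Balg$), and Lemma~\ref{lem-semireg-ineq} over $g\Balg$ gives $\dim_{g\Balg}(g\Aalg) \le |G_\Balg|$ — unless the relevant action is not semiregular on $g\Aalg$, in which case, just as in Proposition~\ref{pro-regularaut}, a non-identity $\sigma\in G_\Balg$ fixes a nontrivial ideal, $\Aalg$ is not free over $\Aalg_\sigma$, and Lemma~\ref{lem-nonfree} yields a zero divisor. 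Summing over $g$ gives $\dim_k\Aalg \le |G_\Balg|\cdot\dim_k\Balg$, contradicting the strict inequality $\dim_k\Aalg \ge 2|G_\Balg|\cdot\dim_k\Balg$ obtained above. Hence one of the freeness checks along the way must have failed, delivering the zero divisor.

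The main obstacle I anticipate is the last step: the clean Galois-theoretic bound $\dim_k\Aalg \le |G_\Balg|\cdot\dim_k\Balg$ is only literally proved in Lemma~\ref{lem-semireg-ineq} when the base is a field, so the bookkeeping of passing to the primitive idempotents of $\Balg$, checking that $G_\Balg$ acts within each block, and invoking the semiregularity-or-zero-divisor dichotomy (Proposition~\ref{pro-regularaut}) in each block, is where the real work lies. A secondary subtlety is ensuring that every intermediate object — $\Calg$, $G_\Balg$, the idempotents, the fixed subalgebras $\Aalg_\sigma$ — is computable in $poly(|G|,\log|\Aalg|)$ time; this is routine linear algebra over $k$ plus the already-established Lemmas~\ref{lem-nonfree} and \ref{lem-refine} and Proposition~\ref{pro-regularaut}, but it should be stated explicitly so the polynomial running time claimed in the Proposition is justified.
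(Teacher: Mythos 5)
The proposal goes wrong at the central dimension-counting step. You invoke Lemma~\ref{lem-semireg-ineq} blockwise over each primitive idempotent $g$ of $\Balg$ and assert that it ``gives $\dim_{g\Balg}(g\Aalg)\le|G_\Balg|$.'' But that lemma gives
$\dim_{g\Balg}(g\Aalg)\le|G_\Balg|\cdot\dim_{g\Balg}\bigl((g\Aalg)_{G_\Balg}\bigr)$,
and the fixed subalgebra on the right is $(g\Aalg)_{G_\Balg}=g\Calg$, not $g\Balg$. The bound you need, $\dim_{g\Balg}(g\Aalg)\le|G_\Balg|$, holds exactly when $g\Calg=g\Balg$ --- i.e.\ precisely the conclusion you are trying to reach. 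So the argument is circular: you never obtain $\dim_k\Aalg\le|G_\Balg|\cdot\dim_k\Balg$, and therefore there is no contradiction with $\dim_k\Aalg\ge 2\,|G_\Balg|\cdot\dim_k\Balg$. The freeness checks do not rescue this. Consider $\Aalg=k^4$ with $G=\Z_4$ acting by cyclic shift (semiregular, $\Aalg_G\cong k$) and $\Balg=\{(a,a,b,b)\}\supseteq\Aalg_G$. Here $G_\Balg=\{\mathrm{id}\}$, so $\Calg=\Aalg_{G_\Balg}=\Aalg\ne\Balg$, yet $\Aalg$ is a free $\Balg$-module of rank $2$ and trivially a free $\Calg$-module of rank $1$: every freeness hypothesis you impose is satisfied, $[\Aalg:\Balg]=2>1=|G_\Balg|$, and your argument produces no zero divisor even though $\Aalg$ certainly has one.

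The paper's proof has to work harder precisely because naive rank comparison is insufficient. It picks a primitive element $x$ of $\Aalg$ over $k$ (after dispatching the cases where $\Aalg$ is a field or $|k|$ is small), forms the characteristic polynomial $f(X)$ of $x$ over $\Aalg_G$ and the minimal polynomial $g(X)$ over $\Balg$, performs polynomial division $f=gh$, iteratively factors $g$ as $\prod_{\sigma\in K}(X-x^\sigma)$, and from the resulting roots builds $\Balg$-endomorphisms $\mu_\sigma$ of $\Aalg$. Only after showing (via Proposition~\ref{pro-regularaut} and Lemma~\ref{lem-semireg-ineq}) that these $\mu_\sigma$ form a semiregular subgroup $H\le G$ can one conclude $\Aalg_H=\Balg$ and hence $H=G_\Balg$, forcing $\Balg=\Aalg_{G_\Balg}$ unless a zero divisor appeared along the way. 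You would need something of this constructive flavor; the linear-algebraic bookkeeping alone cannot detect the failure of the Galois correspondence for non-field base algebras.
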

\begin{proof}
If $\Aalg$ is a field extension of $k$ then by Galois theory
$\Balg=\Aalg_{G_\Balg}$. If $|k|<(\dim_k\Aalg)^2$ and
$\Aalg$ is not a field then we can find a zero divisor in $\Aalg$
using Berlekamp's deterministic polynomial time algorithm. So
for the rest of the proof we may assume that $|k|\geq (\dim_k\Aalg)^2$ and
then the usual proof of Fact \ref{fac-primitive} gives a deterministic polynomial 
time algorithm for finding a primitive element $x$ of $\Aalg$ over $k$, see 
\cite{GI}. 

Let $|G|=d$. We may assume that the elements $1,x,x^2,\ldots,x^{d-1}$ form a free 
basis of $\Aalg$ over $\Aalg_G$ since otherwise we find a zero divisor in $\Aalg$ 
using the method of Lemma \ref{lem-nonfree}. Let $x^d=\sum_{i=0}^{d-1} a_ix^i$ 
with $a_i\in\Aalg_G$ and let $f(X):=X^d-\sum_{i=0}^{d-1} a_iX^d\in \Aalg_G[X]$.
Obviously $x$ is a root of $f(X)$ and as any $\sigma\in G$ fixes the 
coefficients of $f(X)$ we get that $x^\sigma$ is also a root of $f(X)$.
Again by Lemma \ref{lem-nonfree} we may assume that $\Aalg$ is a $\Balg$-module 
with $\{1,x,\ldots x^{m-1}\}$ as a free basis, where $m:=dim_\Balg\Aalg$. Let
$x^m=\sum_{i=0}^{m-1}b_ix^i$ with $b_i\in\Balg$, thus $x$ is a root of the 
polynomial $g(X):=X^m-\sum_{i=0}^{m-1}b_iX^i\in\Balg[X]$. 

Let us consider $f(X)$ as a polynomial in $\Balg[X]$. As $g(X)$ is monic we can 
apply the usual polynomial division algorithm to obtain polynomials $h(X)$ and 
$r[X]$ from $\Balg(X)$ such that the degree of $h(X)$ is $(d-m)$; the degree of 
$r(X)$ is less than $m$ and $f(X)=g(X)h(X)+r(X)$. We have $r(x)=0$ which together 
with the freeness of the basis $\{1,\ldots,x^{m-1}\}$ implies that $r(X)=0$ and 
$f(X)=g(X)h(X)$. We know from the last paragraph that for all $\sigma\in G$, 
$x^\sigma$ is a root of $g(X)h(X)$. If neither $g(x^\sigma)$ nor $h(x^\sigma)$ is 
zero then we have a pair of zero divisors. If $g(x^\sigma)=0$ then we can perform 
the division of $g(X)$ by $(X-x^\sigma)$ obtaining a polynomial 
$g_1(X)\in \Balg[X]$ with $g(X)=(X-X^\sigma)g_1(X)$ and can then proceed with a new 
automorphism $\sigma'\in G$ and with $g_1(X)$ in place of $g(X)$. In $d$ rounds we 
either find a zero divisor in $\Aalg$ or two disjoint subsets $K, K'$ of $G$ with 
$g(X)=\prod_{\sigma\in K}(X-x^\sigma)$ and $h(X)=\prod_{\sigma'\in K'}(X-x^{\sigma'})$. 
For $\sigma\in K$ let $\phi_\sigma:\Balg[X]\rightarrow \Aalg$ be the homomorphism
which fixes $\Balg$ but sends $X$ to $x^\sigma$. As $g(x^\sigma)=0$, $\phi_\sigma$ 
induces a homomorphism from $\Balg[X]/(g(X))$ to $\Aalg$, which we denote again by 
$\phi_\sigma$. We know that $\phi_1$ is actually an isomorphism 
$\Balg[X]/(g(X))\cong\Aalg$, therefore the maps $\mu_\sigma=\phi_\sigma\circ\phi_1^{-1}$
($\sigma\in K$) are $\Balg$-endomorphisms of $\Aalg$. Note that we can find a zero 
divisor in $\Aalg$ if any $\mu_\sigma$ is not an automorphism, also by Proposition 
\ref{pro-regularaut} we can find a zero divisor in $\Aalg$ if the maps $\mu_\sigma$ 
($\sigma\in K$) generate a non-semiregular group of $\Balg$-automorphisms of $\Aalg$. 
Thus, we can assume that $\mu_\sigma$, for all $\sigma\in K$, generate a semiregular 
group of $\Balg$-automorphisms of $\Aalg$. As $|K|=dim_\Balg\Aalg$ this means, by Lemma
\ref{lem-semireg-ineq}, that the set $\{\mu_\sigma|\sigma\in K\}$ is a group say $H$.
We can as well assume that the group of $k$-automorphisms of $\Aalg$ generated by $G$ and 
$H$ is semiregular, for otherwise we find a zero divisor in $\Aalg$. Again as 
$|G|=dim_k\Aalg$ this means, by Lemma \ref{lem-semireg-ineq}, that $H$ is a subgroup of 
$G$. Thus, by Lemma \ref{lem-semireg-ineq}, $[\Aalg:\Aalg_H]=|H|=|K|=[\Aalg:\Balg]$ which
together with the fact $\Balg\leq \Aalg_H$ gives $\Aalg_H=\Balg$. As $H\leq G_\Balg$ we 
also get $H=G_\Balg$ (if $H<G_\Balg$ then $[\Aalg:\Aalg_H]<$ 
$[\Aalg:\Aalg_{G_\Balg}]\le[\Aalg:\Balg]$ which is a contradiction). Thus, if none of the
above steps yield a zero divisor then $\Balg=\Aalg_{G_\Balg}$.
\end{proof}
\fi

\section{Kummer Extensions and Automorphisms of an Algebra over a
Finite Field}\label{sec-kummer-extn}

In classical field theory a field extension $L$ over $k$ is called a {\em Kummer 
extension} if $k$ has, say, an $r$-th primitive root of unity and $L=k(\sqrt[r]{a})$. 
Kummer extensions are the building blocks in field theory because they have a cyclic 
Galois group. In the previous section we developed a notion of semiregular groups to
mimic the classical notion of Galois groups, now in this section we extend the classical
notion of Kummer extensions to commutative semisimple algebra $\Aalg$ over a finite 
field $k$. The properties of Kummer extensions of $\Aalg$, that we prove in the next 
three subsections, are the reason why we can get polynomial factoring-like results 
without invoking GRH.

\subsection{Kummer-type extensions}

We generalize below several tools and results in field theory, from 
the seminal paper of Lenstra \cite{Len}, to commutative semisimple algebras. 

\smallskip
\noindent
{\bf $k[\zeta_r]$ and $\Delta_r$: }
\iffull
Let $k$ be a finite field and let 
\else
Let
\fi
$r$ be a prime different from $char\ k$. By $k[\zeta_r]$ we denote 
the factor algebra $k[X]/(\sum_{i=1}^{r-1}X^i)$ and $\zeta_r:=X\pmod{\sum_{i=1}^{r-1}X^i}$. 
Then $k[\zeta_r]$ is an $(r-1)$-dimensional $k$-algebra with basis $\{1,\zeta_r,\ldots,\zeta_r^{r-2}\}$ 
and for every integer $a$ coprime to $r$, there exists a unique $k$-automorphism $\rho_a$ of 
$k[\zeta_r]$ which sends $\zeta_r$ to $\zeta_r^a$. Let $\Delta_r$ denote the set of all $\rho_a$'s. 
\iffull

\fi
Clearly, $\Delta_r$ is a group isomorphic to the multiplicative group of integers modulo $r$, therefore 
it is a cyclic group of order $(r-1)$. Note that for $r=2$, we have $\zeta_2=-1$, $\Aalg[\zeta_2]=\Aalg$
and $\Delta_2=\{id\}$.

\smallskip
\noindent
{\bf $\Aalg[\zeta_r]$ and $\Delta_r$: }
Let $\Aalg$ be a commutative semisimple algebra over $k$ then by $\Aalg[\zeta_r]$ we denote 
$\Aalg\otimes_k k[\zeta_r]$. We consider $\Aalg$ as embedded into $\Aalg[\zeta_r]$ via the map 
$x\mapsto x \otimes 1$ and $k[\zeta_r]$ embedded into $\Aalg[\zeta_r]$ via the map 
$x\mapsto 1 \otimes x$. Every element $\rho_a$ of the group $\Delta_r$ can be extended in a unique 
way to an automorphism of $\Aalg[\zeta_r]$ which acts as an identity on $\Aalg$. These extended 
automorphisms of $\Aalg[\zeta_r]$ are also denoted by $\rho_a$ and their group by $\Delta_r$.  
\iffull

\fi
Note that if $\Aalg=\Aalg_1\oplus\ldots\oplus\Aalg_t$ then 
$\Aalg[\zeta_r]=\Aalg_1[\zeta_r]\oplus\ldots\oplus \Aalg_t[\zeta_r]$, thus $\Aalg$'s semisimplicity
implies that $\Aalg[\zeta_r]$ is semisimple as well. We can also easily see the fixed 
points in $\Aalg[\zeta_r]$ of $\Delta_r$ just like Proposition 4.1 of \cite{Len}:

\begin{lemma}\label{lem-delta-fix}
$\Aalg[\zeta_r]_{\Delta_r}=\Aalg$.
\end{lemma}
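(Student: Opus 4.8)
The plan is to produce an $\Aalg$-basis of $\Aalg[\zeta_r]=\Aalg\otimes_k k[\zeta_r]$ that is permuted by $\Delta_r$, after which the fixed subalgebra can be read off directly; this is the algebra analogue of the argument behind Proposition~4.1 of \cite{Len}. The obvious basis $\{1,\zeta_r,\ldots,\zeta_r^{r-2}\}$ is not convenient, because $\rho_b$ sends $\zeta_r^i$ to $\zeta_r^{bi}$ and the exponent then has to be folded back modulo $r$. Instead I would use the defining relation $1+\zeta_r+\cdots+\zeta_r^{r-1}=0$ to rewrite $1=-(\zeta_r+\cdots+\zeta_r^{r-1})$ and pass to the set $\{\zeta_r^a:a=1,\ldots,r-1\}$. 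This is obtained from the standard basis of $k[\zeta_r]$ by an invertible $k$-linear change of coordinates, so $\{1\otimes\zeta_r^a:a=1,\ldots,r-1\}$ is again a free $\Aalg$-basis of $\Aalg[\zeta_r]$. Using $\zeta_r^r=1$ in $k[\zeta_r]$ (which holds since $X^r-1$ is a multiple of the defining polynomial), for $b$ coprime to $r$ the automorphism $\rho_b$ sends $1\otimes\zeta_r^a$ to $1\otimes\zeta_r^{ab\bmod r}$, and since $ab$ is again coprime to $r$ this is a permutation of the basis --- indeed exactly the regular action of the group $\Z_r^{*}\cong\Delta_r$ on itself.

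Given this, I would finish by a one-line computation. Write an arbitrary element of $\Aalg[\zeta_r]$ uniquely as $y=\sum_{a=1}^{r-1}c_a(1\otimes\zeta_r^a)$ with $c_a\in\Aalg$. Then $\rho_b(y)=y$ for every $b$ coprime to $r$ forces $c_{ab\bmod r}=c_a$ for all such $a,b$; specializing $a=1$ shows that all the $c_a$ coincide with a single element $c\in\Aalg$. For such a $y$ we get $y=c\cdot\sum_{a=1}^{r-1}(1\otimes\zeta_r^a)=c\cdot(1\otimes(-1))=-(c\otimes 1)$, which lies in the image of $\Aalg$. The reverse inclusion $\Aalg\subseteq\Aalg[\zeta_r]_{\Delta_r}$ is immediate from the way the extended $\rho_b$'s were defined (they act as the identity on $\Aalg\otimes 1$), so $\Aalg[\zeta_r]_{\Delta_r}=\Aalg$.

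There is no real obstacle here beyond choosing the permutation basis; the only points demanding a line of care are that the exponent arithmetic $a\mapsto ab\bmod r$ genuinely realizes the group law of $\Z_r^{*}$ (so that $\Delta_r$ really acts as the regular representation), and the degenerate case $r=2$, where $k[\zeta_2]=k$, $\zeta_2=-1$, $\Aalg[\zeta_2]=\Aalg$ and $\Delta_2$ is trivial, so the statement holds vacuously and is consistent with the general computation. One could instead check that $\Delta_r$ is semiregular on $k[\zeta_r]$ and deduce the claim from Lemma~\ref{lem-semireg-ineq} together with a base-change argument from $k$ to $\Aalg$, but the direct computation above is shorter. Note also that semisimplicity of $\Aalg$ plays no role in this argument.
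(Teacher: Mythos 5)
Your proof is correct and follows essentially the same route as the paper: both use the free $\Aalg$-basis $\{\zeta_r,\ldots,\zeta_r^{r-1}\}$, observe that $\Delta_r$ permutes it (transitively, in fact regularly, since $r$ is prime), and conclude that a fixed element must have all coefficients equal and hence lies in $\Aalg$. The extra remarks about the $r=2$ degeneracy and the alternative via semiregularity are sound but not needed.
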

\iffull
\begin{proof}
Observe that $\Aalg[\zeta_r]$ is a free $\Aalg$-module with basis $\{\zeta_r,\ldots,\zeta_r^{r-1}\}$.
As $r$ is prime this basis is transitively permuted by $\Delta_r$, thus an 
$x=\sum_{i=1}^{r-1}a_i\zeta_r^i\in\Aalg[\zeta_r]$ is fixed by $\Delta_r$ iff $a_i$'s are equal iff 
$x\in\Aalg$. 
\end{proof}
\fi

Consider the multiplicative group $\Aalg[\zeta_r]^*$ of units in $\Aalg[\zeta_r]$. 

\smallskip
\noindent
{\bf Sylow subgroup $\Aalg[\zeta_r]^*_r$: }
Let $\Aalg[\zeta_r]^*_r$ be the $r$-elements of $\Aalg[\zeta_r]^*$. Note that $\Aalg[\zeta_r]^*_r$ is of an 
$r$-power size and is also the $r$-Sylow subgroup of the group $\Aalg[\zeta_r]^*$. Let 
$|\Aalg[\zeta_r]^*_r|=:r^t$.

\smallskip
\noindent
{\bf Automorphism $\omega(a)$: }
Let $a$ be coprime to $r$. Observe that the residue class of $a^{r^{t-1}}$ modulo ${r^t}$ depends 
only on the residue class of $a$ modulo $r$, because map $a\mapsto a^{r^{t-1}}$ corresponds just to 
the projection of the multiplicative group $\Z_{r^t}^*\cong (\Z_{r-1},+)\oplus(\Z_{r^{t-1}},+)$ to the 
first component. This together with the fact that for any $x\in\Aalg[\zeta_r]^*_r$, $x^{r^t}=1$ we get
that the element $x^{a^{r^{t-1}}}$ depends only on the residue class of $a$ modulo $r$. This 
motivates the definition of the map, following \cite{Len}, $\omega(a):x\mapsto x^{\omega(a)}:=$
$x^{a^{r^{t-1}}}$ from $\Aalg[\zeta_r]^*_r$ to itself. Note that we use the term $\omega(a)$ for 
both the above map as well as the residue of $a^{r^{t-1}}$ modulo $r^t$.
\iffull

\fi
Note that the map $\omega(a)$ is an automorphism of the group $\Aalg[\zeta_r]^*_r$ and it commutes 
with all the endomorphisms of the group $\Aalg[\zeta_r]^*_r$.  Also, the map $a\mapsto \omega(a)$ 
is a group embedding $\Z_r^*\rightarrow \Aut(\Aalg[\zeta_r]^*_r)$.

\smallskip
\noindent
{\bf Teichm\"uller subgroup: }
Notice that if $x\in\Aalg[\zeta_r]$ has order $r^u$ then $x^{\omega(a)}=x^{a^{r^{u-1}}}$. Thus, 
$\omega(a)$ can be considered as an extension of the map $\rho_a$ that raised elements of order $r$ to 
the $a$-th power. The elements on which the actions of $\omega(a)$ and $\rho_a$ are the same, for all 
$a$, form the {\em Teichm\"uller subgroup}, $T_{\Aalg,r}$, of $\Aalg[\zeta_r]^*$:
$$T_{\Aalg,r}:=\{x\in\Aalg[\zeta_r]^*_r\ |\ x^{\rho_a}=x^{\omega(a)}\forevery \rho_a\in\Delta_r\}$$
Note that for $r=2$, $T_{\Aalg,2}$ is just the $2$-Sylow subgroup of $\Aalg^*$.

By \cite{Len}, Proposition 4.2, if $\Aalg$ is a field then $T_{\Aalg,r}$ is cyclic .
\iffull
We show in the following lemma 
\else
The next lemma asserts
\fi
that, in our general case, given a witness of 
non-cylicness of $T_{\Aalg,r}$ we can 
compute a zero divisor in $\Aalg$. 

\begin{lemma}\label{lem-cyclicteich}
Given $u,v\in T_{\Aalg,r}$ such that the subgroup generated by $u$ and $v$ is not cyclic, we can find
a zero divisor in $\Aalg$ in deterministic $poly(r,\log |\Aalg|)$ time.
\end{lemma}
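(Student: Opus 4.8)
The plan is to feed the pair $(u,v)$ to the discrete-logarithm routine of Lemma~\ref{lem-discretelog}, run \emph{inside the algebra} $\Aalg[\zeta_r]$ (which is commutative semisimple, so that lemma applies), and then transport the zero divisor it produces from $\Aalg[\zeta_r]$ down to $\Aalg$ using $\Aalg[\zeta_r]_{\Delta_r}=\Aalg$ (Lemma~\ref{lem-delta-fix}) together with the defining property of Teichm\"uller elements. Concretely: first I would compute the ($r$-power) orders of $u$ and $v$ by repeated $r$-th powering, and reorder so that $\mathrm{ord}(u)\ge\mathrm{ord}(v)$; then I invoke Lemma~\ref{lem-discretelog} on $(u,v)$ in $\Aalg[\zeta_r]$. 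It returns, in $poly(r,\log|\Aalg|)$ time (note $\log|\Aalg[\zeta_r]|=(r-1)\log|\Aalg|$), either an integer $s$ with $u^s=v$ --- which would force $\langle u,v\rangle=\langle u\rangle$ to be cyclic, contradicting the hypothesis, so this cannot occur --- or integers $s,s'$ such that $w:=u^s-v^{s'}$ is a zero divisor of $\Aalg[\zeta_r]$. This is the only place the non-cyclicity of $\langle u,v\rangle$ enters.

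Given such a $w$, I would set $\xi:=u^s(v^{s'})^{-1}$. Since $T_{\Aalg,r}$ is a subgroup of $\Aalg[\zeta_r]^*$ containing $u$ and $v$, we have $\xi\in T_{\Aalg,r}$, and $w=v^{s'}(\xi-1)$ with $v^{s'}$ a unit, so $\xi-1$ is a zero divisor of $\Aalg[\zeta_r]$. As $\Aalg[\zeta_r]$ is a direct sum of fields, this says $\xi\neq 1$ but $\xi=1$ on at least one component field. Put $r^f:=\mathrm{ord}(\xi)$ (so $f\ge 1$) and $\alpha:=\xi^{r^{f-1}}$; then $\alpha\in T_{\Aalg,r}$ has order exactly $r$, it is $1$ on every component where $\xi$ is $1$, and it is $\neq 1$ on the component realizing $\mathrm{ord}(\xi)=r^f$. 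Now I would output
\[
\theta:=1+\alpha+\alpha^2+\cdots+\alpha^{r-1}\ \in\ \Aalg[\zeta_r].
\]
Because $\alpha\in T_{\Aalg,r}$ has order dividing $r$, every $\rho_a\in\Delta_r$ acts by $\alpha^{\rho_a}=\alpha^a$ (this is what the Teichm\"uller condition $\alpha^{\rho_a}=\alpha^{\omega(a)}$ says on elements of order $r$), whence $\theta^{\rho_a}=\sum_{j=0}^{r-1}\alpha^{ja}=\theta$ since $a$ is invertible modulo $r$ and $\alpha^r=1$; thus $\theta\in\Aalg[\zeta_r]_{\Delta_r}=\Aalg$. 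On a component where $\alpha=1$ one has $\theta=r\neq 0$ (as $r\neq\chara k$), while on a component where $\alpha\neq 1$ one has $(\alpha-1)\theta=\alpha^r-1=0$ with $\alpha-1\neq 0$ in that field, forcing $\theta=0$ there. Hence $\theta$ is a nonzero non-unit of $\Aalg$, i.e.\ a zero divisor, and an explicit annihilator (indeed the corresponding nontrivial idempotent of $\Aalg$) is then read off by linear algebra over $k$; every step is $poly(r,\log|\Aalg|)$.

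The routine parts --- the order computations, a handful of group operations in $\Aalg[\zeta_r]$, and the final linear algebra --- are plainly within the time bound, so the only real point is the descent from $\Aalg[\zeta_r]$ to $\Aalg$. That is exactly where the two special features are used: the Teichm\"uller property of $\alpha$ is what makes the $\Delta_r$-symmetrization $\theta$ land in $\Aalg$ at all, and taking $\xi$ of the shape $u^s(v^{s'})^{-1}$ (rather than an arbitrary zero divisor of $\Aalg[\zeta_r]$, whose annihilator need not be $\Delta_r$-stable) is what guarantees $\theta$ is simultaneously nonzero --- witnessed by a component on which $\xi$, hence $\alpha$, equals $1$ --- and a non-unit --- witnessed by a component on which $\alpha$ is a nontrivial $r$-th root of unity. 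I expect verifying these two "sidedness'' conditions on $\theta$ to be the delicate part of writing the argument up cleanly. Finally, the degenerate case $r=2$ needs no descent: there $\zeta_2=-1$, $\Aalg[\zeta_2]=\Aalg$ and $\Delta_2$ is trivial, so the $w$ returned by Lemma~\ref{lem-discretelog} is already a zero divisor of $\Aalg$ (and the construction $\theta=1+\alpha$ still applies verbatim).
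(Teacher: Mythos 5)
Your proof is correct and shares the paper's two pillars---running Lemma~\ref{lem-discretelog} inside $\Aalg[\zeta_r]$ and then using the Teichm\"uller condition together with $\Aalg[\zeta_r]_{\Delta_r}=\Aalg$ (Lemma~\ref{lem-delta-fix}) to descend---but the descent step itself is carried out differently. The paper takes the annihilator ideal $I$ of the zero divisor $z=u^s-v^{s'}$, shows directly that $I$ is $\Delta_r$-invariant (this is where the Teichm\"uller property of $u^s,v^{s'}$ is used, via the computation $e^{\rho_a}(u^s-v^{s'})=\big((eu^s)^{\omega(a^{-1})}-(ev^{s'})^{\omega(a^{-1})}\big)^{\rho_a}=0$), and concludes that the identity element $e$ of $I$ is $\Delta_r$-fixed and therefore lies in $\Aalg$, giving the nontrivial idempotent there. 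You instead manufacture a $\Delta_r$-fixed element by hand: you normalize the zero divisor to $\xi-1$ with $\xi=u^s(v^{s'})^{-1}\in T_{\Aalg,r}$, reduce to the order-$r$ element $\alpha=\xi^{r^{f-1}}$, and take $\theta=\sum_{j=0}^{r-1}\alpha^j$, which is visibly $\Delta_r$-invariant because $\rho_a$ acts on $\alpha$ as $\alpha\mapsto\alpha^a$ and merely permutes the summands; the local analysis then shows $\theta$ is a nonzero non-unit of $\Aalg$. The two routes land on closely related objects (both ultimately produce idempotents supported on components singled out by $\xi$), but yours is slightly more concrete and self-contained---you never need to argue about $\Delta_r$-invariance of an ideal---at the cost of the explicit order computation and the $\xi\mapsto\alpha$ reduction, whereas the paper's version works uniformly for any order of $\xi$ and drops out of a single displayed identity. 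Both stay within $poly(r,\log|\Aalg|)$, and your separate remark on the degenerate case $r=2$ (where $\Aalg[\zeta_2]=\Aalg$ and no descent is needed) is also correct.
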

\iffull
\begin{proof}
Suppose the subgroup generated by $u$ and $v$ is not cyclic. Then, by Lemma \ref{lem-discretelog}
we can efficiently find a zero divisor $z$, in the semisimple algebra $\Aalg[\zeta_r]$, of the form 
$z=(u^s-v^{s^\prime})$. Next we compute the annihilator ideal $I$ of $z$ in $\Aalg[\zeta_r]$ and 
its identity element $e$, thus $I=e\Aalg[\zeta_r]$. If we can show that $I$ is invariant under $\Delta_r$
then $\Delta_r$ is a group of algebra automorphisms of $I$ which of course would fix the identity element 
$e$ of $I$. Thus, $e$ is in $\Aalg[\zeta_r]_{\Delta_r}$ and hence $e$ is in $\Aalg$ by Lemma 
\ref{lem-delta-fix}, so we have a zero divisor in $\Aalg$. 

Now we show that the annihilator ideal $I=e\Aalg[\zeta_r]$ of $z$ in $\Aalg[\zeta_r]$ is invariant under 
$\Delta_r$. By definition $e$ is an idempotent such that $e(u^s-v^{s^\prime})=0$. Observe that for any
$a\in\{1,\ldots,r-1\}$, we have that $(eu^s)^{\omega(a^{-1})}=(ev^{s^\prime})^{\omega(a^{-1})}$. 
Using this together with the fact that $u^s, v^{s^\prime}\in T_{\Aalg,r}$ we obtain
$e^{\rho_a}(u^s-v^{s^\prime})=(e((u^s)^{\rho_a^{-1}}-(v^{s^\prime})^{\rho_a^{-1}}))^{\rho_a}=
(e((u^s)^{\omega(a^{-1})}-(v^{s^\prime})^{\omega(a^{-1})}))^{\rho_a}=
((eu^s)^{\omega(a^{-1})}-(ev^{s^\prime})^{\omega(a^{-1})})^{\rho_a}=0^{\rho_a}=0$.
Thus, for all $a\in\{1,\ldots,r-1\}$, $e^{\rho_a}\in I$ which means that $I$ is invariant under $\Delta_r$.
\end{proof}
\fi

Now we are in a position to define what we call Kummer extension of an algebra $\Aalg$.

\smallskip
\noindent
{\bf Kummer extension $\Aalg[\zeta_r][\sqrt[s]{c}]$: }
For $c\in\Aalg[\zeta_r]^*$ and a power $s$ of $r$, by $\Aalg[\zeta_r][\sqrt[s]{c}]$ we denote 
the factor algebra $\Aalg[\zeta_r][Y]/(Y^s-c)$ and $\sqrt[s]{c}:=Y \pmod{Y^s-c}$. 

\smallskip
\noindent
{\bf Remark.}
Given $c,c_1\in T_{\Aalg,r}$ such that the order of $c$ is greater than or equal to 
the order of $c_1$ and $c_1$ is not a power of $c$, by Lemma \ref{lem-cyclicteich}, 
we can find a zero divisor in $\Aalg$ in $poly(r,\log |\Aalg|)$ time. Therefore,
the really interesting Kummer extensions are of the form $\Aalg[\zeta_r][\sqrt[s]{c}]$, 
where $c\in T_{\Aalg,r}$ and $\zeta_r$ is a power of $\sqrt[s]{c}$.
\smallskip

Clearly, $\Aalg[\zeta_r][\sqrt[s]{c}]$ is a free $\Aalg[\zeta_r]$-module of rank $s$ with basis 
$\{1,\sqrt[s]{c},\ldots,\sqrt[s]{c}^{s-1}\}$. If $c\in T_{\Aalg,r}$ then $\sqrt[s]{c}$ is an 
$r$-element of $\Aalg[\zeta_r][\sqrt[s]{c}]^*$ and for any integer $a$ coprime to $r$, we now
identify an automorphism of the Kummer extension. Extending \cite{Len}, Proposition 4.3, we 
obtain:

\begin{lemma}\label{lem-rhoa}
Let $c\in T_{\Aalg,r}$. Then we can extend every $\rho_a\in\Delta_r$ to a unique automorphism 
of $\Aalg[\zeta_r][\sqrt[s]{c}]$ that sends $\sqrt[s]{c}$ to $(\sqrt[s]{c})^{\omega(a)}$.
\end{lemma}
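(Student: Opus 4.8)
The plan is to build the extended automorphism explicitly as a quotient of a map on the polynomial ring $\Aalg[\zeta_r][Y]$; the only substantive point will be to check that the defining relation $Y^s-c$ is respected, and this is exactly where the Teichm\"uller condition $c^{\rho_a}=c^{\omega(a)}$ enters.

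First I would record the easy preliminaries. Since $c\in T_{\Aalg,r}$ has order a power of $r$, say $c^{r^t}=1$, and $s$ is a power of $r$, the element $\sqrt[s]{c}$ satisfies $(\sqrt[s]{c})^{s r^t}=c^{r^t}=1$; hence $\sqrt[s]{c}$ is a unit of $\Aalg[\zeta_r][\sqrt[s]{c}]$ of $r$-power order and $(\sqrt[s]{c})^{\omega(a)}$ is well defined (if $r^N$ annihilates $\sqrt[s]{c}$, set $(\sqrt[s]{c})^{\omega(a)}:=(\sqrt[s]{c})^{a^{r^{N-1}}}$; this is independent of the choice of such $N$ and agrees with the earlier $\omega(a)$ on $\Aalg[\zeta_r]^*_r$, by the same computation used to introduce $\omega$). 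Uniqueness is then immediate: any automorphism $\sigma$ of $\Aalg[\zeta_r][\sqrt[s]{c}]$ restricting to $\rho_a$ on $\Aalg[\zeta_r]$ is additive and $\rho_a$-semilinear, so on the free basis expansion $\sum_{i=0}^{s-1}b_i(\sqrt[s]{c})^i$ (with $b_i\in\Aalg[\zeta_r]$) it acts by $\sum_i\rho_a(b_i)\,\sigma(\sqrt[s]{c})^i$, hence is completely determined by $\sigma(\sqrt[s]{c})$.

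For existence I would take the ring homomorphism $\psi\colon\Aalg[\zeta_r][Y]\to\Aalg[\zeta_r][\sqrt[s]{c}]$ acting as $\rho_a$ on the coefficient ring $\Aalg[\zeta_r]$ and sending $Y\mapsto(\sqrt[s]{c})^{\omega(a)}$; such a $\psi$ exists because the target is commutative. By definition the kernel of the canonical surjection $\Aalg[\zeta_r][Y]\to\Aalg[\zeta_r][\sqrt[s]{c}]$ is the ideal $(Y^s-c)$, so $\psi$ descends to an endomorphism of $\Aalg[\zeta_r][\sqrt[s]{c}]$ precisely when $\psi(Y^s-c)=0$. Here $\psi(Y^s)=\big((\sqrt[s]{c})^{\omega(a)}\big)^s=\big((\sqrt[s]{c})^s\big)^{\omega(a)}=c^{\omega(a)}$, using that $\omega(a)$ acts as a power and that $(\sqrt[s]{c})^s=c$, while $\psi(c)=\rho_a(c)=c^{\rho_a}$. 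Since $c\in T_{\Aalg,r}$ we have $c^{\rho_a}=c^{\omega(a)}$, so $\psi(Y^s-c)=0$, and we obtain a ring endomorphism $\sigma$ of $\Aalg[\zeta_r][\sqrt[s]{c}]$ with $\sigma|_{\Aalg[\zeta_r]}=\rho_a$ and $\sigma(\sqrt[s]{c})=(\sqrt[s]{c})^{\omega(a)}$.

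It then remains to see $\sigma$ is an automorphism. Its image is a subalgebra containing $\rho_a(\Aalg[\zeta_r])=\Aalg[\zeta_r]$ (as $\rho_a$ is onto) together with $(\sqrt[s]{c})^{\omega(a)}$; since $\omega(a)$ is coprime to $r$ and $\sqrt[s]{c}$ has $r$-power order, $\sqrt[s]{c}$ is itself a power of $(\sqrt[s]{c})^{\omega(a)}$, so the image also contains $\sqrt[s]{c}$ and hence is all of $\Aalg[\zeta_r][\sqrt[s]{c}]$. As this algebra is finite, the surjective endomorphism $\sigma$ is injective, hence an automorphism, which completes the argument. The only point requiring care is the bookkeeping of the power maps $\omega(\cdot)$ across $\Aalg[\zeta_r]^*_r$, $T_{\Aalg,r}$, and the larger $r$-Sylow subgroup containing $\sqrt[s]{c}$: one must be sure that the $\omega(a)$ appearing in $\big((\sqrt[s]{c})^{\omega(a)}\big)^s=c^{\omega(a)}$ is the same as the one in $c^{\rho_a}=c^{\omega(a)}$; once that is pinned down the rest is purely formal.
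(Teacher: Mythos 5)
Your proof is correct and follows essentially the same route as the paper: define the lift on the polynomial ring $\Aalg[\zeta_r][Y]$, verify via the Teichm\"uller condition $c^{\rho_a}=c^{\omega(a)}$ that $Y^s-c$ is sent to $0$, and descend. The only divergence is at the end: you show the descended endomorphism is an automorphism by observing its image contains $\Aalg[\zeta_r]$ and (since $\omega(a)$ is coprime to $r$ and $\sqrt[s]{c}$ has $r$-power order) also $\sqrt[s]{c}$, hence it is surjective and, the algebra being finite, bijective. The paper instead notes that $\tilde\rho_b\cdot\tilde\rho_{b'}=\tilde\rho_{bb'}$ and uses $\tilde\rho_a\cdot\tilde\rho_{a^{-1}}=\tilde\rho_1=\mathrm{id}$ to conclude $\tilde\rho_a$ is invertible; that version has the mild advantage of simultaneously establishing that the extended maps form a group isomorphic to $\Delta_r$, a fact used implicitly later. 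Your concern about the bookkeeping of $\omega(a)$ is well placed and resolved exactly as you suggest: taking $\omega(a)$ to be the coherent Teichm\"uller representative $a^{r^{N-1}}\bmod r^N$ for any $r^N$ annihilating $\sqrt[s]{c}$, the reduction mod $r^t$ (the exponent of $\Aalg[\zeta_r]^*_r$) recovers the $\omega(a)$ appearing in the definition of $T_{\Aalg,r}$, so the equality $\big((\sqrt[s]{c})^{\omega(a)}\big)^s=c^{\omega(a)}=c^{\rho_a}$ uses one and the same exponent.
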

\iffull
\begin{proof}
For a $\rho_a\in\Delta_r$ let $\tilde{\rho}_a$ denote the map from $\Aalg[\zeta_r][Y]$ to 
$\Aalg[\zeta_r][\sqrt[s]{c}]$ that fixes $\Aalg$, sends $\zeta_r$ to $\zeta_r^a$ and $Y$ to 
$(\sqrt[s]{c})^{\omega(a)}$. As $c\in T_{\Aalg,r}$, $\tilde{\rho}_a$ maps $c$ to 
$c^{\omega(a)}$ and thus maps $(Y^s-c)$ to zero. This means that $\tilde{\rho}_a$ can be seen 
as an endomorphism of $\Aalg[\zeta_r][\sqrt[s]{c}]$ that sends $\sqrt[s]{c}$ to 
$(\sqrt[s]{c})^{\omega(a)}$. Clearly, $\tilde{\rho}_b\cdot\tilde{\rho}_{b^\prime}$ is the same 
endomorphism as $\tilde{\rho}_{bb^\prime}$ if $b, b^\prime$ are both coprime to $r$. Now as
$\tilde{\rho}_a\cdot\tilde{\rho}_{a^{-1}}=\tilde{\rho}_1$ is the identity automorphism of 
$\Aalg[\zeta_r][\sqrt[s]{c}]$ we get that $\tilde{\rho}_a$ is also an automorphism of
$\Aalg[\zeta_r][\sqrt[s]{c}]$, completing the proof. In the rest of the paper we will use
$\rho_a$ also to refer to the automorphism $\tilde{\rho}_a$.
\end{proof}
\fi

We saw above automorphisms of the Kummer extension $\Aalg[\zeta_r][\sqrt[s]{c}]$ that
fixed $\Aalg$. When $s=r$ we can also identify automorphisms that fix $\Aalg[\zeta_r]$:

\begin{proposition}\label{pro-rootext}
Let $c\in T_{\Aalg,r}$ and $\Delta_r$ be the automorphisms of 
$\Aalg[\zeta_r][\sqrt[s]{c}]$ identified in Lemma \ref{lem-rhoa}. Then there is a 
unique automorphism $\sigma$ of $\Aalg[\zeta_r][\sqrt[r]{c}]$ such that: 
\\(1) $\sigma$ fixes $\Aalg[\zeta_r]$ and maps $\sqrt[r]{c}$ to $\zeta_r\sqrt[r]{c}$. 
\\(2) $\sigma$ commutes with the action of $\Delta_r$. 
\\(3) $\sigma$ is a semiregular automorphism of $\Aalg[\zeta_r][\sqrt[r]{c}]_{\Delta_r}$ 
of order $r$ and $(\Aalg[\zeta_r][\sqrt[r]{c}]_{\Delta_r})_\sigma=\Aalg$.
\end{proposition}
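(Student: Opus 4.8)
The plan is to build $\sigma$ by the same trick used in Lemma \ref{lem-rhoa}: define a map on the polynomial ring $\Aalg[\zeta_r][Y]$ and check it factors through the relation $Y^r=c$. Concretely, let $\tilde\sigma$ be the $\Aalg[\zeta_r]$-algebra endomorphism of $\Aalg[\zeta_r][Y]$ sending $Y$ to $\zeta_r Y$. Since $c\in T_{\Aalg,r}\subseteq \Aalg[\zeta_r]^*$ is fixed by $\tilde\sigma$, we have $\tilde\sigma(Y^r-c)=\zeta_r^r Y^r-c=Y^r-c$, so $\tilde\sigma$ descends to an endomorphism $\sigma$ of $\Aalg[\zeta_r][\sqrt[r]{c}]$ with $\sigma(\sqrt[r]{c})=\zeta_r\sqrt[r]{c}$. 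Applying $\sigma$ $r$ times gives the identity (because $\zeta_r^r=1$), so $\sigma$ is an automorphism of order dividing $r$, and it is not the identity since $\zeta_r\neq 1$ in $k[\zeta_r]$; hence its order is exactly $r$. Uniqueness is immediate: an $\Aalg[\zeta_r]$-automorphism is determined by the image of $\sqrt[r]{c}$, which generates the algebra over $\Aalg[\zeta_r]$. This settles the existence/uniqueness part and the first half of (1).

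Next I would verify (2), that $\sigma$ commutes with each $\rho_a\in\Delta_r$. Both maps are determined by their effect on the generator $\sqrt[r]{c}$ and on $\zeta_r$, so it suffices to compute on these. We have $\rho_a\sigma(\sqrt[r]{c}) = \rho_a(\zeta_r\sqrt[r]{c}) = \zeta_r^a\,(\sqrt[r]{c})^{\omega(a)}$, while $\sigma\rho_a(\sqrt[r]{c}) = \sigma((\sqrt[r]{c})^{\omega(a)}) = (\zeta_r\sqrt[r]{c})^{\omega(a)} = \zeta_r^{\omega(a)}(\sqrt[r]{c})^{\omega(a)}$; since $\sqrt[r]{c}$ has $r$-power order and $\omega(a)\equiv a\pmod r$, we get $\zeta_r^{\omega(a)}=\zeta_r^a$ (using that $\zeta_r$ has order $r$), so the two agree. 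On $\zeta_r$ both compositions give $\zeta_r^a$ since $\sigma$ fixes $\Aalg[\zeta_r]$. Hence $\sigma\rho_a=\rho_a\sigma$. Because $\sigma$ commutes with $\Delta_r$, it maps the fixed algebra $\Aalg[\zeta_r][\sqrt[r]{c}]_{\Delta_r}$ to itself, so $\sigma$ restricts to an automorphism of that subalgebra, giving the rest of (1).

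The heart of the proposition is (3): that this restriction of $\sigma$ is semiregular of order $r$ and has fixed subalgebra exactly $\Aalg$. For the order, I note $\sigma|_{\Aalg[\zeta_r][\sqrt[r]{c}]_{\Delta_r}}$ has order dividing $r$; it is nontrivial because the element $\sqrt[r]{c}\cdot(\sqrt[r]{c})$-type combinations, or more simply $(\sqrt[r]{c})^{r-1}\zeta_r$-symmetrizations, lie in the fixed algebra and are genuinely moved — here is where I must argue carefully that the $\Delta_r$-fixed part is still big enough to detect $\sigma$. The cleanest route: $\sigma$ has order $r$ on the whole Kummer extension, and by Lemma \ref{lem-semireg-ineq} applied to the cyclic group $\langle\sigma\rangle$ on $\Aalg[\zeta_r][\sqrt[r]{c}]$, semiregularity on the whole algebra is equivalent to freeness of rank $r$ over the fixed ring; the basis $\{1,\sqrt[r]{c},\dots,(\sqrt[r]{c})^{r-1}\}$ is permuted (up to the $\zeta_r^i$ units) by $\langle\sigma\rangle$, and since $r$ is prime this action is free, so $\sigma$ is semiregular on $\Aalg[\zeta_r][\sqrt[r]{c}]$ with fixed ring $\Aalg[\zeta_r]$ (this matches the classical Kummer picture and Lenstra \cite{Len}, Prop.~4.3). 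Then I would combine this with the $\Delta_r$-action: the fixed ring of $\langle\sigma\rangle$ on the full algebra is $\Aalg[\zeta_r]$, and intersecting with $\Aalg[\zeta_r][\sqrt[r]{c}]_{\Delta_r}$ gives $(\Aalg[\zeta_r])_{\Delta_r}=\Aalg$ by Lemma \ref{lem-delta-fix}. For semiregularity of the \emph{restricted} $\sigma$, I would use that a fix-free automorphism restricted to a subalgebra on which it still acts with order $r$ stays fix-free (any nontrivial ideal of the subalgebra fixed by $\sigma$ would, via the idempotent argument of Fact \ref{fac-comm-semisimple} and \ref{fac-idempotent}, produce a $\sigma$-fixed idempotent, hence a $\sigma$-fixed nontrivial ideal of the big algebra, contradicting fix-freeness there). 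The main obstacle I anticipate is precisely this last point — ensuring that passing to the $\Delta_r$-fixed subalgebra does not accidentally trivialize $\sigma$'s action or destroy semiregularity — and the rank/dimension bookkeeping via Lemma \ref{lem-semireg-ineq} is the tool that makes it go through, since $[\Aalg[\zeta_r][\sqrt[r]{c}]_{\Delta_r}:\Aalg]=r$ forces the fixed ring of $\langle\sigma\rangle$ inside it to have dimension exactly $\tfrac1r$ of it, i.e.\ to equal $\Aalg$.
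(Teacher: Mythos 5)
Your construction of $\sigma$, the uniqueness argument, the order computation, and the verification that $\sigma$ commutes with each $\rho_a$ (using $\omega(a)\equiv a\pmod r$) are all correct and match the paper. The gap is in part (3), specifically your argument that semiregularity passes to the restriction.

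Your key claim --- ``a fix-free automorphism restricted to a subalgebra on which it still acts with order $r$ stays fix-free, because a $\sigma$-fixed nontrivial ideal of the subalgebra yields a $\sigma$-fixed nontrivial ideal of the big algebra'' --- is false as stated. A $\sigma$-fixed idempotent $e$ of the subalgebra does give a $\sigma$-\emph{invariant} ideal $e\Dalg$ of the big algebra, but not one that is \emph{pointwise} fixed, which is what fix-freeness (and the paper's notion of ``$\sigma$ fixes $I$'') requires. A concrete counterexample to the general principle: take $\Dalg=k^4$, $\sigma$ the permutation $(1\,2)(3\,4)$ (semiregular), and $\Balg=\{(a,a,b,c)\}$; then $\sigma$ acts nontrivially of order $2$ on $\Balg$, yet the ideal $\{(a,a,0,0)\}$ of $\Balg$ is pointwise $\sigma$-fixed, so $\sigma|_\Balg$ is not semiregular. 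Your fallback --- that $[\Aalg[\zeta_r][\sqrt[r]{c}]_{\Delta_r}:\Aalg]=r$ plus Lemma~\ref{lem-semireg-ineq} forces the conclusion --- is the right idea, but you have not established this rank is $r$; that equality is itself equivalent (again by Lemma~\ref{lem-semireg-ineq}) to $\Delta_r$ being semiregular on $\Aalg[\zeta_r][\sqrt[r]{c}]$, which you never prove.

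The paper closes this gap by considering the abelian group $G=\langle\sigma\rangle\times\Delta_r$ of order $r(r-1)$ on $\Dalg:=\Aalg[\zeta_r][\sqrt[r]{c}]$, computing $\Dalg_G=(\Dalg_\sigma)_{\Delta_r}=\Aalg[\zeta_r]_{\Delta_r}=\Aalg$, and observing $\dim_k\Dalg=r(r-1)\dim_k\Aalg=|G|\dim_k\Dalg_G$; Lemma~\ref{lem-semireg-ineq} then gives that $G$ is semiregular, hence so are its subgroups $\Delta_r$ and $\langle\sigma\rangle$. Semiregularity of $\Delta_r$ yields $\dim_k\Balg=r\dim_k\Aalg$, and since $\Balg_\sigma=\Aalg$, another application of Lemma~\ref{lem-semireg-ineq} to $\langle\sigma\rangle$ acting on $\Balg$ gives both the order $r$ and semiregularity. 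In short: it is the semiregularity of the \emph{joint} group $G$ that makes the restriction go through, not a generic hereditary property of fix-freeness.
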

\iffull
\begin{proof}
The map fixing $\Aalg[\zeta_r]$ and mapping $Y$ to $\zeta_r Y$ is clearly an automorphism
of $\Aalg[\zeta_r][Y]/$ $(Y^r-c)$. Thus implying the existence and uniqueness of $\sigma$.

Let $\rho_a\in\Delta_r$ be an automorphism of $\Aalg[\zeta_r][\sqrt[r]{c}]$. Clearly, the 
action of $\sigma$ and $\rho_a$ is commutative on any element $x\in\Aalg[\zeta_r]$. Also,
$(\sqrt[r]{c})^{\sigma\rho_a}=(\zeta_r\sqrt[r]{c})^{\rho_a}=
(\zeta_r\sqrt[r]{c})^{\omega(a)}=\zeta_r^{\omega(a)}(\sqrt[r]{c})^{\omega(a)}=
((\sqrt[r]{c})^{\omega(a)})^\sigma=(\sqrt[r]{c})^{\rho_a\sigma}$. This implies the
commutativity of the actions of $\sigma$ and $\Delta_r$ on $\Aalg[\zeta_r][\sqrt[r]{c}]$.

{}From commutativity it follows that $(\Aalg[\zeta_r][\sqrt[r]{c}]_{\Delta_r})^\sigma=
\Aalg[\zeta_r][\sqrt[r]{c}]_{\Delta_r}$, thus $\sigma$ is an automorphism of 
$\Aalg[\zeta_r][\sqrt[r]{c}]_{\Delta_r}$.
Let $G$ be the group generated by $\Delta_r$ and $\sigma$. Then $G$ is a commutative 
group of order $r(r-1)$. As $\Aalg[\zeta_r][\sqrt[r]{c}]_G=
(\Aalg[\zeta_r][\sqrt[r]{c}]_\sigma)_{\Delta_r}=\Aalg[\zeta_r]_{\Delta_r}=\Aalg$, 
Lemma \ref{lem-semireg-ineq} implies that $G$ is semiregular on 
$\Aalg[\zeta_r][\sqrt[r]{c}]$. But then the subgroup $\Delta_r$ is semiregular as well
and by Lemma \ref{lem-semireg-ineq}: $\dim_k\Aalg[\zeta_r][\sqrt[r]{c}]_{\Delta_r}=
\dim_k\Aalg[\zeta_r][\sqrt[r]{c}]/|\Delta_r|=r\dim_k\Aalg=|(\sigma)|\dim_k\Aalg$.
This again implies that $\sigma$ is a semiregular automorphism of 
$\Aalg[\zeta_r][\sqrt[r]{c}]_{\Delta_r}$.
\end{proof}
\fi

\subsection{$\Aalg$ and the Kummer extension of $\Aalg_\tau$, where 
$\tau\in Aut_k(\Aalg)$}

In this subsection we show how to express $\Aalg[\zeta_r]$ as a Kummer
extension of $\Aalg_\tau$ given a semiregular $\tau\in Aut_k(\Aalg)$
of order $r$. 
The Lagrange resolvent technique of \cite{ro1} remains applicable in our context as 
well and leads to the following:

\begin{lemma}\label{lem-auteigen}
Given a commutative semisimple algebra $\Aalg$ over a finite field $k$, a 
$k$- automorphism $\tau$ of $\Aalg$ of prime order $r\not=char\ k$ and a root 
$\xi\in\Aalg_\tau$ of the cyclotomic polynomial $\frac{X^r-1}{X-1}$. We can find 
in deterministic $poly(r,\log |\Aalg|)$ time a nonzero $x\in\Aalg$ such that 
$x^\tau= \xi x$. 
\end{lemma}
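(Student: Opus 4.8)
The plan is to use the classical Lagrange resolvent, exactly as in \cite{ro1}. For $a\in\Aalg$ put
$$L(a):=\sum_{i=0}^{r-1}\xi^{-i}a^{\tau^i}.$$
First, $\xi$ is a unit of $\Aalg$ with $\xi^r=1$: from $(\xi-1)\sum_{i=0}^{r-1}\xi^i=\xi^r-1$ and the hypothesis $\sum_{i=0}^{r-1}\xi^i=0$ we get $\xi^r=1$. Since $\tau$ fixes $\xi$ and $\tau^r=\mathrm{id}$,
$$L(a)^\tau=\sum_{i=0}^{r-1}\xi^{-i}a^{\tau^{i+1}}=\xi\sum_{i=0}^{r-1}\xi^{-(i+1)}a^{\tau^{i+1}}=\xi\,L(a)$$
for \emph{every} $a$, the last equality being a re-indexing that uses $\tau^r=\mathrm{id}$ and $\xi^r=1$. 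Thus $a\mapsto L(a)$ is a $k$-linear map from $\Aalg$ into the $\xi$-eigenspace $W:=\{x\in\Aalg\mid x^\tau=\xi x\}$; once we know $W\neq 0$ we simply solve the homogeneous linear system $x^\tau=\xi x$ over $k$ and output any nonzero solution (equivalently, evaluate $L$ on the elements of a $k$-basis of $\Aalg$ until a nonzero value appears). Writing $\tau$ as a matrix over $k$, all of this runs in $poly(r,\log|\Aalg|)$ time.

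So the whole point is the non-vanishing $W\neq 0$, i.e.\ that the $k$-endomorphism $\sum_{i=0}^{r-1}\xi^{-i}\tau^i$ of $\Aalg$ is nonzero. I would prove this by passing to the component fields: write $\Aalg=\bigoplus_j K_j$ as a direct sum of fields (Fact \ref{fac-comm-semisimple}) with primitive idempotents $e_j$; then $\tau$ permutes the $K_j$, and since $r$ is prime every orbit has size $1$ or $r$. If some orbit has size $r$, take $a=e_{j_0}$ with $e_{j_0}$ in that orbit: only the $i=0$ term of $e_{j_0}L(a)$ survives (because $e_{j_0}e_{j_0}^{\tau^i}=0$ for $0<i<r$), so $e_{j_0}L(a)=e_{j_0}\neq 0$. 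If instead every orbit is a singleton, then $\tau$ restricts to a $k$-automorphism of each $K_j$, and since $\tau\neq\mathrm{id}$ it restricts to one of order exactly $r$ on some $K_{j_0}$; on $K_{j_0}$ the maps $1,\tau,\dots,\tau^{r-1}$ are then distinct field automorphisms, hence linearly independent over $K_{j_0}$ by Dedekind's lemma on independence of characters, so $\sum_{i=0}^{r-1}\xi^{-i}\tau^i$ --- a combination with the nonzero coefficients $\xi^{-i}e_{j_0}$ --- is already nonzero on $K_{j_0}$. (Uniformly: after base change to $\overline{k}$ one has $\Aalg\otimes_k\overline{k}\cong\overline{k}^{\,N}$ with $\tau$ acting as a coordinate permutation of order $r$, which contains an $r$-cycle, and the resolvent of a coordinate idempotent in that cycle is visibly nonzero; a $k$-endomorphism of $\Aalg$ vanishes iff its base change does.) Either way $W\neq 0$.

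The step I expect to be the only non-routine one is precisely this non-vanishing: the resolvent \emph{identity} $L(a)^\tau=\xi L(a)$ is automatic, but obtaining a \emph{nonzero} $\xi$-eigenvector genuinely uses both that $\Aalg$ is semisimple --- a product of fields on which $\tau$ acts by permuting components and by honest field automorphisms --- and that $\tau$ has full prime order $r$; without the latter, $\tau$ could miss the eigenvalue $\xi$ altogether and no such $x$ would exist.
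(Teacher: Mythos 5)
Your proof is correct, and like the paper's it is built on the Lagrange resolvent, but the two arguments handle the crucial non-vanishing step differently. You fix the eigenvalue $\xi$ once and for all (taking coefficients $\xi^{-i}$) and then prove non-vanishing structurally: decompose $\Aalg$ into component fields (or base-change to $\overline{k}$), observe that $\tau$ permutes the primitive idempotents, and split into the ``$\tau$ has an $r$-cycle on idempotents'' case (where the resolvent of an idempotent in that cycle is visibly nonzero) and the ``$\tau$ fixes all idempotents'' case (where Dedekind's independence of characters does the work). The paper instead avoids any appeal to the structure theorem or to Dedekind: it picks a concrete $y\in\Aalg\setminus\Aalg_\tau$, forms all $r$ resolvents $(y,\xi^j)=\sum_i\xi^{ij}y^{\tau^i}$, and uses the elementary identity $\sum_{j=0}^{r-1}(y,\xi^j)=ry\notin\Aalg_\tau$ (valid because $r\ne\chara k$) to conclude that some $(y,\xi^j)$ with $j\ne 0$ lies outside $\Aalg_\tau$, hence is nonzero; since that element is a $\xi^{-j}$-eigenvector, it is raised to the power $(-j)^{-1}\bmod r$ (using semisimplicity to preserve nonzeroness) to land in the $\xi$-eigenspace. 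So the paper's route is a uniform, purely computational pigeonhole argument that also directly yields the output element, whereas yours buys a slightly slicker linear-algebra existence statement (``$W\neq 0$, now solve $x^\tau=\xi x$'') at the cost of invoking the decomposition into fields and Dedekind's lemma. Both are valid and run in $poly(r,\log|\Aalg|)$ time.
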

\iffull
\begin{proof}
Observe that if $\xi\in\Aalg$ is a root of $1+X+\ldots+X^{r-1}$
then so is every power $\xi^i\;\;(i=1,\ldots,r-1)$.
Take an element $y\in\Aalg\setminus\Aalg_\tau$ and compute the 
{\em Lagrange-resolvents} for $0\le j\le r-1$: 
$$(y,\xi^j):=\sum_{i=0}^{r-1}\xi^{ij}y^{\tau^i}$$
It is easy to see that $(y,\xi^0)=y+y^\tau+\ldots+y^{\tau^{r-1}}\in\Aalg_\tau$
as $\tau^r=id$, while $\sum_{j=0}^{r-1}(y,\xi^j)=
ry+\sum_{i=1}^{r-1}\sum_{j=0}^{r-1}\xi^{ij}y^{\tau^{i}}=
ry+\sum_{i=1}^{r-1}y^{\tau^{i}}\sum_{j=0}^{r-1}(\xi^{i})^{j}=
ry\not\in \Aalg_\tau$.
It follows that for some $1\le j\le(r-1)$, $(y,\xi^j)\not\in\Aalg_\tau$, fix this 
$j$. In particular, $(y,\xi^j)\not=0$ and taking $l:=(-j)^{-1}\pmod{r}$ we find 
$x:=(y,\xi^j)^l$ is also nonzero as commutative semisimple algebras do not contain 
nilpotent elements. This $x$ is then the element promised in the claim as:
$x^\tau=((y,\xi^j)^\tau)^l=(\xi^{-j}(y,\xi^j))^l=\xi x.$
\end{proof}
\fi

We now proceed to describe an algorithm that given a $k$-automorphism $\tau$ of 
$\Aalg$ of prime order $r$, expresses $\Aalg[\zeta_r]$ as a Kummer extension of 
$\Aalg_\tau$.

\smallskip
\noindent
{\bf Embedding $Aut_k(\Aalg)$ in $Aut_k(\Aalg[\zeta_r])$: }
Given a semiregular automorphism $\tau$ of $\Aalg$ we extend $\tau$ to an 
automorphism of $\Aalg[\zeta_r]$ by letting $\zeta_r^\tau:=\zeta_r$. It is easy to 
see that the extension (denoted again by $\tau$) is a semiregular automorphism of 
$\Aalg[\zeta_r]$ as well and it commutes with $\Delta_r$.

Application of Lemma~\ref{lem-auteigen}, techniques from
\cite{Len} and a careful treatment of cases when we find zero divisors, give the following.

\begin{proposition}\label{pro-teichresolv}
Given a commutative semisimple algebra $\Aalg$ over a finite field $k$ together 
with a semiregular $k$-automorphism $\tau$ of $\Aalg$ of prime order 
$r\not=char\ k$, we can find in deterministic $poly(\log |\Aalg|)$ time an 
element $x\in T_{\Aalg,r}$ such that $x^\tau=\zeta_r x$.

Any such $x$ satisfies $c:=x^r\in T_{\Aalg_\tau,r}$ and defines an isomorphism 
$\phi:\Aalg_\tau[\zeta_r][\sqrt[r]{c}]\cong\Aalg[\zeta_r]$ 
which fixes $\Aalg_\tau[\zeta_r]$. Also $\phi$ commutes with the action of 
$\Delta_r$, therefore inducing an isomorphism 
$(\Aalg_\tau[\zeta_r][\sqrt[r]{c}])_{\Delta_r}\cong\Aalg$. 
\end{proposition}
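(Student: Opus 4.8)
The plan is to split the argument into an algorithmic part, which produces the element $x$ and is where all the delicacy lies, and a structural part, which shows that \emph{any} $x\in T_{\Aalg,r}$ with $x^\tau=\zeta_r x$ induces the stated isomorphisms and is a routine verification.

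For the algorithmic part I would first extend $\tau$ to a $k$-automorphism of $\Aalg[\zeta_r]$ by $\zeta_r^\tau=\zeta_r$; as noted before the statement this extension is again semiregular of order $r$ and commutes with $\Delta_r$, and $\zeta_r\in\Aalg[\zeta_r]_\tau=\Aalg_\tau[\zeta_r]$ is a root of $(X^r-1)/(X-1)$. Lemma~\ref{lem-auteigen} applied to $(\Aalg[\zeta_r],\tau,\zeta_r)$ then produces, in $\mathrm{poly}(r,\log|\Aalg|)$ time, a nonzero $y\in\Aalg[\zeta_r]$ with $y^\tau=\zeta_r y$. In the principal case where $y$ is a unit, replace $y$ by a suitable power $w$ of it (killing the non-$r$-part; possible since $\gcd(r,|\Aalg[\zeta_r]^*|/|\Aalg[\zeta_r]^*_r|)=1$) so that $w\in\Aalg[\zeta_r]^*_r$ and still $w^\tau=\zeta_r w$; note $w\neq1$ since $w^\tau=w$ would force $\zeta_r=1$. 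Now apply the Teichm\"uller projector: writing $|\Aalg[\zeta_r]^*_r|=r^t$, the integer $r-1$ is invertible modulo $r^t$, so $e_\omega:=(r-1)^{-1}\sum_{a}\omega(a)^{-1}\rho_a$ is a well-defined operation on the $\Z/r^t$-module $\Aalg[\zeta_r]^*_r$; since $\Delta_r$ is cyclic of order $r-1$ coprime to $r$ and $a\mapsto\omega(a)$ is one of its characters into $(\Z/r^t)^*$, this $e_\omega$ is exactly the idempotent cutting out the $\omega$-isotypic summand of $\Aalg[\zeta_r]^*_r$, which is $T_{\Aalg,r}$. Hence $x:=w^{e_\omega}=\prod_a(w^{\rho_a})^{c_a}$, with $c_a\equiv(r-1)^{-1}\omega(a)^{-1}\bmod r^t$, lies in $T_{\Aalg,r}$. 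Since $\tau$ commutes with every $\rho_a$, $\zeta_r^{\rho_a}=\zeta_r^a$ and $w^\tau=\zeta_r w$, one gets $x^\tau=\zeta_r^{\sum_a ac_a}x$, and $\sum_a ac_a\equiv(r-1)^{-1}(r-1)\equiv1\pmod r$, so $x^\tau=\zeta_r x$ (and $x\neq1$). All of this is fast exponentiation and linear algebra over $k$.

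The main obstacle is the ``careful treatment of cases when we find zero divisors'': the resolvent $y$, or its $r$-part $w$, need not be a unit. When it is not, its support is a $\tau$-invariant ideal $e\Aalg[\zeta_r]$ with $e^\tau=e$, hence $e\in\Aalg_\tau[\zeta_r]$, and $\tau$ restricts there to a semiregular order-$r$ automorphism of an algebra in which $y$ \emph{is} a unit; passing to the join of the $\Delta_r$-translates of $e$ one gets a $(\tau,\Delta_r)$-invariant idempotent, which by Lemma~\ref{lem-delta-fix} lies in $\Aalg$, so either it is trivial and we are in the unit case on all of $\Aalg[\zeta_r]$, or it is nontrivial and gives a zero divisor of $\Aalg$ (which one feeds back, as in the discussion around Lemma~\ref{lem-cyclicteich}). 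Making this descent terminate in polynomially many steps, rather than the algebraic identities themselves, is the part that takes work.

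For the structural part, fix any $x\in T_{\Aalg,r}$ with $x^\tau=\zeta_r x$ and set $c:=x^r$. Then $c^\tau=\zeta_r^r x^r=c$, so $c\in\Aalg[\zeta_r]_\tau=\Aalg_\tau[\zeta_r]$; $c$ is an $r$-element; and $c\in T_{\Aalg_\tau,r}$, because the $r$-th power map is an endomorphism of $\Aalg[\zeta_r]^*_r$ and therefore commutes with every $\omega(a)$, so that $c^{\rho_a}=(x^{\rho_a})^r=(x^{\omega(a)})^r=(x^r)^{\omega(a)}=c^{\omega(a)}$ for all $\rho_a\in\Delta_r$. Let $\phi\colon\Aalg_\tau[\zeta_r][\sqrt[r]{c}]=\Aalg_\tau[\zeta_r][Y]/(Y^r-c)\to\Aalg[\zeta_r]$ be the $\Aalg_\tau[\zeta_r]$-algebra map with $Y\mapsto x$, which is well defined as $x^r=c$. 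Since $\tau$ is $\Aalg_\tau[\zeta_r]$-linear of order $r$ and $\Aalg_\tau[\zeta_r]$ contains the primitive $r$-th root $\zeta_r$, the Chinese remainder theorem (as in the proof of Proposition~\ref{pro-rootext}) gives $\Aalg[\zeta_r]=\bigoplus_{i=0}^{r-1}V_i$ with $V_i=\{v:v^\tau=\zeta_r^i v\}$ and $V_0=\Aalg_\tau[\zeta_r]$; as $x\in V_1$ is a unit, $V_i=V_0x^i$ for every $i$, so $\{1,x,\dots,x^{r-1}\}$ is a free $\Aalg_\tau[\zeta_r]$-basis of $\Aalg[\zeta_r]$, which $\phi$ matches with the standard basis $\{1,Y,\dots,Y^{r-1}\}$; hence $\phi$ is an isomorphism, and it fixes $\Aalg_\tau[\zeta_r]$ by construction. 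Finally $\phi$ commutes with $\Delta_r$: it suffices to check this on the $k$-algebra generators $\Aalg_\tau$ (fixed by $\rho_a$ on both sides), $\zeta_r$ ($\mapsto\zeta_r^a$ on both sides) and $\sqrt[r]{c}$, where by Lemma~\ref{lem-rhoa} $\rho_a(\sqrt[r]{c})=(\sqrt[r]{c})^{\omega(a)}$, so $\phi(\rho_a(\sqrt[r]{c}))=x^{\omega(a)}=x^{\rho_a}=\rho_a(\phi(\sqrt[r]{c}))$, the middle equality being exactly the membership $x\in T_{\Aalg,r}$. Therefore $\phi$ carries $\Delta_r$-fixed points to $\Delta_r$-fixed points, and by Lemma~\ref{lem-delta-fix} it restricts to an isomorphism $(\Aalg_\tau[\zeta_r][\sqrt[r]{c}])_{\Delta_r}\cong\Aalg[\zeta_r]_{\Delta_r}=\Aalg$.
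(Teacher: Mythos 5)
Your structural half is correct and is in fact cleaner than the paper's argument. Where the paper proves injectivity of $\phi$ by the ad hoc manipulation of applying $\tau^j$ and summing, and then proves surjectivity by showing $\tau$ is semiregular on the image and invoking Lemma~\ref{lem-semireg-ineq}, you obtain both at once from the $\tau$-eigenspace decomposition $\Aalg[\zeta_r]=\bigoplus_{i=0}^{r-1}V_i$, the identification $V_i=V_0 x^i$ coming from the fact that $x$ is a unit in $V_1$, and the resulting free basis $\{1,x,\dots,x^{r-1}\}$. (One should say a word on why the eigenspace decomposition exists over the base $\Aalg_\tau[\zeta_r]$, which is not a field: it comes down to $\zeta_r^i-\zeta_r^j$ being a unit for $i\not\equiv j\pmod r$, which holds because the norm of $\zeta_r-1$ in $k[\zeta_r]$ is $r$, a unit since $r\neq\chara k$.) The verification that $c\in T_{\Aalg_\tau,r}$ and that $\phi$ commutes with $\Delta_r$ is the same as the paper's. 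In the ``principal case'' of the algorithmic half, your Teichm\"uller projector $e_\omega=(r-1)^{-1}\sum_a\omega(a)^{-1}\rho_a$ is essentially the paper's construction, just normalized slightly differently (the paper first forces the $\tau$-eigenvalue to $\zeta_r^{-1}$ and uses $\prod_b(z^{\omega(b)})^{\rho_b^{-1}}$ without the $(r-1)^{-1}$ factor); your version is fine.

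However, your treatment of the case where the resolvent is not a unit has a real gap. First, the proposition has no ``or output a zero divisor'' clause: it promises to \emph{always} produce $x\in T_{\Aalg,r}$ with $x^\tau=\zeta_r x$, so a zero divisor cannot simply be ``fed back''; at best you would have to recurse on the two summands and then glue the resulting $x$'s, and you do not say this. Second, and more seriously, the dichotomy you assert is false: if $e$ is the support idempotent of $y$ and $e'=\bigvee_a e^{\rho_a}$ is its $\Delta_r$-saturation, then $e'=1$ does \emph{not} imply $e=1$, so the ``trivial'' branch does not land you in the unit case — $y$ may still vanish on a nonzero piece of $\Aalg[\zeta_r]$. (One could try to repair this by noting that each $y^{\rho_a}$ is an eigenvector with eigenvalue $\zeta_r^a$, replacing it by an appropriate power to move the eigenvalue back to $\zeta_r$, and patching the pieces along a $\tau$-fixed partition of unity subordinate to the $e^{\rho_a}$'s, but that is a genuinely different and more delicate argument which you do not make.) The paper sidesteps all of this: it never looks for $\Delta_r$-invariance at this stage. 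Instead, it observes that the nonzero $x$ returned by Lemma~\ref{lem-auteigen} has $x\Aalg[\zeta_r]$ a $\tau$-invariant ideal whose identity element $e_1$ is $\tau$-fixed, so $\tau$ remains semiregular of order $r$ on the complementary ideal; it accumulates a unit eigenvector $y$ on the part it has already covered and iterates Lemma~\ref{lem-auteigen} on the remainder, terminating in at most $\dim_k\Aalg[\zeta_r]$ rounds because the covered part strictly grows each time. You should adopt this peel-and-accumulate iteration (or spell out and justify a correct variant) in place of the dichotomy you propose.
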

\iffull
\begin{proof}
The proof idea is to first apply Lemma \ref{lem-auteigen} to find a nonzero 
$x\in\Aalg[\zeta_r]$ such that $x^\tau=\zeta_r x$. Note that this $x$ maybe a zero 
divisor of $\Aalg[\zeta_r]$, in that case we intend to decompose $\Aalg[\zeta_r]$ as 
much as possible and apply Lemma \ref{lem-auteigen} to each of these components. 
This process is repeated till it yields an $y\in\Aalg[\zeta_r]^*$ such that 
$y^\tau=\zeta_r y$. Secondly, this $y$ is used to form the $x$ and $\phi$ 
as promised in the claim.

We maintain: a decomposition of the identity element $1=1_{\Aalg[\zeta_r]}=1_\Aalg$ into 
orthogonal idempotents $e, f$ that are fixed by $\tau$; and an element 
$y\in (f\Aalg[\zeta_r])^*$ such that $y^\tau=\zeta_r y$ (for $f=0$ we define 
$(f\Aalg[\zeta_r])^*$ as $(0)$). Initially, we take $e=1,\;f=0,\;y=0$. Since $\tau$ is 
semiregular its restriction to $e\Aalg[\zeta_r]$ has to be nontrivial (as long as $e\not=0$) 
and hence of prime order $r$. Therefore we can apply Lemma \ref{lem-auteigen} with 
$\xi=e\zeta_r$ to find a nonzero $x\in e\Aalg[\zeta_r]$ such that 
$x^\tau=(e\zeta_r)x=\zeta_r x$. Now compute the identity element $e_1$ of 
$x\Aalg[\zeta_r]$ (which is an ideal of $e\Aalg[\zeta_r]$). Note that $x\Aalg[\zeta_r]$ is 
invariant under $\tau$ since for all $z\in\Aalg[\zeta_r]$, $(xz)^\tau=x^\tau z^\tau=
\zeta_r x z^\tau\in x\Aalg[\zeta_r]$. This makes $\tau$ an automorphism of $x\Aalg[\zeta_r]$ 
and so $\tau$ fixes the identity element $e_1$. We could now replace $e$ with $(e-e_1)$, $f$ 
with $(f+e_1)$, $y$ with $(x+y)$ and repeat the above steps. Note that the above one iteration 
decomposed $e\Aalg[\zeta_r]$ into orthogonal components $(e-e_1)\Aalg[\zeta_r]$ and 
$e_1\Aalg[\zeta_r]$ and thus the procedure has to stop in at most $\dim_k\Aalg[\zeta_r]$ 
rounds with $e=0$.

So far we have found an element $y\in\Aalg[\zeta_r]^*$ with $y^\tau=\zeta_r y$. Define
$|\Aalg[\zeta_r]_r^*|=:r^t$, $\ell:=|\Aalg[\zeta_r]^*|/r^t$ and $m:=(-\ell)^{-1}\pmod{r}$.
Note that $\ell$ can be calculated from the sizes of the simple components of $\Aalg[\zeta_r]$ 
which in turn can be easily computed by using the standard distinct degree factorization of
polynomials over finite fields. Thus, we can compute the element $z:=y^{\ell m}$. By the 
definition of $\ell$ and $y$, $z\in\Aalg[\zeta_r]_r^*$ and $z^\tau=\zeta_r^{\ell m} z=
\zeta_r^{-1}z$. Next compute the element 
$x=\prod_{b=1}^{r-1}(z^{\omega(b)})^{\rho_b^{-1}}$. Note that for all $\rho_a\in\Delta_r$, 
$x^{\rho_a}=\prod_{b=1}^{r-1}(z^{\omega(a^{-1}b)\omega(a)})^{\rho_{a^{-1}b}^{-1}}=
x^{\omega(a)}$, whence $x\in T_{\Aalg,r}$. Also, as $\tau$ commutes with $\Delta_r$ we have
$x^\tau=\prod_{b=1}^{r-1}((\zeta_r^{-1}z)^{\omega(b)})^{\rho_b^{-1}}=
x\cdot\prod_{b=1}^{r-1}((\zeta_r^{-1})^{\omega(b)})^{\rho_b^{-1}}=
(\zeta_r^{-1})^{r-1}x=\zeta_r x$. Finally, we define the $c$ as $x^r$. {}From the properties of $x$,
$c\in\Aalg[\zeta_r]_\tau=\Aalg_\tau[\zeta_r]$ and hence $c\in T_{\Aalg_\tau,r}$.

Let us define the map $\phi$ from $\Aalg_\tau[\zeta_r][\sqrt[r]{c}]$ to $\Aalg[\zeta_r]$ as the one 
that sends $\sqrt[r]{c}$ to $x$ and fixes $\Aalg_\tau[\zeta_r]$. It is obvious from $c=x^r$ that 
$\phi$ is a homomorphism. If $\phi$ maps an element $\sum_{i=0}^{r-1}a_i(\sqrt[r]{c})^i$
to zero then $\sum_{i=0}^{r-1}a_i x^i=0$. Applying $\tau$ on this $j$ times gives 
$\sum_{i=0}^{r-1}a_i \zeta_r^{ij}x^i=0$ (remember $\tau$ fixes $\Aalg_\tau[\zeta_r]$ and 
hence $a_i$'s). Summing these equations for all $0\le j\le (r-1)$ we get $a_0=0$, as $x$ is invertible
this means that $\phi$ maps $\sum_{i=1}^{r-1}a_i x^{i-1}$ to zero. We can now repeat the 
argument and deduce that $a_i$'s are all zero, thus $\phi$ is injective.
Using that $x\in T_{\Aalg,r}$, it is also straightforward to verify that 
$\phi$ commutes with $\Delta_r$ (viewed as automorphisms of $\Aalg[\zeta_r][\sqrt[r]{c}]$). 
Thus it remains to show that $\phi$ is surjective. To this end let $\Balg$ denote the image of 
$\phi$. Then $\Balg$ is the subalgebra of $\Aalg[\zeta_r]$ generated by $\Aalg_\tau[\zeta_r]$ and 
$x$, thus $\Balg$ is $\tau$-invariant. Suppose we can show $\tau$ semiregular on $\Balg$. 
Then by Lemma \ref{lem-semireg-ineq}, $\dim_k \Balg=r\dim_k \Balg_\tau$, this together with 
$\Balg_\tau$ containing $\Aalg_\tau[\zeta_r]$ and the injectivity of $\phi$ means that $\dim_k \Balg\ge 
r\dim_k \Aalg_\tau[\zeta_r]=r\dim_k \Aalg[\zeta_r]_\tau$ which is further equal to 
$\dim_k\Aalg[\zeta_r]$ as $\tau$ is semiregular on $\Aalg[\zeta_r]$. Thus, 
$\dim_k \Balg\ge \dim_k\Aalg[\zeta_r]$ which obviously means that $\phi$ is indeed surjective.

It remains to prove the semiregularity of $\tau$ on $\Balg$. Assume for contradiction that $I$ is a 
nonzero ideal of $\Balg$ such that $\tau$ fixes $I$ and $e$ be the identity element of $I$. Then 
$(ex)^\tau=ex$. On the other hand, as $e^\tau=e$ and $x^\tau=\zeta_r x$, we have 
$(ex)^\tau=\zeta_r ex$. Combining the two equalities we obtain that $(ex)(\zeta_r-1)=0$. Note that
if $r=2$ then $char\ k>2$ and $(\zeta_r-1)$ is not a zero divisor and if $r>2$ then $\Aalg[\zeta_r]$ 
is a free $\Aalg$-module with basis $\{1,\ldots,\zeta_r^{r-2}\}$. Thus, $x(\zeta_r-1)$ is invertible
in all cases, implying $e=0$ which is a contradiction. Thus $\tau$ is indeed semiregular on 
$\Balg$ completing the proof that $\phi$ is an isomorphism.  
\end{proof}
\fi

\subsection{Zero Divisors using Noncyclic Groups: Proof of Application 2}

In this part we prove Application 2 by proving the following stronger result.

\begin{theorem}\label{thm-cyclicaut}
Given a commutative semisimple algebra $\Aalg$ over a finite field $k$ together with a 
noncyclic group $G$ of $k$-automorphisms of $\Aalg$ (in terms of generators), one can find 
a zero divisor in $\Aalg$ in deterministic polynomial time.
\end{theorem}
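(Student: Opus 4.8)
The plan is to reduce to the case where $\Aalg$ is a field and then exploit the structure of a noncyclic abelian subgroup of $G$. First I would handle the reductions. If $\Aalg$ is not a field, pick a primitive idempotent — or rather, try to: by Lemma~\ref{lem-refine} applied to $\Gamma = G$ and the trivial decomposition $\Aalg = \Aalg$, we get a $G$-invariant decomposition $\Aalg = I_1 \oplus \cdots \oplus I_t$. If $t > 1$ we already have a zero divisor (the identity element of $I_1$ is a nontrivial idempotent), so assume $t = 1$, i.e. $G$ acts transitively-indecomposably. Next, if $G$ is not semiregular, Proposition~\ref{pro-regularaut} finds a zero divisor; so assume $G$ is semiregular. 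Then $\Aalg$ is a free $\Aalg_G$-module of rank $|G|$ by Lemma~\ref{lem-semireg-ineq}. We may also assume $\Aalg$ is actually a field: if not, Berlekamp's algorithm factors in time $\mathrm{poly}(|k|, \dim_k\Aalg)$ provided $|k|$ is not too large, and when $|k|$ is large relative to $(\dim_k\Aalg)^2$ a primitive element exists; but more simply, since $G$ is semiregular and the decomposition into $I_i$'s is already trivial, $\Aalg$ modulo its — wait, $\Aalg$ is semisimple so it is a direct sum of fields permuted by $G$; indecomposability of the $G$-action together with $t=1$ does not force a single field. Let me restructure: take any primitive idempotent $e$ of $\Aalg$; if its $G$-orbit has size $>1$ we get orthogonal idempotents hence a zero divisor, unless the orbit has size exactly $1$, which combined with $t=1$ forces $\Aalg = e\Aalg$ to be a field.

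So assume $\Aalg$ is a field, a Galois extension of $k' := \Aalg_G$ with Galois group $G$ — here I use that $\Aalg$ is a field extension of the finite field $k' \supseteq k$ and $G \le \Aut_{k'}(\Aalg)$ with $|G| = [\Aalg : k']$, so $G$ is the full Galois group, which is \emph{cyclic}, contradicting the hypothesis that $G$ is noncyclic. This is the crux: once we are reduced to $\Aalg$ a field with $G$ semiregular, $G$ embeds in the automorphism group of a finite field extension and must be cyclic. Hence the noncyclicity of $G$ guarantees that one of the earlier "or find a zero divisor" branches must have fired. I would phrase the argument as: run Lemma~\ref{lem-refine}, then Proposition~\ref{pro-regularaut}; if neither produces a zero divisor and $t = 1$, then every primitive idempotent of $\Aalg$ is $G$-fixed, which by indecomposability makes $\Aalg$ a field, whence $G$ is cyclic — contradiction. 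Therefore a zero divisor is found, and everything runs in $\mathrm{poly}(|G|, \log|\Aalg|)$ time, which is polynomial in the input.

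The main obstacle I anticipate is the bookkeeping around the case $t > 1$ versus $t = 1$: when $t > 1$ we have a zero divisor immediately, but we must check that when $t = 1$ the $G$-orbits of primitive idempotents are all singletons. This is where semiregularity is essential — a fix-free automorphism cannot fix a nontrivial ideal, so if $G$ were semiregular and some primitive idempotent had a nontrivial stabilizer we would need to look at $N_e$ (the kernel of the restriction $G_e \to \Aut(e\Aalg)$, in the notation of the proof of Lemma~\ref{lem-semireg-ineq}); semiregularity forces $N_e = 1$, and then $G_e$ acts faithfully on the field $e\Aalg$. If the orbit of $e$ has size $>1$, the distinct idempotents $e^\gamma$ sum to a $G$-fixed idempotent $f \neq 1$, giving a zero divisor unless $f = 1$, i.e. $G$ acts transitively on primitive idempotents; but transitivity with the $t=1$ decomposition and $G_e$ acting faithfully on $e\Aalg$ as a subgroup of a cyclic Galois group means $G$ itself sits inside a (cyclic group) $\rtimes$ (cyclic group)... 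Actually I expect the cleanest route is: reduce to $\Aalg$ a single field directly by noting any $G$-fixed primitive idempotent that is proper gives a zero divisor, so WLOG $\Aalg$ is a field; then $G \le \Aut(\Aalg)$ is a subgroup of a cyclic group, contradicting noncyclicity. I would present it in that order to minimize the case analysis.
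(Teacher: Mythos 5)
Your high-level contradiction is correct --- if $\Aalg$ were a field and $G$ semiregular, then $G$ would embed in the Galois group of a finite field extension, which is cyclic, contradicting noncyclicity --- but the algorithmic content is missing, and that missing content is essentially the entire difficulty of the theorem. The step ``take any primitive idempotent $e$ of $\Aalg$'' is not an allowed move: computing a primitive idempotent of a commutative semisimple algebra over a finite field is precisely the problem equivalent to completely factoring a polynomial, and the paper has no black box for it. Similarly, Lemma~\ref{lem-refine} applied to $\Gamma = G$ and the \emph{trivial} decomposition $\Aalg = J_1$ simply returns the trivial decomposition: the algorithm refines a \emph{given} decomposition by intersecting known ideals with their $\sigma$-images, so with only one known ideal (all of $\Aalg$, which is $G$-invariant) it never discovers a proper ideal. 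So ``$t=1$'' is a vacuous outcome, not evidence that the primitive idempotents are $G$-fixed, and your reduction to the field case never gets off the ground.

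The paper's actual proof goes a completely different and constructive route. After reducing to $G$ semiregular via Proposition~\ref{pro-regularaut} (which you do correctly) and to $\chara k \nmid |G|$ via Berlekamp, it attacks each Sylow subgroup $P_r$ of $G$ directly using the Kummer/Teichm\"uller machinery of Section~\ref{sec-kummer-extn}: for each automorphism $\sigma$ of prime order $r$, Proposition~\ref{pro-teichresolv} produces an $x_\sigma \in T_{\Aalg,r}$ with $x_\sigma^\sigma = \zeta_r x_\sigma$; the subgroup $H_r$ generated by the $x_\sigma$ must be cyclic or else Lemma~\ref{lem-cyclicteich} yields a zero divisor; cyclicity of $H_r$ plus $G$-invariance then forces each odd $P_r$ to be cyclic; and a separate argument handles $P_2$, finding a zero divisor from commuting involutions $\sigma_1, \sigma_2$ with a second application of Proposition~\ref{pro-teichresolv} to $\Aalg_{\sigma_1}$. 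Only then does the proof conclude: if no zero divisor is found at any step, $G$ is cyclic, contradiction. Your proposal replaces all of this with a single unjustified ``find an idempotent'' and therefore has a genuine, central gap.
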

\begin{proof}
Notice that since $G$ is noncyclic, the algebra $\Aalg$ is certainly not a field and zero 
divisors do exist. We assume that $G$ is semiregular otherwise we can efficiently find a zero 
divisor in $\Aalg$ by Proposition \ref{pro-regularaut}. We can also assume that $|G|$ is not 
divisible by $char\ k$ otherwise $char\ k\le |G|\le \dim_k \Aalg$ and Berlekamp's deterministic 
algorithm for polynomial factoring can be used to find all the simple components of $\Aalg$.

As $G$ is a small group of size $\dim_k \Aalg$, we can list
all its elements of prime order. 
The proof now proceeds by analyzing the Sylow subgroups of $G$ and showing them all 
cyclic unless they yield a zero divisor of $\Aalg$. 
For every prime divisor $r$ of $|G|$ let $\Pi_r$ be the set
of elements of $G$ of order $r$ and let $P_r$ be an
$r$-Sylow subgroup of $G$. For every $\sigma\in\Pi_r$
we can use Proposition \ref{pro-teichresolv} to compute an 
element $x_\sigma\in T_{\Aalg,r}$ with 
$x_\sigma^{\sigma}=\zeta_r x_\sigma$. 
Let $H_r$ be the subgroup of $T_{\Aalg,r}$ generated 
by $\{x_\sigma|\sigma\in\Pi_r\}$. 

We can assume $H_r$ to be cyclic or else we can find a 
zero divisor in $\Aalg$ by Lemma \ref{lem-cyclicteich}. So choose an element 
$x\in\{x_\sigma|\sigma\in\Pi_r\}$ such that $x$ is a generator of $H_r$. 
Now for any 
$\sigma\in G$, as $x^\sigma$ is again in $T_{\Aalg,r}$, we can assume $x^\sigma\in H_r$ for 
otherwise we can find a zero divisor by Lemma \ref{lem-cyclicteich}. Thus, $H_r$ is $G$-invariant 
and $G$ acts as a group of automorphisms of $H_r$. As every element of 
$P_r$ of order $r$ moves some element in $H_r$, there is no nontrivial 
element of $P_r$ acting trivially on $H_r$, thus $P_r$ 
intersects trivially with the kernel $K_r$ of the restriction homomorphism $G\rightarrow Aut(H_r)$. 
Since $H_r$ is cyclic, 
its automorphism group is Abelian. The last two observations imply that
$G/K_r$ is an Abelian group 
with a natural embedding of $P_r\rightarrow G/K_r\cong Aut(H_r)$. Thus the normal series
$K_r\lhd G$ can be 
refined to $K_r\unlhd N_r\lhd G$ such that $|P_r|=|G/N_r|$. Since we have this for every $r$ dividing $|G|$, 
it follows that $G$ is a direct product of its Sylow subgroups. Also, as each $P_r$ is
Abelian, $G$ is 
Abelian. Moreover, since the automorphism group of a cyclic group of odd prime-power order is cyclic, 
$Aut(H_r)$ is cyclic and finally $P_r$ is cyclic, for every odd prime $r||G|$.

It remains to show that we can find a zero divisor efficiently if the $2$-Sylow subgroup $P_2$ of $G$ 
is not cyclic. To this end we take a closer look at the subgroup $H_2$ constructed for the prime $r=2$ 
by the method outlined above. It is generated by an element $x$, contains $-1$, and $P_2$ acts 
faithfully as a group of automorphisms of $H_2$. If $|H_2|=2^k$ then $Aut(H_2)\cong \Z_{2^k}^*$. 
As $P_2$ injectively embeds in $Aut(H_2)$ and $P_2$ is noncyclic we get that $\Z_{2^k}^*$ is 
noncyclic, implying that $k>2$ and structurally $\Z_{2^k}^*$ is the direct product of the cyclic groups 
generated by $(-1)$ and $(5)$ modulo $2^k$ respectively. Now any noncyclic subgroup of such a 
$\Z_{2^k}^*$ will have the order $2$ elements: $(-1)$ and $5^{2^{k-3}}\equiv (2^{k-1}+1)$. 
Thus, $P_2$ has the maps $\sigma_1:x\mapsto x^{-1}$ and 
$\sigma_2:x\mapsto x^{2^{k-1}+1}=-x$. Since $\sigma_1$ and $\sigma_2$ commute, 
$\Aalg_{\sigma_1}$ is $\sigma_2$-invariant. As the group $(\sigma_1, \sigma_2)$ is of size $4$ while 
the group $(\sigma_1)$ is only of size $2$ we get by the semiregularity of $G$ that the restriction of 
$\sigma_2$ to $\Aalg_{\sigma_1}$ is not the identity map. Hence, by Proposition \ref{pro-teichresolv} 
we can find an element $y\in T_{\Aalg_{\sigma_1},2}$ such that $y^{\sigma_2}=-y$. We can assume 
that the subgroup of $\Aalg^*$ generated by $x$ and $y$ is cyclic as otherwise we find a zero divisor 
by Lemma \ref{lem-cyclicteich}. However, as $x\not\in\Aalg_{\sigma_1}$ while $y\in\Aalg_{\sigma_1}$, 
it can be seen that: $(x,y)$ is a cyclic group only
if $y\in H_2^2$ (i.e. $y$ is square of an element in $H_2$). But this is a contradiction because 
$\sigma_2$ fixes $H_2^2$. This finishes the proof.
\end{proof}

Now we can give a proof of Application 2.
Let $r$ be a positive integer such that the multiplicative group $\Z_r^*$ is noncyclic and let $\phi_r(x)$ 
be the $r$-th cyclotomic polynomial. We can 
assume $r$ to be coprime to $char\ k$ as otherwise we factor $\phi_r(x)$ simply
by using Berlekamp's algorithm for polynomial factoring. Define
$\Aalg:=k[x]/(\phi_r(x))$, it is clearly a commutative semisimple algebra 
of dimension $\phi(r)$ over $k$. Moreover, if $\zeta_r\in\overline{k}$ is a 
primitive $r$-th root of unity then: 
$\phi_r(x)=\prod_{i\in\Z_r^*}(x-\zeta_r^i)$.
This implies that for any $i\in\Z_r^*$, $\phi_r(x)|\phi_r(x^i)$ and 
if for a $g(X)\in k[X]$, $\phi_r(x)|g(x^i)$ then $\phi_r(X)|g(X)$ as well.
In other words for any $i$ coprime to $r$ the map $\rho_i:x\rightarrow x^i$ is 
a $k$-automorphism of $\Aalg$. Consider the group $G:=\{\rho_i|i\in\Z_r^*\}$, it
is clearly isomorphic to the multiplicative group $\Z_r^*$, which is noncyclic for our $r$. 
Thus, $G$ is noncyclic and we can find 
a zero divisor $a(x)\in\Aalg$ by Theorem \ref{thm-cyclicaut}. Finally, the gcd
of $a(x)$ and $\phi_r(x)$ gives a nontrivial factor of $\phi_r(x)$.

Rational polynomials known to have small but noncommutative
Galois groups also emerge in various branches of mathematics
and its applications. 
For example, the six roots
of the polynomial $F_j(X)=(X^2-X+1)^3-\frac{j}{2^8}X^2(X-1)^2$ are
the possible parameters $\lambda$ of the elliptic curves
 from the {\em Legendre family} $E_\lambda$ 
having prescribed $j$-invariant $j$, see \cite{huse}. (Recall that the 
curve $E_\lambda$ is defined by the equation
$Y^2=X(X-1)(X-\lambda)$.)
The Galois group of $F_j(X)$ is $S_3$, whence
Theorem~\ref{thm-cyclicaut} gives a partial factorization
of the polynomial $F_j(X)$ modulo $p$ 
where $p$ is odd and $j$ is coprime to $p$.

\iffull

\subsection{Extending Automorphisms of $\Aalg_\tau$ to $\Aalg$, where 
$\tau\in Aut_k(\Aalg)$}\label{sec-glue}

\begin{lemma}\label{lem-extendaut}
Given a commutative semisimple algebra $\Aalg$ over a finite field $k$, a $k$- automorphism $\tau$ 
of $\Aalg$ and a $k$-automorphism $\mu$ of $\Aalg_\tau$. Assume that the order of $\tau$ is 
coprime to $char\ k$. Then in deterministic $poly(\log |\Aalg|)$ time we can compute either a zero 
divisor in $\Aalg$ or a $k$-automorphism $\mu'$ of $\Aalg$ that extends $\mu$ such that 
$\Aalg_{\mu'}=(\Aalg_\tau)_\mu$.
\end{lemma}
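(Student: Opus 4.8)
The plan is to strip prime-order pieces off $\tau$ by recursion, and to settle the prime-order case through the Kummer description of Proposition~\ref{pro-teichresolv}.

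First I would apply Proposition~\ref{pro-regularaut} to $\{\tau\}$: either it already returns a zero divisor of $\Aalg$, or $\tau$ is semiregular and its order $n$ is known. If $n=1$ then $\Aalg_\tau=\Aalg$ and $\mu'=\mu$ works, so assume $n>1$ and pick a prime $r\mid n$ (hence $r\neq\chara k$ by hypothesis). Put $\sigma:=\tau^{n/r}$, an automorphism of order $r$. Since $\sigma$ is a power of $\tau$, the automorphism $\tau$ restricts to $\Aalg_\sigma$, with order dividing $n/r<n$, and $(\Aalg_\sigma)_\tau=\Aalg_\sigma\cap\Aalg_\tau=\Aalg_\tau$ because $\Aalg_\tau\subseteq\Aalg_\sigma$. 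I would recurse on $(\Aalg_\sigma,\tau|_{\Aalg_\sigma},\mu)$: this returns a zero divisor of $\Aalg_\sigma\subseteq\Aalg$, or a $k$-automorphism $\nu$ of $\Aalg_\sigma$ extending $\mu$ with $(\Aalg_\sigma)_\nu=(\Aalg_\tau)_\mu$. In the second case I would then run the prime-order case (below) on $(\Aalg,\sigma,\nu)$; its output $\mu'$ extends $\nu$, hence $\mu$, and satisfies $\Aalg_{\mu'}=(\Aalg_\sigma)_\nu=(\Aalg_\tau)_\mu$. As the order at least halves at each level, the recursion has depth $O(\log\dim_k\Aalg)$.

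For the prime-order case, $\tau$ has prime order $r\neq\chara k$, is semiregular, and $\mu\in\Aut_k(\Aalg_\tau)$. I would pass to $\Aalg[\zeta_r]$ and invoke Proposition~\ref{pro-teichresolv} to obtain $c\in T_{\Aalg_\tau,r}$ and an isomorphism identifying $\Aalg[\zeta_r]$ with $\Aalg_\tau[\zeta_r][\sqrt[r]{c}]$, under which $\Aalg=(\Aalg_\tau[\zeta_r][\sqrt[r]{c}])_{\Delta_r}$ and $\tau$ becomes the automorphism that fixes $\Aalg_\tau[\zeta_r]$ and sends $\sqrt[r]{c}\mapsto\zeta_r\sqrt[r]{c}$ (Proposition~\ref{pro-rootext}). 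Extending $\mu$ to $\Aalg_\tau[\zeta_r]$ by fixing $\zeta_r$ preserves $T_{\Aalg_\tau,r}$, so $\mu(c)\in T_{\Aalg_\tau,r}$. Applying Lemma~\ref{lem-discretelog} (or Lemma~\ref{lem-cyclicteich}) to $c$ and $\mu(c)$, I would either get a zero divisor of $\Aalg_\tau\subseteq\Aalg$, or conclude $\mu(c)=c^s$ with $s$ coprime to $r$ (coprimality because $\mu(c)$ has the same order as $c$, a power of $r$; the degenerate subcase $c=1$ I would treat directly). For each $e\in\{0,\dots,r-1\}$ I would form the map $\mu'_e$ on $\Aalg_\tau[\zeta_r][\sqrt[r]{c}]$ acting as $\mu$ on $\Aalg_\tau$, fixing $\zeta_r$, and sending $\sqrt[r]{c}\mapsto\zeta_r^e(\sqrt[r]{c})^s$. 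From $(\zeta_r^e(\sqrt[r]{c})^s)^r=c^s=\mu(c)$ this is a well-defined endomorphism; it is an automorphism since $i\mapsto si\bmod r$ permutes $\{0,\dots,r-1\}$ and $\sqrt[r]{c}$ is a unit; and since $\omega(a)\equiv a\pmod r$ it commutes with $\Delta_r$, exactly as in the proof of Lemma~\ref{lem-rhoa}. Hence each $\mu'_e$ preserves $\Aalg=(\Aalg_\tau[\zeta_r][\sqrt[r]{c}])_{\Delta_r}$ and restricts to a $k$-automorphism of $\Aalg$ extending $\mu$; on $\Aalg$ the restrictions exhaust the $r$ extensions of $\mu$, any two differing by a power of $\tau$.

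It remains to pick $e$ with $\Aalg_{\mu'_e}=(\Aalg_\tau)_\mu$ or else to find a zero divisor, and this is the step I expect to be the main obstacle. I would run Proposition~\ref{pro-regularaut} on $H:=\langle\tau,\mu'_0\rangle\subseteq\Aut_k(\Aalg)$: if $H$ is not semiregular it returns a zero divisor of $\Aalg$, and if $H$ is (semiregular but) noncyclic Theorem~\ref{thm-cyclicaut} returns one. If $H$ is cyclic, the subgroup of $H$ fixing $\Aalg_\tau$ pointwise contains $\langle\tau\rangle$ and, being semiregular with fixed algebra $\Aalg_\tau$, has order $[\Aalg:\Aalg_\tau]=r$ by Lemma~\ref{lem-semireg-ineq}, hence equals $\langle\tau\rangle$; consequently $H/\langle\tau\rangle$ embeds into $\Aut(\Aalg_\tau)$ with image $\langle\mu\rangle$, so $|H|=r\cdot\mathrm{ord}(\mu)$ and the image of $\mu'_0$ generates $H/\langle\tau\rangle$. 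A short computation in the cyclic group $H$ (using that $\langle\tau\rangle$ has prime order) then yields an $i$ with $\tau^i\mu'_0$ a generator of $H$, and for that $i$ one gets $\Aalg_{\tau^i\mu'_0}=\Aalg_H=\Aalg_\tau\cap\Aalg_{\mu'_0}=(\Aalg_\tau)_\mu$. Algorithmically I would simply compute the fixed algebra of $\tau^i\mu'_0$ for $i=0,\dots,r-1$ by linear algebra and output the first that equals $(\Aalg_\tau)_\mu$; the analysis shows that if none does, a zero divisor has necessarily been produced along the way. All steps are deterministic polynomial time, which with the logarithmic recursion depth gives the claimed bound. The hardest point is making this last matching robust—proving that $\Aalg_{\mu'}=(\Aalg_\tau)_\mu$ can fail only when $\langle\tau,\mu'_0\rangle$ is non-semiregular or noncyclic, situations already handled by Proposition~\ref{pro-regularaut} and Theorem~\ref{thm-cyclicaut}—which rests on $H$ being an abelian extension of $\langle\mu\rangle$ by the prime-order group $\langle\tau\rangle$; minor chores are checking that each $\mu'_e$ is a $\Delta_r$-equivariant automorphism and disposing of the degenerate case $c=1$ (where $\sqrt[r]{c}$ is a root of unity).
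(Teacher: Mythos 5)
Your proof is correct and follows the paper's broad strategy: both reduce to the case that $\tau$ has prime order $r$ by walking a chain of prime-index subalgebras (you peel the largest prime quotient from the top and recurse, the paper iterates from $\Aalg_\tau$ upward; these are the same chain), and both settle the prime case by passing to the Kummer picture $\Aalg[\zeta_r]\cong\Aalg_\tau[\zeta_r][\sqrt[r]{c}]$ via Proposition~\ref{pro-teichresolv} and Lemma~\ref{lem-discretelog}. Where you genuinely diverge is in pinning down \emph{which} of the $r$ lifts of $\mu$ satisfies $\Aalg_{\mu'}=(\Aalg_\tau)_\mu$. The paper additionally computes $\ell$ with $\zeta_r=c^\ell$, derives $j\equiv 1\pmod r$ from $\zeta_r^\mu=\zeta_r$, and outputs the single lift determined by $\sqrt[r]{c}\mapsto x^j$; but $j$ is only determined modulo $\mathrm{ord}(c)$ while $x^j$ depends on $j$ modulo $\mathrm{ord}(x)=r\cdot\mathrm{ord}(c)$, so the prescription fixes $\mu'$ only up to composition with a power of $\tau$, and the paper never actually verifies the fixed-algebra identity (which fails for the default choice when $\mu=\mathrm{id}$: one gets $j=1$, $\mu'=\mathrm{id}$, yet the lemma demands $\Aalg_{\mu'}=\Aalg_\tau$). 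You instead build all $r$ candidates $\mu'_e=\tau^e\mu'_0$, run Proposition~\ref{pro-regularaut} and Theorem~\ref{thm-cyclicaut} on $H=\langle\tau,\mu'_0\rangle$, and in the surviving cyclic semiregular case show by a clean count --- the kernel of $H\to\Aut_k(\Aalg_\tau)$ is exactly $\langle\tau\rangle$ (prime order, via Lemma~\ref{lem-semireg-ineq}), the image is exactly $\langle\mu\rangle$, so $H$ is cyclic of order $r\cdot\mathrm{ord}(\mu)$ and some $\tau^i\mu'_0$ generates it, forcing $\Aalg_{\tau^i\mu'_0}=\Aalg_H=(\Aalg_\tau)_\mu$. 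This costs an extra invocation of Theorem~\ref{thm-cyclicaut} and a loop over $r\le\dim_k\Aalg$ candidates, but it buys an argument that is explicit about the fixed-algebra claim and handles the degenerate $\mu=\mathrm{id}$ case uniformly. Your two flagged loose ends check out: $\Delta_r$-equivariance of $\mu'_e$ follows from $\omega(a)\equiv a\pmod r$ exactly as in Lemma~\ref{lem-rhoa}, and $c=1$ cannot occur since it would make $x$ a power of $\zeta_r$ and hence $\tau$-fixed, contradicting $x^\tau=\zeta_r x$.
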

\begin{proof}
Suppose that the order of $\tau$ is $r_1\cdots r_t$, where
$r_i$'s are primes (not necessarily distinct). Cleary it is sufficient to show how to extend
$\mu$ from $\Aalg_{\tau^{r_1\cdots r_{i-1}}}$ to
$\Aalg_{\tau^{r_1\cdots r_{i}}}$ (or find a zero divisor
during the process).
We can therefore assume that the order of $\tau$ is a prime $r$.
We may also assume that both 
$\tau$ and $\mu$ are semiregular since otherwise
we can find a zero divisor in $\Aalg$ by Proposition~\ref{pro-regularaut}.
We work 
in the algebra $\Aalg[\zeta_r]$. 
We extend $\tau$ to $\Aalg[\zeta_r]$ and $\mu$ to 
$\Aalg_\tau[\zeta_r]$ in the natural way. By Proposition \ref{pro-teichresolv}, we can efficiently 
find $x\in T_{\Aalg,r}$ such that $x^\tau=\zeta_rx$. Clearly, $c:=x^r\in T_{\Aalg_\tau,r}$ and  
$c^\mu\in T_{\Aalg_\tau,r}$. The elements $c$ and $c^\mu$ have the same 
order.
If $c^\mu$ is not in the cyclic group generated by $c$ then by Lemma 
\ref{lem-cyclicteich}, we can find a zero divisor in $\Aalg$. So assume that $c^\mu$ is in the cyclic 
group of $c$, in which case find an integer $j$ coprime to $r$ such that $c^\mu=c^j$ using 
Lemma \ref{lem-discretelog}. Note that by Lemma \ref{lem-cyclicteich},
we can also find a zero divisor in $\Aalg$ in the case when $\zeta_r$ 
is not a power of $c$, so assume that $\zeta_r=c^\ell$ and compute this integer $\ell$. Then 
$\zeta_r=\zeta_r^\mu=(c^\ell)^\mu=(c^\mu)^\ell=c^{j\ell}=\zeta_r^j$, and hence 
$j\equiv 1 \pmod{r}$. We set $x':=x^j$. As $x^\tau=\zeta_r x$ and $x'^\tau=\zeta_r x'$, by the 
proof of Proposition \ref{pro-teichresolv}, there are isomorphism maps 
$\phi:\Aalg_\tau[\zeta_r][\sqrt[r]{c}]\rightarrow \Aalg[\zeta_r]$ and 
$\phi':\Aalg_\tau[\zeta_r][\sqrt[r]{c^\mu}]\rightarrow \Aalg[\zeta_r]$ sending $\sqrt[r]{c}$ to 
$x$ and $\sqrt[r]{c^\mu}$ to $x'$ respectively; both fixing $\Aalg_\tau[\zeta_r]$. We can naturally 
extend $\mu$ to an isomorphism map $\mu'':\Aalg_\tau[\zeta_r][\sqrt[r]{c}]\rightarrow 
\Aalg_\tau[\zeta_r][\sqrt[r]{c^\mu}]$. Then the composition map $\mu':=\phi'\circ\mu''\circ\phi^{-1}$
is an automorphism of $\Aalg[\zeta_r]$ whose restriction to $\Aalg_\tau[\zeta_r]$ is $\mu$. As $\mu''$, 
$\phi$ and $\phi'$ commute with $\Delta_r$, so does $\mu'$. Therefore $\Aalg=
\Aalg[\zeta_r]_{\Delta_r}$ is $\mu'$-invariant and we have the promised $k$-automorphism of 
$\Aalg$.
\end{proof}

\subsection{Zero Divisors using Galois Groups: Proof of Application 3}

If the input polynomial $f(x)\in\Q[x]$ has a ``small'' Galois group then can we
factor $f(x)$ modulo a prime $p$? This question was studied in \cite{ro4} and an
algorithm was given assuming GRH. In this subsection we give a GRH-free version.
We start with the following unconditional and generalized version of Theorem 3.1. 
in \cite{ro4}:

\begin{theorem}\label{thm-galalg}
Assume that we are given a semiregular group $G$ of automorphisms of a
commutative semisimple algebra $\Aalg$ over a finite field $k$ with 
$\Aalg_G=k$ 
and a nonzero ideal $\Balg$ (with $k$ embedded) of a subalgebra of $\Aalg$. Then in deterministic 
$poly(\log |\Aalg|)$ time we can either find a zero divisor in $\Balg$ or a 
semiregular $k$-automorphism $\sigma$ of $\Balg$ of order $\dim_k\Balg$.
\end{theorem}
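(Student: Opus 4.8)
The plan is to induct on $\dim_k\Balg$, building the required automorphism out of the Frobenius of $\Balg$ together with prime‑order automorphisms supplied by the ambient action of $G$, and gluing them by the Kummer‑extension machinery of this section so that no root finding (hence no GRH) is needed.

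First I would clear away the easy cases. If $\mathrm{char}\ k\le\dim_k\Aalg$ — in particular whenever $\mathrm{char}\ k$ divides $|G|$, since semiregularity gives $|G|=\dim_k\Aalg$ — then deterministic Berlekamp factors every polynomial in sight in polynomial time, so we may decompose $\Balg$ outright, returning a zero divisor unless $\Balg$ is a field, in which case $z\mapsto z^{|k|}$ has order $\dim_k\Balg$ and we are done; thus assume $\mathrm{char}\ k\nmid|G|$, so that every prime dividing $\dim_k\Balg$ or $|G|$ is available for the Kummer constructions. Using a primitive element of $\Balg$ (which exists unless $|k|$ is small enough that Berlekamp again applies) and replacing $\Aalg$ by the $G$‑invariant subalgebra generated by $\Balg$, I would also reduce to the case that $\Balg$ is a genuine subalgebra of $\Aalg$ not contained in a proper $G$‑invariant subalgebra; the case of a proper ideal of a subalgebra is handled by passing to $e\Aalg$ and $G_e$ for $e=1_\Balg$. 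Throughout, every zero divisor returned by a subroutine will live in a subalgebra of $\Balg$, in $\Balg[\zeta_r]$ (hence via Lemma \ref{lem-cyclicteich} in $\Balg$), or in a fixed ring $\Balg_\tau\subseteq\Balg$, so it always yields a zero divisor of $\Balg$.

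For the inductive step, compute the Frobenius $\phi:z\mapsto z^{|k|}$ of $\Balg$; if $\mathrm{ord}(\phi)=\dim_k\Balg$ then $\Balg$ is a field and $\phi$ works. Otherwise pick a prime $r\mid\dim_k\Balg$ and find a semiregular $k$‑automorphism $\tau$ of $\Balg$ of order $r$ with $\dim_k\Balg_\tau=\dim_k\Balg/r$: take a suitable power of $\phi$ when $r$ divides the common degree of the components of $\Balg$ — checking via Lemma \ref{lem-nonfree} that $\Balg$ is free over $\Balg_\tau$, whose failure yields a zero divisor in $\Balg$ and whose persistence forces $\Balg$ to be homogeneous — and otherwise extract an order‑$r$ element of the semiregular action on the corresponding Frobenius‑fixed subalgebra visible inside $\Aalg$ (an element of $G$, or of $G$ acting through the symmetric group on the components, which is fix‑free of the required cycle type by semiregularity). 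Given $\tau$, Proposition \ref{pro-teichresolv} produces $x\in T_{\Balg,r}$ with $x^{\tau}=\zeta_r x$ and an isomorphism $\Balg[\zeta_r]\cong\Balg_\tau[\zeta_r][\sqrt[r]{c}]$, $c=x^r\in T_{\Balg_\tau,r}$, fixing $\Balg_\tau[\zeta_r]$ and commuting with $\Delta_r$; then I recurse on $\Balg_\tau$ (inside the smaller Galois algebra $\Aalg_\tau$, which still carries a semiregular action with fixed ring $k$) to get a semiregular $\mu\in\Aut_k(\Balg_\tau)$ of order $\dim_k\Balg/r$, or a zero divisor.

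The crux is the gluing. I want an extension $\sigma\in\Aut_k(\Balg)$ of $\mu$ with $\sigma^{\dim_k\Balg/r}=\tau$; since $\sigma^{j}|_{\Balg_\tau}=\mu^{j}$, this forces $\mathrm{ord}(\sigma)=(\dim_k\Balg/r)\cdot r=\dim_k\Balg$, and since $\Balg_\sigma\subseteq\Balg_\tau$ with $(\Balg_\tau)_\mu=k$ it forces $\Balg_\sigma=k$, i.e. $\sigma$ is the desired semiregular automorphism. Through the presentation $\Balg[\zeta_r]\cong\Balg_\tau[\zeta_r][\sqrt[r]{c}]$, extending $\mu$ means choosing $\sigma(\sqrt[r]{c})=\zeta_r^{m}(\sqrt[r]{c})^{e}$ where $c^{\mu}=c^{e}$ (found by Lemma \ref{lem-discretelog}, with Lemma \ref{lem-cyclicteich} supplying a zero divisor if $c^{\mu}$ is not a power of $c$); iterating, $\sigma^{\dim_k\Balg/r}$ multiplies $\sqrt[r]{c}$ by $\zeta_r$ to an exponent of the form $m\sum_{i<\dim_k\Balg/r}e^{i}+(\text{term independent of }m)$, so as long as this resolvent sum is a unit modulo $r$ I can solve for $m$ to arrange $\sigma^{\dim_k\Balg/r}=\tau$, verify that $\sigma$ descends along $\Delta_r$ to an automorphism of $\Balg$ exactly as in Lemma \ref{lem-extendaut} and Proposition \ref{pro-rootext}, and conclude. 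The main obstacle is precisely the degenerate case when $r^2\mid\dim_k\Balg$ and that resolvent sum vanishes modulo $r$: there the naive extension only has order $\dim_k\Balg/r$, and one must instead choose $\tau$ so as to absorb the entire $r$‑part of $\dim_k\Balg$ in a single step, or argue that the obstruction itself exhibits a zero divisor of $\Balg$ — making this bookkeeping watertight, together with checking at every level that the ambient semiregular structure restricts correctly to $(\Aalg_\tau,\Balg_\tau)$, is where the real difficulty lies.
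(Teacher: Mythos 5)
Your approach diverges significantly from the paper's, and it contains two genuine gaps that the paper's strategy is specifically designed to sidestep.

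The paper never tries to \emph{build} an automorphism of $\Balg$ by Kummer gluing. Instead, it recurses on the \emph{ambient} pair $(\Aalg,G)$: if $G$ is noncyclic, Theorem~\ref{thm-cyclicaut} finds a zero divisor and hence a nontrivial ideal $I\lhd\Aalg$; Lemma~\ref{lem-refine} produces a $G$-invariant decomposition, transitivity of $G$ on minimal ideals (from $\Aalg_G=k$) singles out a $\tau$-invariant summand $J^\sigma$ with stabilizer $G_1^\sigma$, and one checks whether the projection $\Balg\to J^\sigma$ is injective (if not, a zero divisor of $\Balg$ drops out). Iterating strictly shrinks $\Aalg$ until $G$ is cyclic, at which point Proposition~\ref{pro-galclosed} either yields a zero divisor or shows $\Balg=\Aalg_H$ for a subgroup $H\le G$; then a generator of the cyclic $G$ \emph{restricts} to $\Balg$ and gives a semiregular automorphism of order $|G/H|=\dim_k\Balg$ for free. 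No Frobenius, no explicit Kummer gluing, no order bookkeeping.

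Your first gap is the source of $\tau$. You propose to get a semiregular order-$r$ automorphism of $\Balg$ either from a power of Frobenius or by ``extracting an order-$r$ element of the semiregular action on the corresponding Frobenius-fixed subalgebra visible inside $\Aalg$.'' But $G$ acts on $\Aalg$, not on $\Balg$; there is no reason any element of $G$ (or of $G$ acting ``through the symmetric group on the components'') should map $\Balg$ into itself, let alone act semiregularly there with exactly the cycle type you need. Frobenius alone will fail whenever $\Balg$ is a direct sum of proper subfields, since its order is only the lcm of the component degrees. Making $G$ act on $\Balg$ is precisely what the paper's ideal-refinement recursion achieves, and it is not achievable in one step.

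Your second gap is the one you flag yourself: when you try to force $\sigma^{\dim_k\Balg/r}=\tau$ by solving for $m$ in $\sigma(\sqrt[r]{c})=\zeta_r^m(\sqrt[r]{c})^e$, the resolvent sum $\sum_{i<\dim_k\Balg/r}e^i$ can vanish modulo $r$, and you offer no argument that this degenerate case either cannot occur or always produces a zero divisor. Note that Lemma~\ref{lem-extendaut} deliberately promises only $\Aalg_{\mu'}=(\Aalg_\tau)_\mu$, not any control over $\mathrm{ord}(\mu')$; the paper never relies on a gluing step that prescribes the order of the extension, because precisely this prescription is unobtainable in general. Until both gaps are closed, the proposal does not constitute a proof.
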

\noindent
{\bf Remark.} Here $\Balg$ is an {\it ideal} of a subalgebra of $\Aalg$, thus it is not assumed that 
$1_\Aalg\in\Balg$.
\begin{proof}
The idea of the algorithm is to find a nontrivial ideal $I$ of $\Aalg$ and then 
reduce the problem to the smaller instance $I$.

If $G$ is noncyclic then using Theorem \ref{thm-cyclicaut} we can find a nontrivial 
ideal $I$ of $\Aalg$. If $G$ is cyclic then using Proposition \ref{pro-galclosed} we can 
find either a nontrivial ideal $I$ of $\Aalg$ or a subgroup $H$ of $G$ with 
$\Balg=\Aalg_H$. In the latter case $H$ is trivially a normal subgroup of $G$ and the 
restriction of any generator $\sigma$ of $G$ will generate a semiregular group, of 
$k$-automorphisms of $\Balg$, isomorphic to $G/H$. Thus, we get a semiregular 
$k$-automorphism of $\Balg$ of order $|G/H|=\dim_k\Balg$.

Let us assume we have a nontrivial ideal $I$ of $\Aalg$. Then, using the method of Lemma 
\ref{lem-refine}, we find an ideal $J$ of $\Aalg$ such that the ideals 
$\{J^\sigma|\sigma\in G\}$ are pairwise orthogonal or equal. By the hypothesis $\Aalg_G=k$, 
$G$ acts transitively on the minimal ideals of $\Aalg$, thus the group 
$G_1:=\{\sigma\in G|J^\sigma=J\}$ acts semiregularly on $J$ and for coset representatives 
$C$ of $G/G_1$: $\Aalg=\oplus_{\sigma\in C}J^\sigma$. Also, note that for all $\sigma\in C$ 
the conjugate subgroup $G_1^\sigma:=\sigma^{-1}G_1\sigma$ acts semiregularly on $J^\sigma$. 
We can find a zero divisor in $\Balg$ if the projection of $\Balg$ to some $J^\sigma$ is 
neither the zero map nor injective. Thus we assume that there is an ideal $J^\sigma$ such 
that the projection of $\Aalg$ onto $J^\sigma$ injectively embeds $\Balg$. In that case we 
reduce our original problem to the smaller instance -- $J^\sigma$ instead of $\Aalg$, 
$G_1^\sigma$ instead of $G$ and the embedding of $\Balg$ instead of $\Balg$ -- and apply
the steps of the last paragraph. 
\end{proof}

The following Corollary gives the proof of a slightly stronger version of Application 3.

\begin{corollary}\label{cor-smallgalpol}
Let $F(X)\in\Z[X]$ be a polynomial irreducible over $\Q$ with Galois group of size $m$; let $L$ be 
the maximum length of the coefficients of $F(X)$; let $p$ be a prime not dividing the discriminant 
of $F(X)$; let $f(X):=F(X)\pmod{p}$; and 
let $g(X)$ be a non-constant divisor of $f(X)$ in $\F_p[X]$. Then by a 
deterministic $poly(m,L,\log p)$ time algorithm we can find 
either a nontrivial factor of $g(X)$ or an 
automorphism of order $\deg g$ of the algebra $\F_p[x]/(g(x))$.
\end{corollary}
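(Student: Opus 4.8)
The plan is to reduce everything to Theorem~\ref{thm-galalg} by realizing $\F_p[x]/(g(x))$ as a nonzero ideal of a subalgebra of the ``mod $p$ reduction of the splitting field'' of $F$. First I would build the ambient algebra. Let $N$ be the splitting field of $F$ over $\Q$ and $\mathcal{O}_N$ its ring of integers; since $p\nmid\mathrm{disc}(F)$, the prime $p$ is unramified in $N$, so $\Aalg:=\mathcal{O}_N/p\mathcal{O}_N$ is a commutative semisimple $\F_p$-algebra of dimension $m=|G|$, and $G=\mathrm{Gal}(N/\Q)$ acts on $\Aalg$ by $\F_p$-algebra automorphisms with $\Aalg_G=\F_p$ (the $G$-invariants of $\prod_{\mathfrak p\mid p}\mathcal{O}_N/\mathfrak p$, an induced module with cyclic decomposition groups, being exactly $\F_p$). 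Since $\dim_{\F_p}\Aalg=m=|G|\cdot\dim_{\F_p}\Aalg_G$, Lemma~\ref{lem-semireg-ineq} shows $G$ is automatically semiregular. Following the techniques of \cite{ro4}, all of this data --- a $k$-basis and structure constants of $\Aalg$, and matrices for generators of $G$ --- can be computed in time $poly(m,L,\log p)$, this being where $L$ and the work over $\Q$ enter. A further consequence of $p\nmid\mathrm{disc}(F)$ is that the $n$ reduced roots of $F$ are pairwise distinct in $\Aalg$ and, with a little care, that the image $\bar\alpha$ of a single root generates a subalgebra $\F_p[\bar\alpha]\cong\F_p[x]/(f(x))$ of $\Aalg$.

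Next I would locate $\F_p[x]/(g(x))$ inside $\F_p[x]/(f(x))$. As $f$ is squarefree, $g$ and $h:=f/g$ are coprime; by the extended Euclidean algorithm find $a,b\in\F_p[x]$ with $ag+bh=1$ and set $e:=bh\bmod f$, an idempotent of $\F_p[x]/(f(x))$ with $e\equiv 1\pmod g$ and $e\equiv 0\pmod h$. Then $\Balg:=e\cdot\F_p[x]/(f(x))$ is an ideal of the subalgebra $\F_p[x]/(f(x))\subseteq\Aalg$, and reduction modulo $g$ restricts to an $\F_p$-algebra isomorphism $\Balg\xrightarrow{\ \sim\ }\F_p[x]/(g(x))$; in particular $\dim_{\F_p}\Balg=\deg g$. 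Note that ``$k$ embedded'' is satisfied ($k\cdot e\subseteq\Balg$), and $\Balg$ need not contain $1_\Aalg$, as allowed by the Remark after Theorem~\ref{thm-galalg}.

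Now I would invoke Theorem~\ref{thm-galalg} with the semiregular group $G$, the algebra $\Aalg$ (recall $\Aalg_G=\F_p=k$), and the nonzero ideal $\Balg$ of the subalgebra $\F_p[x]/(f(x))$. Since $\log|\Aalg|=O(m\log p)$, it returns in $poly(m,\log p)$ time either a zero divisor of $\Balg$ or a semiregular $\F_p$-automorphism $\sigma$ of $\Balg$ of order $\dim_{\F_p}\Balg=\deg g$. Transporting through the isomorphism $\Balg\cong\F_p[x]/(g(x))$: a zero divisor becomes a polynomial $z(x)$ with $0<\deg\gcd(z,g)<\deg g$, so $\gcd(z(x),g(x))$ is a nontrivial factor of $g$; and $\sigma$ becomes an automorphism of $\F_p[x]/(g(x))$ of order $\deg g$. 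The total running time is $poly(m,L,\log p)$. Taking $g=f$ recovers Application~3.

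The main obstacle is the first step: carrying out the construction of $\Aalg$ with its $G$-action within the claimed bound, i.e.\ representing the degree-$m$ splitting field of $F$ over $\Q$ compactly enough that its reduction modulo $p$ and the automorphism matrices have bit-size $poly(m,L,\log p)$, together with verifying the embedding $\F_p[x]/(f(x))\hookrightarrow\Aalg$. Once $\Aalg$ and $G$ are in hand, the rest is idempotent bookkeeping and a single call to Theorem~\ref{thm-galalg}.
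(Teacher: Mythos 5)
Your proposal is correct and follows essentially the same route as the paper: build the split semisimple algebra $\Aalg$ (mod-$p$ reduction of the splitting field) together with the Galois action $G$, realize $\F_p[x]/(g(x))$ as an ideal of the subalgebra $\F_p[x]/(f(x))\subseteq\Aalg$ via the idempotent for the coprime decomposition $f=gh$, and apply Theorem~\ref{thm-galalg}. The ``main obstacle'' you flag at the end is not actually a gap: it is resolved by Theorem~5.3 of~\cite{ro4}, which produces a monic $H(X)$ with $\Q[x]/(H)$ the splitting field and $p\nmid\mathrm{disc}(H)$, so that $\Z[\alpha]/(p)$ coincides with your $\mathcal{O}_N/p\mathcal{O}_N$ and the whole construction (including the $G$-action and the embedding of a root of $f$, after clearing $p$-coprime denominators) runs in $poly(m,L,\log p)$ --- this is exactly the citation the paper itself relies on.
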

\begin{proof}
The assumption on the discriminant implies that the leading coefficient of $F(X)$ is not divisible by
$p$, and wlog we can assume $F(X)$ to be monic. Also assume that $p> m^4$ as otherwise we 
can use Berlekamp's deterministic algorithm for factoring $f(x)$ completely. Now using the 
algorithm of Theorem 5.3. of \cite{ro4}, we compute an algebraic integer $\alpha:=x\pmod{H(x)}$ 
generating the splitting field $\Q[x]/(H(x))$ of $F(X)$ such that the discriminant of the minimal 
polynomial $H(X)$ of $\alpha$ is not divisible by $p$. Define $\Aalg:=\Z[\alpha]/(p)$ and using the 
method described in Section 4 of \cite{ro4}, we efficiently compute a group $G$ of automorphisms 
of $\Aalg$ which is isomorphic to the Galois group of $\alpha$ over rationals. 

Let $\beta\in\Q[x]/(H(x))$ be a root of $F(X)$. Then $\beta=\sum_{i=0}^{m-1} a_i\alpha^i$ for 
some $a_i\in\Q$. {}From Proposition 13 of Chapter 3 in \cite{Lang}, for every $0\leq i <m$, $a_i$ can 
be written in the form $a_i=r_i/q_i$, where $r_i,q_i\in \Z$ and $q_i$ is coprime to $p$. Compute 
$t_i\in \Z$ with $t_iq_i\equiv 1\pmod{p}$. Then $\beta':=\sum_{i=0}^{m-1}r_it_i\alpha^i$ is in 
$\Z[\alpha]$ and the minimal polynomial of the element $\overline{\beta}:=\beta'\pmod{p}\in\Aalg$ 
is $f(X)$.
Let $\Calg$ be the subalgebra $\F_p[\overline{\beta}]$ contained in $\Aalg$. 
Notice that $\Calg$ is 
isomorphic to the algebra $\F_p[x]/(f(x))$. Let $\Balg$ be the ideal of
$\Calg$ generated by $f(\overline\beta)/g(\overline\beta)$. Then
$\Balg$ is isomorphic to the algebra $\F_p[x]/(g(x))$
and hence a zero divisor of $\Balg$ will give us a factor of 
$g(X)$. So we run the algorithm described in Theorem \ref{thm-galalg} on $G,\Aalg,\Balg$ 
and get either a factor of $g(X)$ or an automorphism of $\Balg$ of order $\dim_{\F_p} \Balg$ , thus 
finishing the proof.
\end{proof}

\section{Finding Automorphisms of Algebras via Kummer Extensions}\label{sec-findaut}

In this section we complete the proof of our main Theorem, i.e. given a commutative semisimple 
algebra $\Aalg$ over a finite field $k$ we can unconditionally find a nontrivial $k$-automorphism 
of $\Aalg$ in deterministic subexponential time. The proof involves computing tensor powers of
$\Aalg$, whose automorphisms we know, and then {\em bringing down} those automorphisms to $\Aalg$.
Before embarking on the proof we need to first see how to bring down automorphisms using Kummer 
extensions; and define notions related to tensor powers of $\Aalg$.

\subsection{Bringing Down Automorphisms of $\Dalg$ to $\Aalg\le\Dalg$}\label{sec-bring-down}

We do this by using Kummer extensions, so we first show how to embed a Kummer 
extension of $\Aalg$ into the cyclotomic extension of $\Dalg$.

\begin{lemma}\label{lem-primeext}
Let $\Aalg\leq\Dalg$ be commutative semisimple algebras over a finite field $k$ and let 
$r\not=char\ k$ be a prime. Then for any $x\in T_{\Dalg,r}\setminus\Aalg[\zeta_r]$
satisfying $c:=x^r\in\Aalg[\zeta_r]$, there is a unique ring homomorphism 
$\phi:\Aalg[\zeta_r][\sqrt[r]{c}]\rightarrow\Dalg[\zeta_r]$ that fixes $\Aalg[\zeta_r]$, 
maps $\sqrt[r]{c}$ to $x$ and:
\\(1) $\phi$ commutes with the action of $\Delta_r$, thus
$\phi(\Aalg[\zeta_r][\sqrt[r]{c}]_{\Delta_r})\subseteq\Dalg$.
\\(2) $\phi$ is injective if and only if its restriction to 
$\Aalg[\zeta_r][\sqrt[r]{c}]_{\Delta_r}$ is injective.
\\(3) If $\phi$ is not injective then we can find a zero divisor of $\Dalg$ in 
deterministic polynomial time .
\end{lemma}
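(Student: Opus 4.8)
The plan is to construct $\phi$ explicitly and then verify the three claimed properties in turn, with property (2) being the crux that makes the lemma useful for the "bringing down" machinery.

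First I would build the homomorphism. Since $\Aalg[\zeta_r][\sqrt[r]{c}]=\Aalg[\zeta_r][Y]/(Y^r-c)$ is a free $\Aalg[\zeta_r]$-module with basis $\{1,Y,\ldots,Y^{r-1}\}$, there is a unique $\Aalg[\zeta_r]$-algebra map to $\Dalg[\zeta_r]$ sending $Y\mapsto x$, because $x^r=c$ kills the defining relation. This is exactly $\phi$, and uniqueness is immediate. For property (1), I would check on generators: $\Delta_r$ fixes $\Aalg$, acts on $\zeta_r$ by $\zeta_r\mapsto\zeta_r^a$ on both sides, and on $\sqrt[r]{c}$ by $\rho_a(\sqrt[r]{c})=(\sqrt[r]{c})^{\omega(a)}$ (Lemma~\ref{lem-rhoa}); since $x\in T_{\Dalg,r}$ we have $x^{\rho_a}=x^{\omega(a)}$, so $\phi\circ\rho_a$ and $\rho_a\circ\phi$ agree on $\sqrt[r]{c}$, hence everywhere. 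Commuting with $\Delta_r$ forces $\phi$ to carry $\Delta_r$-fixed points into $\Delta_r$-fixed points of $\Dalg[\zeta_r]$, which by Lemma~\ref{lem-delta-fix} is exactly $\Dalg$.

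The heart is property (2). One direction is trivial: if $\phi$ is injective then so is its restriction. For the converse, suppose $\phi$ restricted to the subalgebra $\Aalg[\zeta_r][\sqrt[r]{c}]_{\Delta_r}$ is injective but $\phi$ has a nonzero kernel $K$. The idea is that $K$ is a $\Delta_r$-invariant ideal of the semisimple algebra $\Aalg[\zeta_r][\sqrt[r]{c}]$ (invariant because $\phi$ commutes with $\Delta_r$ and $\ker\phi$ is thus $\Delta_r$-stable), so by Fact~\ref{fac-comm-semisimple} it is generated by a $\Delta_r$-invariant idempotent $e$; then $e$ lies in the $\Delta_r$-fixed subalgebra and $\phi(e)=0$, contradicting injectivity there unless $e=0$. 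The one subtlety I would need to handle carefully is checking that $\ker\phi$ really is $\Delta_r$-invariant and that the idempotent generator can be taken $\Delta_r$-fixed --- this follows since applying $\rho_a$ permutes the primitive idempotents summing to $e$, and the sum is preserved, so $e^{\rho_a}=e$.

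Finally, property (3): if $\phi$ is not injective, then by (2) its restriction to $\Aalg[\zeta_r][\sqrt[r]{c}]_{\Delta_r}$ is not injective, so this restriction has a nontrivial kernel, and since $\Aalg[\zeta_r][\sqrt[r]{c}]_{\Delta_r}$ is a commutative semisimple algebra containing $\Aalg$ (indeed one can exhibit it concretely via Proposition~\ref{pro-rootext}, where $\sigma$ of order $r$ acts and the fixed ring is this algebra), a nonzero element of the kernel is a zero divisor there; pushing forward, $\phi$ of a suitable idempotent annihilator gives a nonzero idempotent $e\in\Dalg[\zeta_r]$ with, say, $ex=\zeta_r^j ex$ type relations forcing $e$ into $\Dalg$, yielding a zero divisor of $\Dalg$ --- all of this is computable by linear algebra over $k$ in time $poly(r,\log|\Dalg|)$. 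The main obstacle I anticipate is the bookkeeping in (3): making sure the zero divisor lands in $\Dalg$ itself rather than merely in $\Dalg[\zeta_r]$, which again reduces to a $\Delta_r$-invariance argument combined with Lemma~\ref{lem-delta-fix}, mirroring the proof of Lemma~\ref{lem-cyclicteich}.
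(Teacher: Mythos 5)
Your construction of $\phi$ and the verification of (1) match the paper. For part (2) you take a genuinely different route: the paper argues by a dimension count, using that $\zeta_r^0,\dots,\zeta_r^{r-2}$ freely generate $\Dalg[\zeta_r]$ over $\Dalg$ to get $\dim_k\phi(\Aalg[\zeta_r][\sqrt[r]{c}])=(r-1)\dim_k\phi(\Aalg[\zeta_r][\sqrt[r]{c}]_{\Delta_r})$ and compare with the analogous identity for the domain. You instead observe that $\ker\phi$ is a $\Delta_r$-invariant ideal, hence generated by a unique idempotent $e$, which is therefore $\Delta_r$-fixed; if $e\ne 0$ it would be a nonzero element of the fixed subalgebra killed by $\phi$. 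Both arguments are sound; yours is arguably tidier since it avoids the rank bookkeeping.

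Part (3) is where you have a genuine gap. Your plan is to ``push forward'' a zero divisor from $\Calg:=\Aalg[\zeta_r][\sqrt[r]{c}]_{\Delta_r}$ to $\Dalg$ via $\phi$, but the only obvious zero divisors you have in hand --- the nonzero elements of $I:=\ker(\phi|_\Calg)$ --- are exactly the elements that $\phi$ \emph{annihilates}, so they contribute nothing in $\Dalg$. The complementary idempotent maps to $1$, the kernel's idempotent maps to $0$, and $\phi$ restricted to the complement $J$ of $I$ is an isomorphism onto $\phi(\Calg)\subseteq\Dalg$, so nothing you name survives as a nontrivial zero divisor. Also, the ``main obstacle'' you anticipate (getting from $\Dalg[\zeta_r]$ down to $\Dalg$) is not the real one; that part is easy by $\Delta_r$-invariance, as in Lemma~\ref{lem-cyclicteich}. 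The actual difficulty is producing any zero divisor at all inside the complementary ideal $J$ (equivalently inside $\phi(\Calg)$): a priori $J$ could be a field and $\phi(\Calg)$ a field, so non-injectivity of $\phi$ alone does not hand you one. The paper's proof does the real work here: it invokes the semiregular automorphism $\sigma$ of $\Calg$ of order $r$ from Proposition~\ref{pro-rootext}, proves (via two sub-claims, one using $x\notin\Aalg[\zeta_r]$ and one using a pigeonhole on dimensions) that $J\ne J^{\sigma^i}$ for every $1\le i\le r-1$ while $J\cap J^{\sigma^i}\ne 0$ for some such $i$, and then extracts a zero divisor in $J$ by the refinement method of Lemma~\ref{lem-refine}. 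Your sketch contains none of this, so part (3) would need to be rebuilt essentially from scratch along those lines.
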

\begin{proof}
The existence and uniqueness of the homomorphism $\phi$ are obvious: 
the map from $\Aalg[\zeta_r][X]$ to $\Dalg[\zeta_r]$ which sends $X$
to $x$ factors through $\Aalg[\zeta_r][\sqrt[r]{c}]$.

As $x\in T_{\Dalg,r}$, for every $\rho_a\in\Delta_r$ we have
$\phi((\sqrt[r]{c})^{\rho_a})=\phi((\sqrt[r]{c})^{\omega(a)})=x^{\omega(a)}=
(\phi(\sqrt[r]{c}))^{\rho_a}$. On the other hand, for every $u\in \Aalg[\zeta_r]$ we 
have $\phi(u)^{\rho_a}=u^{\rho_a}=\phi(u^{\rho_a})$. As $\Aalg[\zeta_r]$ and 
$(\sqrt[r]{c})$ generate $\Aalg[\zeta_r][\sqrt[r]{c}]$, the two equalities above prove 
that $\phi$ commutes with the action of $\Delta_r$. As a consequence, 
$\phi(\Aalg[\zeta_r][\sqrt[r]{c}]_{\Delta_r})\subseteq\Dalg[\zeta_r]_{\Delta_r}=\Dalg$. 

Since the elements $\zeta_r^0,\ldots,\zeta_r^{r-2}$ form a free basis of 
$\Dalg[\zeta_r]$ as a $\Dalg$-module,  the subspaces $\zeta_r^i\Dalg$ of 
$\Dalg[\zeta_r]$ ($i=0,\ldots,r-2$) are independent over $k$. This means the images 
$\phi(\zeta_r^i(\Aalg[\zeta_r][\sqrt[r]{c}]_{\Delta_r}))$ are independent as well thus, 
$\dim_k\phi(\Aalg[\zeta_r][\sqrt[r]{c}])=(r-1)\dim_k\phi($
$\Aalg[\zeta_r][\sqrt[r]{c}]_{\Delta_r})$. This together with the fact 
$\dim_k\Aalg[\zeta_r][\sqrt[r]{c}]=(r-1)\dim_k\Aalg[\zeta_r][\sqrt[r]{c}]_{\Delta_r}$ 
means that $\phi$ is injective if and only if its restriction to 
$\Aalg[\zeta_r][\sqrt[r]{c}]_{\Delta_r}$ is.

To see the last assertion assume that $\phi$, and hence its restriction to 
$\Calg:=\Aalg[\zeta_r][\sqrt[r]{c}]_{\Delta_r}$, is not injective. We compute the 
kernel $I$ of $\phi|_\Calg$, clearly $I$ is a nonzero ideal of $\Calg$. Let $\sigma$ 
be the semiregular $k$-automorphism of $\Calg$ investigated in Proposition 
\ref{pro-rootext}, which also tells us that $\dim_k\Calg=r\dim_k\Aalg$. Assume that 
$\phi(\Calg)=:\Dalg'$. We compute $J:=\{u\in\Calg|uI=0\}$, the ideal complementary to 
$I$ so that $\Calg=I\oplus J$. Note that by the definition of $I$, the restriction of 
$\phi$ to $J$ yields an isomorphism $J\cong\Dalg'$. Hence finding a zero divisor in 
$J$ implies finding a zero divisor in $\Dalg$. Let $e_J$ be the identity 
element of $J$, then as $\phi$ fixes $\Aalg$, for all $a\in\Aalg$, 
$a=\phi(a)=\phi(e_Ja)$, in other words $\phi$ induces an isomorphism 
$e_J\Aalg\cong\Aalg$. Using this we now show that the action of $\sigma$ on $J$ yields 
a zero divisor in $J$. 
 
Firstly, we claim that for all $1\le i\le (r-1)$, $J\not=J^{\sigma^i}$. Suppose for 
some $1\le i\le (r-1)$, $J^{\sigma^i}=J$ and $\sigma^i$ fixes $J$, then 
$J\subseteq\Calg_{\sigma^i}=\Aalg$. This together with the fact that $\phi^{-1}$ 
injectively embeds $\Aalg$ in $J$ gives $J=\Aalg$, which implies that 
$\phi(\Calg)=\Aalg$, thus $\phi(\Aalg[\zeta_r][\sqrt[r]{c}])=\phi(\Calg[\zeta_r])=
\Aalg[\zeta_r]$ contradicting $x\not\in\Aalg[\zeta_r]$. The other case then is: for 
some $1\le i\le (r-1)$, $J^{\sigma^i}=J$ and the restriction of $\sigma^i$ to $J$ is 
a semiregular automorphism of order $r$ of $J$, therefore $\dim_k J=
r\dim_k J_{\sigma^i}\geq r\dim_k e_J\Aalg=r\dim_k\Aalg$ (as $\sigma^i$ fixes $\Aalg$ 
it has to fix $e_J\Aalg$), which contradicts to $\dim_k J<\dim_k\Calg=r\dim_k \Aalg$. 
Secondly, we claim that for some $i\in \{1,\ldots,r-1\}$, $J\cap J^{\sigma^i}\not=0$. 
Indeed, assuming the contrary, we would have $J^{\sigma^j}\cap J^{\sigma^i}=
(J\cap J^{\sigma^{i-j}})^{\sigma^{j}}=0$ whenever $i\not\equiv j\pmod{r}$, whence the 
$J^{\sigma^i}$ would be pairwise orthogonal ideals, whence $\dim_k J=
{\frac{1}{r}}\dim_k\sum_{t=0}^{r-1}J^{\sigma^t}\leq{\frac{1}{r}}\dim_k\Calg=
\dim_k\Aalg$. This together with the fact that $\phi^{-1}$ injectively embeds $\Aalg$ 
in $J$ gives $J=\Aalg$, which implies that $\phi(\Calg)=\Aalg$, thus 
$\phi(\Aalg[\zeta_r][\sqrt[r]{c}])=\phi(\Calg[\zeta_r])=\Aalg[\zeta_r]$ contradicting 
$x\not\in\Aalg[\zeta_r]$.

{}From the above two claims we get an $i\in \{1,\ldots,r-1\}$, for which 
$J\not=J^{\sigma^i}$ and $J\cap J^{\sigma^i}\not=0$, whence by the method of Lemma 
\ref{lem-refine} we get a zero divisor of $J$, thus finishing the proof.
\end{proof}

Now we show the main result of this subsection: bringing down automorphisms of $\Dalg$
to $\Aalg\le\Dalg$.

\begin{proposition}\label{pro-baseaut}
Given a commutative semisimple algebra $\Dalg$ over a finite field $k$, its semiregular 
$k$-automorphism $\tau$ of prime order $r\not=char\ k$, a subalgebra $\Aalg\supset k$ 
of $\Dalg$ such that $\frac{\dim_k\Dalg}{\dim_k\Aalg}$ is an integer not divisible by 
$r$. Then we can find in deterministic $poly(\log |\Dalg|)$ time either a zero divisor 
in $\Aalg$ or a subalgebra $\Calg\le\Aalg$ together with a semiregular automorphism 
$\tau'$ of $\Calg$ of order $r$ such that $\Calg_{\tau'}\geq \Aalg_\tau 
(:=\Aalg\cap\Dalg_\tau)$.
\end{proposition}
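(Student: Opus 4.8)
The plan is to fabricate, inside $\Aalg$ itself, a degree-$r$ Kummer-type extension of $\Aalg_\tau=\Aalg\cap\Dalg_\tau$ and to read off $(\Calg,\tau')$ from Lemma~\ref{lem-primeext} and Proposition~\ref{pro-rootext}. First, the harmless reductions: if $\chara k\le[\Dalg:\Aalg]$ we split $\Aalg$ into fields by Berlekamp's algorithm and are done, so assume $\chara k$ is large; and running the freeness test of Lemma~\ref{lem-nonfree} on $\Dalg$ as an $\Aalg$-module, we may assume $\Dalg$ is a free $\Aalg$-module, necessarily of rank $q:=[\Dalg:\Aalg]$ with $r\nmid q$, since otherwise a zero divisor of $\Aalg$ already drops out. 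Being a subalgebra of a product of fields, $\Aalg$ is reduced and hence itself a product of fields; and since $\tau$ fixes $\Aalg_\tau$ elementwise, $\Aalg_\tau$ is a commutative semisimple $k$-subalgebra of both $\Aalg$ and $\Dalg_\tau$ containing $k$.

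Applying Proposition~\ref{pro-teichresolv} to $(\Dalg,\tau)$ produces $x\in T_{\Dalg,r}$ with $x^\tau=\zeta_r x$ and $c:=x^r\in T_{\Dalg_\tau,r}$; here $\tau$ is extended to $\Dalg[\zeta_r]$ as usual, stays semiregular and commutes with $\Delta_r$, and $x\notin\Dalg_\tau[\zeta_r]$ since $x$ is a unit and $\zeta_r\ne1$. By Lemma~\ref{lem-cyclicteich} we may assume $T_{\Aalg,r}$ and $T_{\Aalg_\tau,r}$ are cyclic, otherwise a zero divisor of $\Aalg$ results. The structure we want already exists over $\Dalg$: the subalgebra $\Aalg_\tau[\zeta_r][x]$ of $\Dalg[\zeta_r]$ is invariant under $\tau$ (which fixes $\Aalg_\tau[\zeta_r]$ and scales $x$ by $\zeta_r$) and under $\Delta_r$ (which fixes $\Aalg_\tau$ and sends $\zeta_r\mapsto\zeta_r^a$, $x\mapsto x^{\omega(a)}$), and $\tau$ restricts on it to a semiregular automorphism of order exactly $r$; but the $\Delta_r$-fixed part of this subalgebra lies in $\Dalg$ with fixed subalgebra $\langle\Aalg_\tau,c\rangle$, which need not be contained in $\Aalg$. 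Bringing everything down into $\Aalg$ is the real work.

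The crux, and the only place where $r\nmid q$ is essential, is to exhibit $x'\in T_{\Aalg,r}\setminus\Aalg_\tau[\zeta_r]$ with $(x')^r\in T_{\Aalg_\tau,r}$, or else a zero divisor of $\Aalg$. Since $T_{\Aalg_\tau,r}\subseteq T_{\Aalg,r}$ are cyclic $r$-groups, such an $x'$ exists as soon as the inclusion is proper: writing $|T_{\Aalg,r}|=r^a$ with generator $g$ and $|T_{\Aalg_\tau,r}|=r^b$, the element $x':=g^{r^{a-b-1}}$ works whenever $b<a$, and $g,a,b,x'$ can be produced by standard manipulations of $r$-elements (Lemma~\ref{lem-discretelog}). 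So it remains to force $T_{\Aalg_\tau,r}\subsetneq T_{\Aalg,r}$ in the absence of zero divisors of $\Aalg$. This is a comparison of $r$-adic valuations of the orders of the relevant unit groups; already in the basic case where $\Aalg$ and $\Aalg_\tau$ are fields, the lifting-the-exponent lemma shows $[T_{\Aalg,r}:T_{\Aalg_\tau,r}]=r$ to the power $v_r\!\big([\Aalg[\zeta_r]:\Aalg]\cdot[\Aalg:\Aalg_\tau]/[\Aalg_\tau[\zeta_r]:\Aalg_\tau]\big)$, which is positive precisely because $r\nmid q$. Reducing the general product-of-fields case to this --- using the $\Delta_r$-stable tower $\Aalg[\zeta_r]\subseteq\Dalg[\zeta_r]$, the cyclicity assumptions above, and the orbit structure of $\tau$ on primitive idempotents, while converting every failure of freeness along the way into a $\Delta_r$-invariant obstruction and hence, via Lemma~\ref{lem-delta-fix} and Lemma~\ref{lem-refine} (a $\Delta_r$-invariant nontrivial idempotent of $\Aalg[\zeta_r]$ lies in $\Aalg$), into a zero divisor of $\Aalg$ --- is the main technical obstacle I would have to navigate.

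Given such $x'\in T_{\Aalg,r}\setminus\Aalg_\tau[\zeta_r]$ with $d:=(x')^r\in T_{\Aalg_\tau,r}$, the endgame is clean. Apply Lemma~\ref{lem-primeext} to the pair $\Aalg_\tau\le\Aalg$ and the element $x'$ (so $\Aalg$ plays the role of the lemma's ``$\Dalg$''): one obtains the canonical homomorphism $\phi:\Aalg_\tau[\zeta_r][\sqrt[r]{d}]\to\Aalg[\zeta_r]$ fixing $\Aalg_\tau[\zeta_r]$, sending $\sqrt[r]{d}\mapsto x'$ and commuting with $\Delta_r$. If $\phi$ is not injective, part~(3) of that lemma yields a zero divisor of $\Aalg$ and we are done. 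Otherwise $\phi$ restricts to an isomorphism of $\Aalg_\tau[\zeta_r][\sqrt[r]{d}]_{\Delta_r}$ onto $\Calg:=\phi\big(\Aalg_\tau[\zeta_r][\sqrt[r]{d}]_{\Delta_r}\big)$, with $\Calg\subseteq\Aalg$ by part~(1). Proposition~\ref{pro-rootext} applied to $\Aalg_\tau$ and $d\in T_{\Aalg_\tau,r}$ provides a semiregular automorphism $\sigma$ of $\Aalg_\tau[\zeta_r][\sqrt[r]{d}]_{\Delta_r}$ of order $r$ with fixed subalgebra $\Aalg_\tau$; transporting it across $\phi$ gives $\tau':=\phi\circ\sigma\circ\phi^{-1}$, a semiregular automorphism of $\Calg$ of order $r$ with $\Calg_{\tau'}=\phi(\Aalg_\tau)=\Aalg_\tau$, so $\Calg_{\tau'}\supseteq\Aalg_\tau$ as required. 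Each of Proposition~\ref{pro-teichresolv}, Lemma~\ref{lem-cyclicteich}, Lemma~\ref{lem-discretelog}, Lemma~\ref{lem-primeext}, Proposition~\ref{pro-rootext}, Lemma~\ref{lem-nonfree} and Lemma~\ref{lem-refine} runs in deterministic $poly(\log|\Dalg|)$ time, and only boundedly many calls are made, so the whole algorithm is $poly(\log|\Dalg|)$.
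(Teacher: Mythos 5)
Your proposal diverges substantially from the paper's argument, and it contains a real gap that you partially flag yourself.

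The endgame of your argument (applying Lemma~\ref{lem-primeext} to the pair $\Aalg_\tau\le\Aalg$ and a suitable $x'$, then transporting the automorphism from Proposition~\ref{pro-rootext}) is sound and matches what the paper does in its ``easy'' branch, where it has found $x\in T_{\Dalg,r}$ already lying in $\Aalg[\zeta_r]$. The problem is in the middle, where you claim to produce $x'\in T_{\Aalg,r}\setminus\Aalg_\tau[\zeta_r]$ with $(x')^r\in T_{\Aalg_\tau,r}$. Two issues.

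First, the ``main technical obstacle I would have to navigate'' that you point to is actually not where the obstruction lies. If $\Aalg$ has at least two simple components, then $T_{\Aalg,r}$ contains a copy of $\Z_r\times\Z_r$ (each component's Teichm\"uller group already contains $\zeta_r$), so it is automatically noncyclic and Lemma~\ref{lem-cyclicteich} would hand you a zero divisor \emph{provided you can exhibit witnesses $u,v$}. The genuine field-case inequality is also correct: if all freeness checks pass, $[\Aalg:\Aalg_\tau]\,[\Dalg:\Aalg]=[\Dalg:\Dalg_\tau]\,[\Dalg_\tau:\Aalg_\tau]=r\,[\Dalg_\tau:\Aalg_\tau]$ forces $r\mid[\Aalg:\Aalg_\tau]$, and your lifting-the-exponent computation then gives $T_{\Aalg_\tau,r}\subsetneq T_{\Aalg,r}$ (one must check the $r=2$ subtlety of LTE, but it goes through because $\mathrm{ord}_r(|k|)$ is coprime to $r$). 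So the mathematics you gesture at is right.

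Second, and this is the serious gap: your construction of $x'$ presupposes that you can deterministically compute a generator $g$ of $T_{\Aalg,r}$ together with $v_r(|T_{\Aalg,r}|)$ and $v_r(|T_{\Aalg_\tau,r}|)$. Lemma~\ref{lem-discretelog} compares two \emph{given} $r$-elements; it does not produce a generator of a Teichm\"uller subgroup. In fact nowhere in the paper is a generator of $T_{\Aalg,r}$ computed for a generic $\Aalg$: the only elements of $T_{\cdot,r}$ the paper ever constructs come from Lagrange resolvents (Lemma~\ref{lem-auteigen}, Proposition~\ref{pro-teichresolv}) with respect to \emph{known} automorphisms --- and here we have no automorphism of $\Aalg$ (producing one is precisely the goal of the proposition). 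So the step ``$g,a,b,x'$ can be produced by standard manipulations of $r$-elements'' is circular; you would need, for example, to solve the very problem of finding $r$-th nonresidues that GRH was being used for in the first place.

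The paper's proof avoids this entirely by never working directly in $T_{\Aalg,r}$. It computes $x\in T_{\Dalg,r}$ with $x^\tau=\zeta_r x$ from the given $\tau$. If $x\in\Aalg[\zeta_r]$, it finishes as you do. If not, it deliberately produces a zero divisor \emph{in $\Dalg$, not in $\Aalg$}: taking the largest power $y$ of $x$ with $y\notin\Aalg[\zeta_r]$ but $y^r\in\Aalg[\zeta_r]$, Lemma~\ref{lem-primeext} gives a zero divisor unless $\Aalg[\zeta_r][y]\cong\Aalg[\zeta_r][\sqrt[r]{c'}]$, and in that case $\Dalg_0:=\Aalg[\zeta_r][y]_{\Delta_r}$ has rank $r$ over $\Aalg$, so $\Dalg$ cannot be a free $\Dalg_0$-module (since $r\nmid[\Dalg:\Aalg]$) and Lemma~\ref{lem-nonfree} yields a zero divisor in $\Dalg_0\subseteq\Dalg$. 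The proof then uses Lemma~\ref{lem-refine} to $\tau$-invariantly decompose $\Dalg$, checks the projections of $\Aalg$ for injectivity (any failure gives a zero divisor of $\Aalg$), and selects --- by a divisibility counting argument --- a $\tau$-invariant ideal $I_j$ with $r\nmid\dim I_j/\dim\Aalg$ in which to recurse. This recursive shrinking of $\Dalg$ is the mechanism that replaces the explicit Teichm\"uller-group computation your proposal relies on; without it, your plan has no deterministic way to produce the element $x'$ you need.
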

\begin{proof}
We use the method of Proposition \ref{pro-teichresolv} to find an element 
$x\in T_{\Dalg,r}$ such that $x^\tau=\zeta_r x$. If $x\in\Aalg[\zeta_r]$ then we define 
$\Calg:=\Aalg_\tau[\zeta_r][x]_{\Delta_r}$. As $\tau$ fixes $\zeta_r$ while $\Delta_r$ 
fixes $\Dalg$, $\tau$ commutes with $\Delta_r$. Thus, $\Calg_\tau=
(\Aalg_\tau[\zeta_r][x]_\tau)_{\Delta_r}=\Aalg_\tau[\zeta_r]_{\Delta_r}=\Aalg_\tau$. 
This means that we have the $\Calg$ and the $\tau':=\tau|_\Calg$ as promised. On the 
other hand if $x\not\in\Aalg[\zeta_r]$ then we claim that we can find a zero divisor in 
$\Dalg$, decompose $\Dalg$ into a direct sum of orthogonal ideals and construct the 
$\Calg$ and the $\tau'$ in one of the ideals recursively.

Say $x\not\in\Aalg[\zeta_r]$, then since $x^{r^t}=1_\Dalg\in\Aalg$ for some integer 
$t>0$, we can choose a $y\in\{x,x^r,x^{r^2},\ldots\}$ such that  
$y\not\in\Aalg[\zeta_r]$ but $c^{\prime}:=y^r\in\Aalg[\zeta_r]$. By Lemma 
\ref{lem-primeext}, we can find a zero divisor in $\Dalg$ unless
$\Aalg[\zeta_r][\sqrt[r]{c^\prime}]$ is isomorphic to the subalgebra 
$\Aalg[\zeta_r][y]$. In the latter case 
$\Dalg_0:=\Aalg[\zeta_r][y]_{\Delta_r}\leq\Dalg$ is a free $\Aalg$-module of rank $r$, 
by Proposition \ref{pro-rootext}. Comparing dimensions it follows that $\Dalg$ cannot 
be a free $\Dalg_0$-module, therefore we can find a zero divisor $z$ in $\Dalg_0$ by 
Lemma \ref{lem-nonfree}. Thus, whenever $x\not\in\Aalg[\zeta_r]$, we can find a zero 
divisor $z$ in $\Dalg$. 

We proceed with computing the ideal of $\Dalg$ generated by $z$ and using Lemma 
\ref{lem-refine}, obtain a $\tau$-invariant decomposition of $\Dalg$ into the 
orthogonal ideals $I_1,\ldots,I_t$. For $1\le j\le t$, we denote by $\phi_j$ the 
projection $\Dalg\rightarrow I_j$. We can assume that for all $j$, $\phi_j|_\Aalg$ is 
injective as otherwise we find a zero divisor in $\Aalg$ and let 
$E\subseteq\{I_1,\ldots,I_t\}$ be a set of representatives of all the $r$-sized orbits 
of $\tau$. We have $\frac{\dim_k\Dalg}{\dim_k\Aalg}=
\sum_{j=1}^t\frac{\dim_k I_j}{\dim_k \Aalg}=
\sum_{I_j^\tau=I_j}\frac{\dim_k I_j}{\dim_k \Aalg}+
r\sum_{I_j\in E}\frac{\dim_k I_j}{\dim_k  \Aalg}$,
from which we infer that the first sum is nonempty and includes at least one term not 
divisible by $r$, therefore we can choose an index $j$ such that $I_j$ is 
$\tau$-invariant and $r\not|\frac{\dim_k I_j}{\dim_k \Aalg}$. So we can proceed with 
$I_j$ and $\phi_j\Aalg\cong\Aalg$ in place of $\Dalg$ and $\Aalg$ respectively in the 
algorithm described above. 

The process described above stops when either we find a zero divisor in $\Aalg$ or an
element $x\in T_{\Aalg',r}$ with $x^\tau=\zeta_r x$, where $\Aalg'\cong\Aalg$ is the 
image of $\Aalg$ under the projection $\phi$ of $\Dalg$ to some $\tau$-invariant ideal 
$I$. In the latter case we compute the subalgebra 
$\Calg':=\Aalg'_\tau[\zeta_r][x]_{\Delta_r}$. Finally put $\Calg:=\phi^{-1}(\Calg')$ 
and $\tau':=\phi^{-1}\circ\tau\circ\phi$. Notice that, if $e_I$ is the identity 
element of $I$ then $\tau$ will fix $e_I$ and $\phi:\Dalg\rightarrow I$ will just be 
the homomorphism $d\mapsto e_I d$, thus $\tau$ commutes with $\phi$. Consequently, 
$\Calg_{\tau'}=\phi^{-1}(\Calg'_\tau)=\phi^{-1}(\Aalg'_\tau)\geq \Aalg_\tau$.
\end{proof}

\subsection{Essential Part of the Tensor Power}\label{sec-essential}

Let $\Aalg$ be a commutative semisimple algebra over a finite field $k$. Let
$\Balg$ be its subalgebra such that $k\subseteq\Balg$ and $\Aalg$ be a free
module over $\Balg$ of rank $m$. If $char\ k\le m^2$ then polynomial factorization 
can be done in deterministic time by Berlekamp's algorithm and consequently,
all our results can be obtained easily. So we assume from now on that  
$char\ k>m^2$. But then we can also assume that $\Aalg$ is a simple extension 
algebra of $\Balg$ and find a primitive element $\alpha$ by running an 
algorithmic version of Fact \ref{fac-primitive} (if this ``fails'' then it gives a zero 
divisor of $\Aalg$). If $g(X)\in\Balg[X]$ is a minimal polynomial of $\alpha$ then 
we have that $\Aalg=\Balg[X]/(g(X))$.

It was shown by R\'onyai \cite{ro1} that, under GRH, a zero divisor in $\Aalg$ 
can be found in time $poly((\dim_k\Aalg)^r, \log |k|)$ if $r$ is a prime divisor 
of $\dim_k\Aalg$. In this section we extend the method of \cite{ro1} and obtain a GRH-free 
version that will be crucial in the proof of Main Theorem. A key idea of R\'onyai was
to work in the {\em essential part} of the tensor powers of $\Aalg$. Before going to the
formal definition of it we give a motivating definition assuming $\Aalg=k[X_1]/(f(X_1))$,
the essential part of $\Aalg^{\otimes_k 2}:=$ $\Aalg\otimes_k\Aalg$ is its ideal isomorphic 
to the algebra: 
$$k[X_1,X_2]/(f(X_1), f_2(X_1,X_2)),\ \text{~where } f_2(X_1,X_2):=
\frac{f(X_2)}{X_2-X_1}\in\Aalg[X_2].$$ 
Similarly, we can write down an expression for the essential part of 
$\Aalg^{\otimes_k r}$ inductively, as a factor algebra of $k[X_1,\ldots,X_r]$.

{\bf Functional interpretation of tensor powers: }
Let a commutative semisimple $\Aalg$ be a simple extension algebra over $\Balg\supseteq k$ 
such that $\Aalg=\Balg[X]/(g(X))$ and $g(X)\in\Balg[X]$ is 
a monic polynomial of degree $m$. Let $r\leq m$. We 
consider the $r$-th tensor power $\Aalg^{\otimes_\Balg r}$ ($\Aalg$ tensored with itself 
$r$ times wrt $\Balg$). To define (and compute) the essential part of this tensor power 
it is convenient to interpret $\Aalg$ as a collection of functions $V\rightarrow\overline
\Balg$ that are expressible as a polynomial over $\Balg$ (called {\em $\Balg$-polynomial 
functions}), where $\overline\Balg:=\overline k\otimes_k\Balg$ is the {\em algebraic closure} 
of $\Balg$ and  $V\subset\overline\Balg$ is a set of roots of $g(X)$. If 
$\Balg$ is not a field then there are various possibilities for $V$ and we need one with 
$\prod_{v\in V}(X-v)= g(X)$. Such a $V$ clearly exists by the definition of the algebraic 
closure. This {\em functional interpretation} of $\Aalg$ generalizes to 
$\Aalg\otimes_\Balg\Aalg$, which now becomes the set of all $\Balg$-polynomial functions 
from the set $V\times V$ to $\overline\Balg$ and finally $\Aalg^{\otimes_\Balg r}$ is the 
set of all $\Balg$-polynomial functions from the set $V^r$ to $\overline\Balg$. Note that 
in this interpretation a rank $1$ tensor element $h_1\otimes\cdots\otimes h_r$ in 
$\Aalg^{\otimes_\Balg r}$ corresponds to the function $V^r\rightarrow \overline\Balg$ 
that maps $(v_1,\ldots,v_r)\mapsto$ $h_1(v_1)\cdots h_r(v_r)$ .

{\bf Essential part of tensor powers: }
The {\em essential part} $\widetilde{\Aalg^{\otimes_\Balg r}}$ of 
$\Aalg^{\otimes_\Balg r}$ is the subset of functions that vanish on all 
the $r$-tuples $(v_1,\ldots,v_r)$ that have $v_i=v_j$ for some $i\neq j$. It can be seen that 
$\widetilde{\Aalg^{\otimes_\Balg r}}$ is an ideal of $\Aalg^{\otimes_\Balg r}$. We show below 
that given a basis of $\Aalg$ over $\Balg$ we can directly compute a basis for 
$\widetilde{\Aalg^{\otimes_\Balg r}}$ over $\Balg$.

\begin{lemma}\label{lem-essential-part}
A basis for $\widetilde{\Aalg^{\otimes_\Balg r}}$ over $\Balg$ can be computed 
by a deterministic algorithm in time $poly(m^r, \log |\Aalg|)$. 
\end{lemma}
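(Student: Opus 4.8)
The plan is to exhibit $\widetilde{\Aalg^{\otimes_\Balg r}}$ via an explicit triangular presentation that extends, one variable at a time, the $r=2$ description recalled above. Since $char\ k>m^2$ we may (by Fact~\ref{fac-primitive}) take $\Aalg=\Balg[X]/(g(X))$ with $g$ monic of degree $m$; the existence of a primitive element moreover forces $g$ to be squarefree over each field component of $\Balg$, so that over $\overline\Balg:=\overline k\otimes_k\Balg$ the root set $V$ of $g$ has $m$ distinct elements on each component and $\Aalg^{\otimes_\Balg r}$, base-changed to $\overline\Balg$, becomes the algebra of all functions $V^r\to\overline\Balg$. In this model $\widetilde{\Aalg^{\otimes_\Balg r}}$ is the ideal of functions vanishing on the bad locus $B=\{\,(v_1,\ldots,v_r):v_i=v_j\text{ for some }i\ne j\,\}$; using that the algebra is reduced, a one-line argument shows $\widetilde{\Aalg^{\otimes_\Balg r}}=D\cdot\Aalg^{\otimes_\Balg r}$ where $D:=\prod_{1\le i<j\le r}(X_i-X_j)$ is the Vandermonde element, and hence that its $\Balg$-rank is $m(m-1)\cdots(m-r+1)$.

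Concretely I would proceed recursively. Put $h_1(X_1):=g(X_1)$, and for $2\le j\le r$ let
$$h_j(X_1,\ldots,X_j)\ :=\ \frac{h_{j-1}(X_1,\ldots,X_{j-2},X_j)-h_{j-1}(X_1,\ldots,X_{j-1})}{X_j-X_{j-1}},$$
reduced modulo $(h_1,\ldots,h_{j-1})$. This division leaves no remainder, and $h_j$ is monic of degree $m-j+1$ in $X_j$: specialising $X_1,\ldots,X_{j-1}$ to any $j-1$ distinct roots of $g$, the $X_j$-roots of $h_j$ are exactly the $m-j+1$ remaining elements of $V$. One then checks that $g(X_\ell)\in(h_1,\ldots,h_\ell)$ for every $\ell$, so reduction modulo $(h_1,\ldots,h_r)$ is a well-defined $\Balg$-algebra map $\pi: \Aalg^{\otimes_\Balg r}\to Q:=\Balg[X_1,\ldots,X_r]/(h_1,\ldots,h_r)$, and that $\pi$ restricts to an isomorphism of $\Balg$-modules (and of nonunital rings) from $\widetilde{\Aalg^{\otimes_\Balg r}}$ onto $Q$. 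Because the $h_j$ are monic in $X_j$ of degree $m-j+1$, the monomials $\{\,X_1^{a_1}\cdots X_r^{a_r}\mid 0\le a_j\le m-j\,\}$ form a free $\Balg$-basis of $Q$ of the expected rank $m(m-1)\cdots(m-r+1)$. This already proves the lemma if one is content to hand back the presentation $Q$; to return the basis inside $\Aalg^{\otimes_\Balg r}$ expressed in its standard monomial basis $\{\,X_1^{a_1}\cdots X_r^{a_r}\mid 0\le a_i<m\,\}$, multiply those monomials by the idempotent generator $e$ of $\widetilde{\Aalg^{\otimes_\Balg r}}=e\Aalg^{\otimes_\Balg r}$ (Fact~\ref{fac-comm-semisimple}), which is recovered from $D$ (or from $\ker\pi$) by a kernel computation followed by extraction of an idempotent, all linear algebra over $k$.

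For the running time, every polynomial that occurs has at most $m^r$ monomials, each with a coefficient in $\Balg$ stored through its $\dim_k\Balg=(\dim_k\Aalg)/m$ coordinates over $k$, so elements of $\Aalg^{\otimes_\Balg r}$ are vectors of length $m^{r-1}\dim_k\Aalg$ over $k$. Forming the $g(X_i)$, performing the $r-1$ difference-quotient divisions and reductions that define the $h_j$, writing down the multiplication table of $\Aalg^{\otimes_\Balg r}$, computing $D$ and then $e$, and multiplying out the basis vectors, are all polynomial-size matrix and polynomial manipulations; together they cost $poly(m^r,\dim_k\Aalg)$ operations in $k$, i.e. $poly(m^r,\log|\Aalg|)$ bit operations since $\log|\Aalg|=(\dim_k\Aalg)\log|k|$.

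The main obstacle is the inductive step over a base $\Balg$ that need not be a field. Over a field one would simply adjoin ``$X_j$ equal to a fresh root of $g$'' at each stage; here one must instead argue componentwise --- this is exactly the point at which squarefreeness of $g$ over each component of $\Balg$, hence the primitive-element hypothesis, enters --- that each division by $X_j-X_{j-1}$ leaves no remainder, that $h_j$ stays monic of the claimed $X_j$-degree, and that the resulting monomials are genuinely free over $\Balg$ and not merely over $k$. Establishing this triangular normal form is what simultaneously shows $\widetilde{\Aalg^{\otimes_\Balg r}}$ is $\Balg$-free of rank $m(m-1)\cdots(m-r+1)$, identifies it with $Q$, and produces the basis; everything after that is routine linear algebra over $k$.
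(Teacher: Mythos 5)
Your proposal is correct, but it takes a genuinely different route from the paper's. The paper's proof is a short, coordinate-free piece of linear algebra: for each pair $i<j$ one computes $\Delta^r_{i,j}=\{\,b\in\Aalg^{\otimes_\Balg r}\ |\ (\mu_i(a)-\mu_j(a))b=0\ \forevery a\in\Aalg\,\}$ (the ideal of functions vanishing where $v_i\neq v_j$) by solving a linear system in the multiplication table of $\Aalg^{\otimes_\Balg r}$, and then $\widetilde{\Aalg^{\otimes_\Balg r}}$ is obtained as the annihilator of $\sum_{i<j}\Delta^r_{i,j}$ by one more linear-algebra step. It needs no primitive-element presentation, no Vandermonde factor, and no polynomial division. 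Your argument instead builds the explicit triangular presentation $Q=\Balg[X_1,\ldots,X_r]/(h_1,\ldots,h_r)$ by iterated difference quotients, in the spirit of R\'onyai's original construction and of the motivating $r=2$ remark in the text; this buys you more -- an explicit free monomial basis and a presentation of the essential part as a quotient ring, together with the rank formula $m(m-1)\cdots(m-r+1)$ and the Vandermonde generator $D$ -- at the price of carefully checking exact divisibility, monicity in $X_j$, and freeness over a non-field base, all of which you correctly flag as the delicate points. Both yield the $poly(m^r,\log|\Aalg|)$ bound; the paper's route is the more economical for the stated lemma, while yours exhibits structure that the later sections in fact use implicitly (the rank count and the $S_r$-action).
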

\begin{proof}
Consider embeddings $\mu_i$ of $\Aalg$ into
$\Aalg^{\otimes_\Balg r}$ ($i=1,\ldots,r$) given as $\mu_i(a)=
1\otimes\ldots\otimes 1 \otimes a \otimes 1 \otimes \ldots\otimes 1$ where $a$ is 
in the $i$-th place. In the interpretation as functions,
$\mu_i(\Aalg)$ correspond to the $\Balg$-polynomial functions
on $V^r$ which depend only on the $i$th element in the tuples.
Observe that the set, for $1\le i< j\le r$:
$$\Delta^r_{i,j}=\{b\in \Aalg^{\otimes_\Balg r}\ |\ (\mu_i(a)-\mu_j(a))b=0
\forevery a\in \Aalg\}$$
is the ideal of $\Aalg^{\otimes_\Balg r}$ consisting of the $\Balg$-polynomial functions 
which are zero on every tuple $(v_1,\ldots,v_r)$ with $v_i\neq v_j$. Given a basis for
$\Aalg$, a basis for $\Delta^r_{i,j}$ can be computed
by solving a system of linear equations in time (counting $k$-operations as unit time)
polynomial in $\dim_k\Aalg^{\otimes_\Balg r}=m^r \dim_k\Balg$. 
Finally, notice that $\widetilde{\Aalg^{\otimes_\Balg r}}$ can be computed as well since 
it is the annihilator of $\sum_{1\le i<j\le r}\Delta^r_{i,j}$.
\end{proof}

{\bf Automorphisms of the essential part: }
The symmetric group $S_r$ acts as a group of automorphisms 
of $\Aalg^{\otimes_\Balg r}$. The action of $\pi\in S_r$ is 
the $\Balg$-linear extension of the map 
$h_1\otimes \cdots\otimes h_r
\mapsto h_{\pi(1)}\otimes \cdots\otimes h_{\pi(r)}$.
This action is not semiregular on the tensor power algebra as it fixes the set $I_0$ of
$\Balg$-polynomial functions on $V^r$ that are zero on all the points 
$V^r\setminus\{(v,\ldots,v)|v\in V\}$, where $I_0$ can be seen to be an ideal of   
$\Aalg^{\otimes_\Balg r}$. However, the ideal 
$\widetilde{\Aalg^{\otimes_\Balg r}}$
is invariant under this action and on it $S_r$ acts semiregularly. 

{\bf Embedding $\Aalg$ in the essential part: }
$\Aalg$ can be embedded into $\Aalg^{\otimes_\Balg r}$
by sending $h\in\Aalg$ to $h\otimes 1_\Aalg \otimes \cdots \otimes 1_\Aalg$.
Composing this embedding with the projection onto ideal $\widetilde{\Aalg^{\otimes_\Balg r}}$
(which exists by the semisimplicity of the tensor power) we obtain an embedding of 
$\Aalg$ in $\widetilde{\Aalg^{\otimes_\Balg r}}$. 

Note that the ideal $\widetilde{\Aalg^{\otimes_\Balg r}}$ is a free
$\Balg$-module of rank $m\cdots(m-r+1)$. 
Denoting the above embedding of $\Aalg$ also by $\Aalg$, if $r$ is a 
prime divisor of $m$ then $m\cdots(m-r+1)/m=$
$\dim_k {\widetilde{\Aalg^{\otimes_\Balg r}}}/\dim_k\Aalg $ is not divisible by $r$ and we can 
apply Proposition \ref{pro-baseaut} with $\widetilde{\Aalg^{\otimes_\Balg r}}$
as $\Dalg$ and the cyclic permutation $(1\ldots r)$ as $\tau$. This immediately gives us the 
following GRH-free version of the result of \cite{ro1}:

\begin{theorem}\label{thm-constructaut}
Let $\Balg$ be a subalgebra of a commutative semisimple algebra $\Aalg$ over a finite field 
$k$ such that $k\subseteq\Balg$; let $\Aalg$ be a free $\Balg$-module of rank $m$; and let 
$r$ be a prime divisor of $m$. Then in deterministic $poly(m^r, \log |\Aalg|)$ time one can
either find a zero divisor in $\Aalg$ or compute a subalgebra $\Calg$ of $\Aalg$ together with 
a semiregular automorphism $\tau$ of $\Calg$ of order $r$ such that $\Calg_\tau\ge\Balg$.
\end{theorem}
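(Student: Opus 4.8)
The plan is to obtain this as a single instance of Proposition~\ref{pro-baseaut} applied to the essential part of the $r$-th tensor power of $\Aalg$ over $\Balg$, exactly along the lines of the paragraph preceding the statement. First I would dispose of the case $\chara k\le m^2$: there polynomial factoring over $k$ (and over $\Balg$) is deterministically feasible by Berlekamp's algorithm within the allotted time, so either we read off a zero divisor of $\Aalg$ or $\Aalg$ is a field, in which case $\Balg$ is a subfield, $r\mid[\Aalg:\Balg]=m$, and we may take $\Calg$ to be the intermediate field of degree $r$ over $\Balg$ with a suitable power of the $|k|$-power Frobenius as its order-$r$ (necessarily semiregular) automorphism, whose fixed subalgebra is $\Balg$. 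So from now on assume $\chara k>m^2$ and run the algorithmic primitive element theorem (Fact~\ref{fac-primitive}, as in the opening of Section~\ref{sec-essential}) to either find a zero divisor or write $\Aalg=\Balg[X]/(g(X))$ with $g$ monic of degree $m$.

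Next I would set $\Dalg:=\widetilde{\Aalg^{\otimes_\Balg r}}$ and $\sigma:=(1\,2\,\cdots\,r)\in S_r$, compute a $k$-basis of $\Dalg$ in time $poly(m^r,\log|\Aalg|)$ by Lemma~\ref{lem-essential-part}, and verify the hypotheses of Proposition~\ref{pro-baseaut} for the pair $(\Dalg,\Aalg)$ with $\sigma|_\Dalg$ as the automorphism. Namely: $\Dalg$, being an ideal of the commutative semisimple algebra $\Aalg^{\otimes_\Balg r}$, is commutative semisimple; $\sigma$ leaves the ideal $\Dalg$ invariant and, as recorded before the statement, acts there semiregularly, so (since $m>1$ forces $\sigma|_\Dalg\ne\mathrm{id}$) $\sigma|_\Dalg$ is a semiregular $k$-automorphism of $\Dalg$ of prime order $r$, with $r\ne\chara k$ because $\chara k>m^2\ge r$; composing $h\mapsto h\otimes1\otimes\cdots\otimes1$ with the projection of $\Aalg^{\otimes_\Balg r}$ onto $\Dalg$ (available by semisimplicity) embeds $\Aalg\supseteq k$ into $\Dalg$; and $\dim_k\Dalg/\dim_k\Aalg=(m-1)(m-2)\cdots(m-r+1)$ is a positive integer coprime to $r$, since $r$ is prime and $m\equiv0\pmod r$ make each factor a nonzero residue mod $r$. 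Since also $\log|\Dalg|\le m^r\log|\Aalg|$, Proposition~\ref{pro-baseaut} runs within the claimed bound and returns either a zero divisor of $\Aalg$ --- which, pulled back through the identification of the embedded copy with the original $\Aalg$, is what we want --- or a subalgebra $\Calg\le\Aalg$ with a semiregular order-$r$ automorphism $\tau$ satisfying $\Calg_\tau\ge\Aalg\cap\Dalg_\sigma$.

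It then remains to check that the embedded $\Balg$ lies in $\Aalg\cap\Dalg_\sigma$. It lies in the embedded $\Aalg$ because $\Balg\subseteq\Aalg$; and it lies in $\Dalg_\sigma$ because, by the $\Balg$-algebra identity $r(ab)=(ra)b=a(rb)$, the diagonal image of $b\in\Balg$ equals the symmetric tensor $1\otimes\cdots\otimes b\otimes\cdots\otimes1$ no matter which slot carries $b$, hence is fixed by all of $S_r$, and it stays $\sigma$-fixed after multiplication by the $\sigma$-fixed central idempotent $1_\Dalg$. Thus $\Balg\subseteq\Aalg\cap\Dalg_\sigma\subseteq\Calg_\tau$, which is the conclusion. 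I do not anticipate a real obstacle here: the substantive work --- descending an automorphism from $\Dalg$ to a subalgebra of $\Aalg$ --- is entirely inside Proposition~\ref{pro-baseaut}, and computing the essential part efficiently is Lemma~\ref{lem-essential-part}; the only points in the present argument needing a moment's attention are the coprimality of $[\Dalg:\Aalg]$ to $r$ and the symmetric-tensor description of the embedded $\Balg$, both handled above.
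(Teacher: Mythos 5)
Your proof is correct and follows essentially the same route as the paper: compute the essential part $\Dalg=\widetilde{\Aalg^{\otimes_\Balg r}}$, take $\sigma=(1\cdots r)$, check that $[\Dalg:\Aalg]=(m-1)\cdots(m-r+1)$ is coprime to $r$, and invoke Proposition~\ref{pro-baseaut}; the paper presents this in the two sentences immediately preceding the theorem rather than as a displayed proof. The extra checks you supply (the $\chara k\le m^2$ case via Berlekamp, the verification that $\sigma$ restricts to a genuine order-$r$ semiregular automorphism of the nonzero ideal $\Dalg$, and the $S_r$-fixedness of the embedded $\Balg$ so that $\Balg\subseteq\Aalg\cap\Dalg_\sigma\subseteq\Calg_\tau$) are all correct elaborations of what the paper leaves implicit.
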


In the proof of Main Theorem we will need one more property of the essential part of the 
tensor square.

{\bf Left and Right Mappings: }
Note that there are two ways to map $\Aalg$ into an ideal $I\unlhd
\widetilde{\Aalg\otimes_\Balg\Aalg}$:
{\em either} by first embedding $\Aalg$ into $\Aalg\otimes_\Balg\Aalg$ by 
$h\mapsto h\otimes 1$ {\em or} by first embedding $\Aalg$ into $\Aalg\otimes_\Balg\Aalg$ 
by $h\mapsto 1\otimes h$, and then projecting to the ideal $I$ (which is also an ideal of 
$\Aalg\otimes_\Balg\Aalg$). The former we call the {\em left} mapping while the latter 
the {\em right} mapping (of $\Aalg$ into $I$). 

We will now show that these two mappings of $\Aalg$ into $I\unlhd
\widetilde{\Aalg\otimes_\Balg\Aalg}$ are quite different if $I$ is large enough.

\begin{lemma}\label{lem-left-right}
Let $m:=dim_\Balg\Aalg$ and $I$ be a nonzero ideal of 
$\widetilde{\Aalg\otimes_\Balg\Aalg}$.
Let $\tau_1:\Aalg\rightarrow I$ be the left mapping of $\Aalg$ while 
$\tau_2$ be the right mapping of $\Aalg$ into $I$. 
Then there exists an element $x\in\Aalg$ such that
$\tau_1(x)\neq \tau_2(x)$. Furthermore, if
$dim_k I/\dim_k\Balg >m$ then $\tau_1(\Aalg)\neq\tau_2(\Aalg)$. 
\end{lemma}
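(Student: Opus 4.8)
The plan is to work entirely in the functional interpretation of $\widetilde{\Aalg\otimes_\Balg\Aalg}$ described above: after passing to $\overline\Balg$ and a root set $V\subset\overline\Balg$ of $g(X)$ with $\prod_{v\in V}(X-v)=g(X)$, the algebra $\widetilde{\Aalg\otimes_\Balg\Aalg}$ becomes the set of $\Balg$-polynomial functions on the off-diagonal set $W:=\{(v,w)\in V\times V\mid v\neq w\}$. In this picture the left mapping $\tau_1$ sends $h\in\Aalg$ to the function $(v,w)\mapsto h(v)$ (restricted to $W$), and the right mapping $\tau_2$ sends $h$ to $(v,w)\mapsto h(w)$. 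The nonzero ideal $I$ corresponds to the subset of functions supported on some nonempty $S\subseteq W$ (a union of "rows and columns" compatible with the $\Balg$-structure, i.e. $S$ is $\mathrm{Gal}$-stable and cut out by idempotents), and the mappings $\tau_i$ into $I$ are obtained by post-composing with restriction to $S$.

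**First claim: $\tau_1(x)\neq\tau_2(x)$ for some $x$.** First I would note that since $m=\dim_\Balg\Aalg=|V|\geq 2$ (if $m=1$ the essential part is zero, contradicting $I\neq 0$), there are at least two distinct points; pick $(v_0,w_0)\in S$, so $v_0\neq w_0$. By the primitive element setup there is $x\in\Aalg$ with $x(v_0)\neq x(w_0)$ — concretely one may take $x=\alpha$, the primitive element, since the elements of $V$ are exactly the distinct roots of the separable polynomial $g$. Then $\tau_1(x)$ and $\tau_2(x)$ disagree at $(v_0,w_0)\in S$, hence are distinct as elements of $I$. This part is essentially immediate once the functional dictionary is in place.

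**Second claim: $\tau_1(\Aalg)\neq\tau_2(\Aalg)$ when $\dim_k I/\dim_k\Balg>m$.** This is where the real content lies, and I expect it to be the main obstacle. The idea is a dimension count on the images. Both $\tau_1(\Aalg)$ and $\tau_2(\Aalg)$ are homomorphic images of $\Aalg$, hence have $\Balg$-rank at most $m$ (and since $\tau_1$ is injective — the restriction of an embedding followed by a projection that kills nothing of $\Aalg$, as $\Aalg$ sits inside the diagonal complement — actually rank exactly $m$; but I only need the upper bound). If we had $\tau_1(\Aalg)=\tau_2(\Aalg)=:C$, then $C$ would be a subalgebra of $I$ of $\Balg$-rank $\leq m$ containing, in the functional picture, both all "row-functions" $h(v)|_S$ and all "column-functions" $h(w)|_S$. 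I would argue that these together generate an algebra of rank strictly larger than $m$ whenever $S$ is large, contradicting $\mathrm{rk}_\Balg C\le m$: the subalgebra generated by $\tau_1(\Aalg)$ and $\tau_2(\Aalg)$ inside $I$ has rank at least $\dim_k I/\dim_k\Balg$ when $I$ is "saturated" in the sense forced by $\dim_k I/\dim_k\Balg>m$ — more precisely, $\tau_1(\Aalg)\cdot\tau_2(\Aalg)$ spans an ideal of $I$ whose functions separate all points of $S$ (using that $\alpha(v)$'s are distinct), so it equals $I$, giving $\mathrm{rk}_\Balg I\le \mathrm{rk}_\Balg(\tau_1(\Aalg))\cdot\mathrm{rk}_\Balg(\tau_2(\Aalg))$ on one hand but, if $\tau_1(\Aalg)=\tau_2(\Aalg)=C$, we instead get that $C$ contains $1_I$ and is closed under multiplication, forcing the point-separating functions to already lie in $C$, hence $C=I$ and $m\geq \mathrm{rk}_\Balg I=\dim_k I/\dim_k\Balg>m$, absurd.

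**Caveat and cleanup.** The delicate point is making the ``point-separating implies generates $I$'' step rigorous over the non-field $\Balg$: one cannot simply invoke a Vandermonde/Chinese-remainder argument pointwise over $\overline\Balg$ without checking it descends to $\Balg$. The clean way is to reduce to the field case by decomposing $\Balg=\bigoplus \Balg_i$ into fields (Fact~\ref{fac-comm-semisimple}), correspondingly decomposing $\Aalg$, $I$, and the $\tau_i$, and observing that $\dim_k I/\dim_k\Balg>m$ forces $\dim_{\Balg_i} I_i > m_i:=\dim_{\Balg_i}\Aalg_i$ for at least one $i$; on that component $\Balg_i$ is a field, $V_i$ is a genuine set of $m_i$ distinct elements of $\overline{\Balg_i}$, $S_i$ is an honest subset of $V_i\times V_i$ of size $>m_i$, and there the separation/Lagrange-interpolation argument is standard field theory. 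I would then note that $\tau_1(\Aalg)_i\neq\tau_2(\Aalg)_i$ on that component suffices to conclude $\tau_1(\Aalg)\neq\tau_2(\Aalg)$ globally. I expect the write-up of this descent-to-a-field-component bookkeeping, rather than any single inequality, to be the fussiest part of the proof.
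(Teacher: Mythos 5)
Your proposal is correct in substance but takes a considerably longer route than the paper, and the ``fussy'' descent-to-a-field-component bookkeeping you flag is exactly what the paper's argument avoids. The paper stays entirely at the level of the abstract tensor algebra and never passes to $\overline\Balg$ or to a root set $V$. For the first claim it simply observes (via Lemma~\ref{lem-essential-part}) that $\widetilde{\Aalg\otimes_\Balg\Aalg}$ is the ideal of $\Aalg\otimes_\Balg\Aalg$ generated by $\{x\otimes 1 - 1\otimes x : x\in\Aalg\}$, so $I = e_I\widetilde{\Aalg\otimes_\Balg\Aalg}$ is generated as an ideal by the elements $\tau_1(x)-\tau_2(x)$; if these all vanished then $I=0$. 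For the second claim it uses the natural projection $\phi:\Aalg\otimes_\Balg\Aalg\rightarrow I$ and the fact that $\Aalg\otimes 1$ and $1\otimes\Aalg$ generate $\Aalg\otimes_\Balg\Aalg$, so $\tau_1(\Aalg)=\phi(\Aalg\otimes 1)$ and $\tau_2(\Aalg)=\phi(1\otimes\Aalg)$ together generate $I$; if $\tau_1(\Aalg)=\tau_2(\Aalg)$ then this image, being already a subalgebra, would equal $I$, contradicting $\dim_k\tau_i(\Aalg)\le m\dim_k\Balg<\dim_k I$. Your functional/point-separation picture establishes the same two facts --- that $\tau_1-\tau_2$ does not vanish identically and that $\tau_1(\Aalg)$ and $\tau_2(\Aalg)$ jointly generate $I$ --- by concrete evaluation at off-diagonal pairs of roots, which does work but forces you to verify that $I$ corresponds to a Galois-stable subset, to reduce to each field component of $\Balg$, and to push an averaging argument through the decomposition $\dim_k I=\sum_i(\dim_{\Balg_i}I_i)(\dim_k\Balg_i)$. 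None of that is needed if one argues with the tensor-algebra generators directly, which is the point of the paper's shorter proof. So: correct, but you would be well advised to replace the separation argument by the one-line generation observation.
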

\begin{proof}
To see the first statement observe that
$\widetilde{\Aalg\otimes_\Balg\Aalg}$ is the ideal
of $\Aalg\otimes_\Balg \Aalg$ generated by the set of elements
$\{x\otimes 1-1\otimes x | x\in\Aalg\}$, see Lemma~\ref{lem-essential-part}. 
It follows that $I$ (as an ideal) is generated by the elements 
$\{\tau_1(x)-\tau_2(x) | x\in\Aalg\}$.
 Consequently, if $\tau_1(x)-\tau_2(x)=0$ for all $x\in\Aalg$ then $I=0$.

To see the second assertion, note that as $I$ is an ideal of the essential 
part of the semisimple $\Aalg\otimes_\Balg\Aalg$, there is a natural 
projection $\phi: \Aalg\otimes_\Balg \Aalg\rightarrow I$. Then 
$\tau_1(\Aalg)=\phi(\Aalg\otimes_\Balg 1)$ and $\tau_2(\Aalg)=
\phi(1\otimes_\Balg \Aalg)$.
{}From this and from the fact that $\Aalg\otimes_\Balg 1$ and $1\otimes_\Balg 
\Aalg$ generate $\Aalg\otimes_\Balg\Aalg$ we infer that
$\tau_1(\Aalg)$ and $\tau_2(\Aalg)$ generate $I$.
As $\dim_k\tau_i(\Aalg)\leq \dim_k\Aalg=m\dim_k\Balg<\dim_k I$,
this excludes the possibility of $\tau_1(\Aalg)=\tau_2(\Aalg)$.
\end{proof}
\rem{To see the second assertion, let $I^\perp$ be the ideal 
of $\Aalg\otimes_\Balg \Aalg$ complementary
to $I$. Then there is an isomorphism 
$\psi:(\Aalg\otimes_\Balg \Aalg)/I^\perp\cong I$
such that projection of $\Aalg\otimes_\Balg \Aalg$ to $I$ 
is the map $\psi\circ\phi$ where $\phi$ is the natural projection
$\Aalg\otimes_\Balg \Aalg\rightarrow (\Aalg\otimes_\Balg \Aalg)/I^\perp$.}
%

\subsection{Proof of Main Theorem}\label{Evd-sect}

We now prove the following slightly stronger version of Main Theorem.

\begin{theorem}\label{thm-Evditerate}
Given a commutative semisimple algebra $\Aalg$ over a finite field $k$ and a subalgebra 
$\Balg\supseteq k$ of $\Aalg$ such that $\Aalg$ is a free $\Balg$-module of rank $m$. 
Then in deterministic $poly(m^{\log m}, \log |\Aalg|)$ time one can either find a zero 
divisor in $\Aalg$ or a semiregular automorphism $\sigma$ of $\Aalg$ of order $m$ with 
$\Aalg_\sigma=\Balg$.
\end{theorem}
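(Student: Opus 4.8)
The plan is to induct on the rank $m$, with the base case $m=1$ trivial (take the identity automorphism, of order $1$). Before the induction, if $\chara k\le m^2$ I invoke Berlekamp's deterministic algorithm to split $\Aalg$ into simple components, which produces a zero divisor unless $\Aalg$ is a field; so from now on I assume $\chara k>m^2$, and then by the effective form of Fact~\ref{fac-primitive} (which either succeeds or exposes a zero divisor of $\Aalg$) I may write $\Aalg=\Balg[\alpha]=\Balg[X]/(g(X))$ with $g$ monic of degree $m$. The inductive step branches on the parity of $m$: the cheap tensor square $\widetilde{\Aalg\otimes_\Balg\Aalg}$, which is free of rank $m-1$ over the (left) copy of $\Aalg$ and carries the order-$2$ swap $\tau\colon x\otimes y\mapsto y\otimes x$ as a $\Balg$-automorphism, can be exploited directly only when $m-1$ is coprime to $2$, i.e. when $m$ is even; the odd case is reduced to the even case at essentially half the rank.

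\emph{$m$ even.} I compute $\Dalg:=\widetilde{\Aalg\otimes_\Balg\Aalg}$ by Lemma~\ref{lem-essential-part}; since $[\Dalg:\Aalg]=m-1$ is not divisible by $2$, I apply Proposition~\ref{pro-baseaut} with this $\Dalg$, the semiregular order-$2$ automorphism $\tau$, and the subalgebra $\Aalg$. This returns either a zero divisor of $\Aalg$ (done), or a subalgebra $\Calg\le\Aalg$ with a semiregular order-$2$ automorphism $\tau'$ such that $\Calg_{\tau'}\supseteq\Aalg\cap\Dalg_\tau$. A left-embedded element of $\Dalg$ fixed by the swap is forced to be constant, so $\Aalg\cap\Dalg_\tau=\Balg$; moreover, inspecting the construction, $\Calg$ is a degree-$2$ Kummer extension of its fixed subalgebra, so $\dim_\Balg\Calg=2$ and $\Calg_{\tau'}=\Balg$. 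Assuming (else a zero divisor via Lemma~\ref{lem-nonfree}) that $\Aalg$ is free over $\Calg$, of rank $m/2$, I solve the smaller instance $(\Aalg,\Calg)$ recursively, obtaining a zero divisor or a semiregular automorphism $\rho$ of $\Aalg$ of order $m/2$ with $\Aalg_\rho=\Calg$. Because the order of $\rho$ divides $m<\chara k$, Lemma~\ref{lem-extendaut} applied to $\rho$ and to the automorphism $\tau'$ of $\Aalg_\rho=\Calg$ yields a zero divisor or an automorphism $\sigma$ of $\Aalg$ extending $\tau'$ with $\Aalg_\sigma=(\Aalg_\rho)_{\tau'}=\Balg$. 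Finally a semiregularity test (Proposition~\ref{pro-regularaut}) either exposes a zero divisor or certifies that $\langle\sigma\rangle$ is semiregular, in which case $|\langle\sigma\rangle|=\dim_\Balg\Aalg=m$ by Lemma~\ref{lem-semireg-ineq}, so $\sigma$ is the desired automorphism.

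\emph{$m$ odd.} I again form $\Dalg:=\widetilde{\Aalg\otimes_\Balg\Aalg}$, now free of the \emph{even} rank $m-1$ over the left copy of $\Aalg$. Solving $(\Dalg,\Aalg)$ recursively gives either a zero divisor of $\Dalg$ — which I push down to a zero divisor of $\Aalg$ using that the projections of $\Aalg$ into the simple components of $\Dalg$ are injective or immediately reveal a zero divisor, together with Lemma~\ref{lem-refine} — or a semiregular automorphism $\sigma$ of $\Dalg$ of order $m-1$ with $\Dalg_\sigma=\Aalg$. Then $G:=\langle\sigma,\tau\rangle$ is a group of $\Balg$-automorphisms of $\Dalg$, and it is noncyclic: a cyclic $G$ would make $\sigma$ and $\tau$ commute, forcing $\sigma$ (which fixes the left copy of $\Aalg$) to fix the right copy as well, hence all of $\Dalg$, impossible since $\sigma\ne\mathrm{id}$ and, by Lemma~\ref{lem-left-right}, the two copies differ for $m\ge 3$. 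By Theorem~\ref{thm-cyclicaut} I extract a zero divisor of $\Dalg$, and Lemma~\ref{lem-refine} then refines $\Dalg$ into a $G$-invariant direct sum of proper ideals, each assumed free over $\Aalg$ (else Lemma~\ref{lem-nonfree}); the smallest summand $\Iid$ satisfies $\dim_\Aalg\Iid\le(m-1)/2$. Solving $(\Iid,\Aalg)$ recursively (after ensuring the projection of $\Aalg$ into $\Iid$ is injective) yields an automorphism of $\Iid$ of order $\dim_\Aalg\Iid<m$ fixing $\Aalg$. It remains to convert this into an automorphism of $\Aalg$ itself, which is done by exploiting that $\Iid$, as an ideal of the essential square, carries the left and a right embedding of $\Aalg$ that are genuinely distinct by Lemma~\ref{lem-left-right}, together with the swap $\tau$ and one further recursive call on a strictly smaller instance; a semiregularity test then upgrades the outcome to a semiregular automorphism of $\Aalg$ of order $m$ with fixed ring $\Balg$, exactly as in the even case.

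\emph{Complexity and the main difficulty.} Every recursive call is on a strictly smaller rank, and the only call not already at rank $\le m/2$ — the passage from $(\Aalg,\Balg)$ to $(\Dalg,\Aalg)$ in the odd case — lands on the even rank $m-1$, whose even-case treatment halves it; hence the recursion has depth $O(\log m)$. The only operation that enlarges the algebra is forming the essential tensor square, which multiplies the $k$-dimension of the working algebra by a factor at most its current rank (so at most $m$); over $O(\log m)$ levels every algebra encountered stays of $k$-dimension at most $(\dim_k\Aalg)^{O(\log m)}$, and all local steps — linear algebra, Lemma~\ref{lem-essential-part}, Proposition~\ref{pro-baseaut}, Theorem~\ref{thm-cyclicaut}, Lemma~\ref{lem-extendaut}, Proposition~\ref{pro-regularaut} — cost time polynomial in that dimension and in $\log|k|$. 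Multiplying the $m^{O(1)}$ nodes of the recursion tree by this per-node cost gives the claimed bound $poly(m^{\log m},\log|\Aalg|)$. I expect the main obstacle to be the last step of the odd case: manufacturing an automorphism of $\Aalg$ from an automorphism of a proper ideal $\Iid$ of the essential square that fixes only the left copy of $\Aalg$, orchestrating $\tau$, the two embeddings of $\Aalg$ into $\Iid$ (Lemma~\ref{lem-left-right}), Proposition~\ref{pro-baseaut} and Lemma~\ref{lem-extendaut} so the pieces glue to exactly the right order and fixed ring; the analogous but easier bookkeeping along $\Balg\subseteq\Calg\subseteq\Aalg$ in the even case is the other place demanding care.
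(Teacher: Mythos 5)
Your proposal mirrors the paper's approach — primitive-element reduction, parity split on $m$, work in the essential square $\widetilde{\Aalg\otimes_\Balg\Aalg}$, glue via Lemma~\ref{lem-extendaut} — but there are two genuine gaps.

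\emph{Even case.} You assert that the subalgebra $\Calg$ returned by Proposition~\ref{pro-baseaut} (equivalently, Theorem~\ref{thm-constructaut} with $r=2$) satisfies $\Calg_{\tau'}=\Balg$, and hence $[\Calg:\Balg]=2$, and on this basis make a \emph{single} recursive call on $(\Aalg,\Calg)$ followed by one application of Lemma~\ref{lem-extendaut}. But both results only guarantee $\Calg_{\tau'}\supseteq\Aalg\cap\Dalg_\tau=\Balg$; in the recursive branch of Proposition~\ref{pro-baseaut}'s proof one gets $\Calg_{\tau'}=\phi^{-1}(\Aalg'_\tau)$, which can strictly contain $\Balg$. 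When that happens, your gluing step yields $\sigma$ with $\Aalg_\sigma=\Calg_{\tau'}\supsetneq\Balg$, the wrong fixed ring. The paper handles this by making \emph{two} recursive calls — on $(\Aalg,\Calg)$ giving $\sigma_1$ with $\Aalg_{\sigma_1}=\Calg$, and on $(\Calg_{\tau'},\Balg)$ giving $\sigma_2$ with $(\Calg_{\tau'})_{\sigma_2}=\Balg$ — and applies Lemma~\ref{lem-extendaut} twice to glue $\tau',\sigma_1,\sigma_2$.

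\emph{Odd case.} The step you yourself single out as the obstacle — turning a semiregular $\Aalg$-automorphism of an ideal $\Iid\unlhd\widetilde{\Aalg\otimes_\Balg\Aalg}$ into a $\Balg$-automorphism of $\Aalg$ — is exactly where the real content lies and is not supplied. The paper resolves it via the two embeddings $\tau_1,\tau_2$ of $\Aalg$ into the current ideal $I$ and Lemma~\ref{lem-left-right}: if the produced automorphism $\sigma$ of $I$ is trivial, then $\mu:=\tau_2^{-1}\tau_1$ is already a nontrivial $\Balg$-automorphism of $\Aalg$; otherwise it recurses on $(I,\tau_2(\Aalg))$ to obtain $\sigma'$, checks that $\sigma,\sigma'$ commute (else Theorem~\ref{thm-cyclicaut} yields a zero divisor), shows by Lemma~\ref{lem-left-right} that at least one of $\sigma,\sigma'$ acts nontrivially on one of the two copies of $\Aalg$, transports that action to a nontrivial $\Balg$-automorphism $\mu$ of $\Aalg$, and only then recurses on $(\Aalg_\mu,\Balg)$ and glues. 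Your noncyclicity observation for $\langle\sigma,\tau\rangle$ is correct and a pleasant way to produce a zero divisor of the essential square, but it does not by itself bridge the gap from an automorphism of an ideal fixing the left copy of $\Aalg$ to an automorphism of $\Aalg$ with fixed ring $\Balg$; an argument along the paper's lines is still needed to close the proof.
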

\begin{proof}
We may assume that $char\ k> m^2$ as otherwise using Berlekamp's factoring algorithm 
we can completely decompose $\Aalg$ into simple components.

If $m$ is even then using the algorithm of Theorem \ref{thm-constructaut} we either 
find a zero divisor in $\Aalg$ or a subalgebra $\Calg\le\Aalg$ together with a 
semiregular automorphism $\sigma_0$ of $\Calg$ of order $2$ with $\Calg_{\sigma_0}\ge
\Balg$ in deterministic polynomial time. In the former case we are done while in the 
latter case we make two recursive calls: one on the pair $(\Aalg, \Calg)$ and the other 
on the pair $(\Calg_{\sigma_0}, \Balg)$. This way we either find a zero divisor in 
$\Aalg$ or we find a semiregular automorphism $\sigma_1$ of $\Aalg$ satisfying 
$\Aalg_{\sigma_1}=\Calg$ as well as a semiregular automorphism $\sigma_2$ of 
$\Calg_{\sigma_0}$ satisfying $(\Calg_{\sigma_0})_{\sigma_2}=\Balg$. In the former case 
we are done while in the latter case we apply the algorithm of Lemma \ref{lem-extendaut} 
two times to construct $\sigma$ from $\sigma_0, \sigma_1, \sigma_2$. This finishes the 
even $m$ case.

Assume for the rest of the proof that $m$ is odd. We outline here the overall flow of the 
algorithm. We work in the algebra $\Aalg':=\widetilde {\Aalg{\otimes_\Balg
\Aalg}}$ and 
$\Balg':=\phi_1(\Aalg)$ where, $\phi_1$ and $\phi_2$ are respectively the left and right 
embeddings of $\Aalg$ into $\Aalg'$. During the course of the algorithm we maintain a 
nonzero ideal $I\unlhd\Aalg'$ with $\Balg'$ embedded in it. Any time we find a zero 
divisor in $I$ we replace $I$ with either the ideal generated by the zero divisor or its 
complement, depending on which has smaller dimension. We can assume the new ideal to be a 
free module over an embedded $\Balg'$ as otherwise we can find a zero divisor in $\Balg'$ 
(equivalently in $\Aalg$). Note that the rank of the new ideal over the embedded $\Balg'$ 
is at most half of the original one. Initially $I=\Aalg'$ and it is a free $\Balg'$-module 
of even rank $(m-1)$ and so we can apply the recursion outlined in the
second paragraph of this proof. In 
this way at any stage we either find a smaller ideal of $I$ or a semiregular automorphism 
$\sigma$ of $I$ such that $I_\sigma=e_I\Balg'\cong\Balg'$, where $e_I$ is the identity 
element of $I$. In the former case we replace $I$ by the smaller ideal (with an embedded 
$\Balg'$) and apply recursion which again either finds a zero divisor (and hence a smaller 
ideal) or a $\Balg'$-automorphism of the new ideal. 

The recursion outlined above halts either with a zero divisor found in $\Balg'$ (equivalently 
in $\Aalg$) or with a semiregular automorphism $\sigma$ of an $I\unlhd\Aalg'$ such that 
$I_\sigma=e_I\Balg'\cong\Balg'$. In the former case we are done while the latter case is what
we handle now. Let $\tau_1:\Aalg\rightarrow I$ mapping $a\mapsto e_I\phi_1(a)$ be the 
embedding of $\Aalg$ into $I$. Look at the homomorphism $\tau_2:\Aalg\rightarrow I$ that maps 
$a\mapsto e_I\phi_2(a)$. It is a nonzero homomorphism as $\tau_2(1)=e_I\not=0$. So we can 
assume $\tau_2$ to be an embedding of $\Aalg$ in $I$ as well or else we get a zero divisor in 
$\Aalg$ 

If $\sigma$ is trivial, i.e. $I=e_I\Balg'\cong\Balg'\cong\Aalg$, then $\mu:=\tau_2^{-1}\tau_1$ 
is a nontrivial $\Balg$-automorphism of $\Aalg$ by the first part of
Lemma~\ref{lem-left-right}. 
If $\mu$ is not semiregular then we can find 
a zero divisor by Proposition \ref{pro-regularaut} while if $\mu$ is semiregular then we can
apply recursion to the pair $(\Aalg_\mu,\Balg)$, find an automorphism of $\Aalg_\mu$ and 
finally extend it to a promised automorphism of $\Aalg$ by Lemma \ref{lem-extendaut}.  

So let us assume that $\sigma$ is nontrivial, i.e. $I>I_\sigma=\tau_1(\Aalg)$, thus 
$\rk_{\tau_1(\Balg)}I>m$. Then we define $\Balg'':=\tau_2(\Aalg)$ and apply recursion to the 
pair $(I,\Balg'')$. We either find a zero divisor of $I$ or obtain a semiregular automorphism 
$\sigma'$ of $I$ with $I_{\sigma'}=\Balg''$. In the former case we can proceed with a smaller 
ideal of $I$ or finish with a zero divisor of $\Balg''$ and hence of $\Aalg$, so the latter 
case of having a $\sigma'$ is what we think about now. We can assume that $\sigma$ and 
$\sigma'$ commute as otherwise we can find a zero divisor of $I$ by the algorithm of Theorem 
\ref{thm-cyclicaut} and proceed with recursion. Thus, $I_{\sigma'}$ is $\sigma$-invariant and 
$I_{\sigma}$ is $\sigma'$-invariant. Thus both $\sigma$ and $\sigma'$ can be viewed as 
automorphisms of $\tau_2(\Aalg)$ and $\tau_1(\Aalg)$ respectively. If both these actions are 
trivial then $\tau_1(\Aalg)=I_{\sigma}=(I_{\sigma})_{\sigma'}=(I_{\sigma'})_{\sigma}=
I_{\sigma'}=\tau_2(\Aalg)$, which contradicts the second statment of 
Lemma~\ref{lem-left-right}. Thus one of them is 
nontrivial, wlog say $\sigma$ is a nontrivial automorphism of $\tau_2(\Aalg)$. Then $\mu:=
\tau_2^{-1}\sigma\tau_2$ is a nontrivial automorphism of $\Aalg$. Again we can either find a 
zero divisor of $\Aalg$ or proceed with a recursion to the pair $(\Aalg_\mu,\Balg)$, getting 
a promised automorphism of $\Aalg$ by the algorithm of Lemma \ref{lem-extendaut}.

To see the dominating term in the time complexity observe that in any recursive call on some
pair, say $\Calg, \Dalg$ with $d:=\rk_\Dalg\Calg$, if $d$ is odd then we need to go to the
tensor square of $\Calg$ wrt $\Dalg$. Thus we need to then work in an algebra of 
rank
$d$ times the original rank. As we start with rank $m$ we have $d\le m$ and as 
the rank $d$ is at least halved in the subsequent recursive call (if there is one), we 
deduce that the algorithm works at all times in an algebra of rank (over $\Balg$) at 
most $m^{\log m}$. It is then routine to verify that the algorithm requires in all just 
$poly(m^{\log m})$ many $\Balg$-operations, which proves the time complexity as promised.
\end{proof}

To finish the proof of Main Theorem, apply the process described in the above 
Theorem to $\Balg=k$. If it yields a zero divisor $z$ of $\Aalg$ then the 
ideal $I:=\Aalg z$ and its complementary ideal $I^\perp$ give a decomposition
of $\Aalg=I\oplus I^\perp$. If $e_I$ is the identity element of $I$ then we 
can repeat the process now with $\Aalg$ replaced by $e_I\Aalg=I$ and $\Balg$ 
replaced by $e_I k\cong k$. Thus after several iterations based on Theorem 
\ref{thm-Evditerate} we get the direct sum decomposition of $\Aalg$ together 
with automorphisms as promised in Main Theorem.

\section{Noncommutative Applications}\label{sec-noncomm}

In this section we show that given a noncommutative algebra $\Aalg$ over a finite 
field we can unconditionally find zero divisors of $\Aalg$ in deterministic 
subexponential time. The idea is to compute a commutative subalgebra $\Dalg$ of $\Aalg$,
find an automorphism of $\Dalg$ using the algorithm described in Theorem 
\ref{thm-Evditerate}, and finally construct a zero divisor of $\Aalg$ using this 
automorphism.

{\bf Preprocessing: }
Let $\Aalg$ be a finite dimensional noncommutative algebra over a finite field $k$. If 
$\Aalg$ is not semisimple then we can compute the radical of $\Aalg$, by the 
deterministic polynomial time algorithm of \cite{ro3,CIW}, and get several zero divisors.
So we can assume that $\Aalg$ is semisimple. We can efficiently compute the center $\Calg$ 
of $\Aalg$ ($\Calg$ is the subalgebra having elements that commute with all elements in 
$\Aalg$) by solving a system of linear equations. By the Artin-Wedderburn Theorem (see 
Fact \ref{fac-wedderburn}) we know that if $\Calg_1,\ldots,\Calg_r$ are the simple 
components of $\Calg$ then, structurally, $\Aalg=\bigoplus_{i=1}^r M_{m_i}(\Calg_i)$,
where $M_m(R)$ stands for the algebra of all $m\times m$ matrices over the $k$-algebra 
$R$. Note that if the $m_i$'s are not all the same then $\Aalg$ would not be a free module
over $\Calg$ and hence we can find a zero divisor in $\Calg$ by Lemma \ref{lem-nonfree}.
So we can assume $\Aalg=\bigoplus_{i=1}^r M_m(\Calg_i)=M_m(\oplus_{i=1}^r\Calg_i)=
M_m(\Calg)$. Thus the hard case is to find a zero divisor in an algebra isomorphic to 
$M_m(\Calg)$, this is what we focus on in the remaining section. We identify $\Calg$ with 
the scalar matrices in $M_m(\Calg)$. 

\subsection{Automorphisms of a Commutative Semisimple Subalgebra of $M_m(\Calg)$}

Note that for any invertible matrix $A$ there is a natural automorphism of the full matrix
algebra that maps $x$ to $A^{-1}xA$, we call this a {\em conjugation} automorphism. We show 
in the first Lemma that, under certain mild condition,
an automorphism of a commutative semisimple subalgebra of the full matrix 
algebra corresponds to a conjugation automorphism. 

Recall that every maximal commutative semisimple algebra of the full matrix algebra
$M_m(F)$ over a perfect field $F$ has dimension $m$ over $F$. If $F$ is
algebraically closed then every commutative semisimple subalgebra of $M_m(F)$
is in fact (upto a conjugation isomorphism) a subalgebra of the diagonal matrices.

\begin{lemma}\label{lem-noetherskolem}
Let $\Calg$ be a commutative semisimple algebra over a finite field $k$, let $\Balg\le 
M_m(\Calg)$ be a commutative semisimple $\Calg$-algebra and let $\sigma$ be a 
$\Calg$-automorphism of $\Balg$. Let there be a maximal commutative semisimple subalgebra 
$\Dalg\le M_m(\Calg)$ containing $\Balg$ such that $\Dalg$ is a free $\Balg$-module. Then 
there exists a nonzero $y\in M_m(\Calg)$ such that $\forall x\in \Balg$, $x^\sigma=y^{-1}xy$.
\end{lemma}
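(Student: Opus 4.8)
The plan is to produce $y$ as a change-of-basis matrix realizing $\sigma$, by reading everything through the natural faithful module $V:=\Calg^m$ of column vectors, so that $M_m(\Calg)=\End_\Calg(V)$. First I would reduce to the case where $\Calg$ is a field: since $\sigma$ is a $\Calg$-automorphism it fixes $\Calg$ pointwise, and $\Calg$ lies in the centre of $M_m(\Calg)$, so the primitive idempotents of $\Calg$ are $\sigma$-invariant and the whole statement decomposes as a direct sum over the simple components $\Calg_i$ of $\Calg$. So assume $\Calg=F$ is a field, put $V=F^m$, and regard $V$ as a $\Balg$-module by restricting the $M_m(F)$-action.

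Next I would show that $V$ is a \emph{free} $\Balg$-module. Write $\Dalg=\bigoplus_j L_j$ for the simple (field) components of the maximal commutative semisimple subalgebra $\Dalg\supseteq\Balg$, let $e_j$ be the corresponding idempotents, and let $V_j:=e_jV$, so $V=\bigoplus_j V_j$ with $V_j$ an $L_j$-module. Since $M_m(F)=\End_F(V)$ acts faithfully, no nonzero idempotent of $\Dalg$ kills $V$, so each $V_j\neq 0$, i.e. $\dim_{L_j}V_j\geq 1$. Combining this with the recalled equality $\dim_F\Dalg=m=\dim_F V$, the count $\sum_j[L_j\!:\!F]\dim_{L_j}V_j=\sum_j[L_j\!:\!F]$ forces $\dim_{L_j}V_j=1$ for every $j$; hence $V\cong\Dalg$ as $\Dalg$-modules, a fortiori as $\Balg$-modules. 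As $\Dalg$ is a free $\Balg$-module by hypothesis, $V$ is a free $\Balg$-module.

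Finally I would introduce the $\sigma$-twisted module $W$: the same underlying group $V$, but with the $\Balg$-action $x\cdot v:=x^\sigma v$. For \emph{any} algebra automorphism $\sigma$ of $\Balg$, the map $b\mapsto b^{\sigma^{-1}}$ is a $\Balg$-module isomorphism from the $\sigma$-twist of the regular module $\Balg$ onto $\Balg$; since twisting commutes with direct sums and $V$ is free over $\Balg$, this gives $W\cong V$ as $\Balg$-modules. Fix such an isomorphism $y$: being $F$-linear and bijective, $y$ is an invertible element of $M_m(F)$, and its $\Balg$-linearity, written out as $y(x^\sigma v)=x(yv)$ for all $x\in\Balg$ and all $v\in V$, says precisely $yx^\sigma=xy$ in $M_m(F)$, i.e. $x^\sigma=y^{-1}xy$. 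Reassembling over the components $\Calg_i$ gives the desired (invertible, in particular nonzero) $y\in M_m(\Calg)$.

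The step I expect to be the crux is $W\cong V$ as $\Balg$-modules, and it is exactly here that the hypothesis "$\Dalg$ is a free $\Balg$-module" is indispensable: it forces $V$ to contain every simple $\Balg$-module with one and the same multiplicity, so that however $\sigma$ permutes the $\Calg$-isomorphic simple components of $\Balg$ the isotypic type of $V$ is preserved; without it one can in general only extract a nonzero intertwiner $y$ with $yx^\sigma=xy$, not an invertible one. A secondary point requiring care is the identification $V\cong\Dalg$ over $\Dalg$, which I would derive from the maximality of $\Dalg$ together with the recalled dimension formula rather than by invoking the double-centraliser theorem directly.
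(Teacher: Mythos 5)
Your proof is correct and takes a genuinely different route from the paper's. After the same reduction to $\Calg$ a field $F$, the paper extends scalars to the algebraic closure $\overline F$, where $\Dalg$ becomes the full diagonal algebra and $\Balg$ a subalgebra generated by constant-block $0$--$1$ diagonal idempotents; it constructs $y$ explicitly as the block permutation matrix realizing the permutation $\sigma$ induces on those idempotents, then descends to $F$ using the fact that the homogeneous linear system $yx^\sigma=xy$ has a nonzero solution over $F$ iff it does over $\overline F$. You stay over $F$, read everything through the tautological module $V=F^m$, use maximality of $\Dalg$ and the dimension count $\dim_F\Dalg=m$ to identify $V$ with the regular $\Dalg$-module, hence obtain $V$ free over $\Balg$ from the hypothesis, and produce $y$ as a $\Balg$-isomorphism from the $\sigma$-twist of $V$ onto $V$ (automatic once one sees the twist of the regular module is again regular). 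What your route buys: the $y$ you get is a module isomorphism and therefore invertible by construction, so the conclusion $x^\sigma=y^{-1}xy$ holds literally, whereas the paper's linear-algebra descent only yields a \emph{nonzero} $y$ with $yx^\sigma=xy$ over $F$ and invertibility is never actually re-established after leaving $\overline F$ (harmless downstream, since a non-invertible such $y$ is itself a zero divisor, but still a small gap in the proof as written). Your framing also makes plain where ``$\Dalg$ free over $\Balg$'' enters: it forces uniform multiplicities of the simple $\Balg$-constituents in $V$, so the isotypic type of $V$ is stable under $\sigma$. What the paper's route buys is a very explicit coordinate description of $y$, closer to the algorithmic spirit of the surrounding text.
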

\begin{proof}
We get hold of this element $y$ by reducing the question to the case of $\Calg$ being an 
algebraically closed field, when $\Dalg$ becomes a direct sum of $m$ copies of $\Calg$ and
$\Balg$ becomes a direct sum of $r|m$ copies of $\Calg$. In that case we can find a basis of 
$0$-$1$ diagonal matrices for $\Balg$ that is permuted by $\sigma$ and hence construct the 
promised $y$ as a permutation matrix.  

Firstly, we can assume $\Calg$ to be a field because if $I_1,\ldots,I_c$ are the simple components 
of $\Calg$ then clearly the $I_i$'s are all finite fields, and we can try finding the promised $y_i$ for
the instance of $(\Dalg I_i, \Balg I_i, I_i)$. Note that since $\sigma$ was fixing $I_i$, $\sigma$ is 
still a $(I_i)$-automorphism of $\Balg I_i$ and by freeness condition,
$\Dalg I_i$ is still a free 
$(\Balg I_i)$-module and it is a maximal commutative semisimple
subalgebra of $M_m(I_i)$.
 Also, once we have the $y_i$, for all $1\le i\le c$, satisfying $y_i x^\sigma=xy_i$
for all $x\in I_i$; it is easy to see that $(y_1+\ldots+y_r)$ is the promised $y$. So for the rest of the 
proof we assume that $\Calg$ is a finite field extension of $k$. Secondly, notice that the condition
$yx^\sigma=xy$ is equivalent to the system of equations: $yx_1^\sigma=x_1y, \ldots, 
yx_r^\sigma=x_r y$ for a $\Calg$-basis $x_1,\ldots,x_r$ of $\Balg$. In terms of the entries of the matrix 
$y$ this is a system of homogeneous linear equations in the field $\Calg$. This system has a nonzero 
solution over $\Calg$ iff the same system has a nonzero solution over the algebraic closure 
$\overline \Calg$ of $\Calg$. A solution over $\overline \Calg$ gives a matrix 
$y\in M_m({\overline \Calg})$ such that $yx^\sigma=xy$ for every $x\in {\overline \Balg}$ where
${\overline \Balg}:={\overline\Calg}\otimes_\Calg \Balg$ and we extend 
$\sigma$ ${\overline\Calg}$-linearly to an algebra automorphism of $\overline \Balg$. Because $k$ was a finite 
field, $\overline\Balg\le M_m({\overline \Calg})$ is a commutative semisimple algebra over 
$\overline\Calg$. Similarly, $\overline\Dalg:={\overline\Calg}\otimes_\Calg \Dalg$ is a maximal 
commutative semisimple subalgebra of $M_m({\overline \Calg})$, and is also a free 
$\overline \Balg$-module. By the former condition $\dim_{\overline \Calg}\overline\Dalg=m$ and by 
the latter condition $r|m$. We will now focus on the instance of $(\overline\Dalg, \overline\Balg, 
\overline\Calg)$ and try to construct the promised $y$.

As $\overline\Dalg$ is a sum of $m$ copies of $\overline\Calg$, by an appropriate basis change we 
can make $\overline\Dalg$ the algebra of all diagonal matrices in $M_m(\overline\Calg)$. Also, as 
$\overline\Dalg$ is a free $\overline\Balg$-module, a further basis change makes $\overline \Balg$ 
the algebra generated by the matrices $e_1,\ldots e_r$ where each $e_j$ is a diagonal $0$-$1$ 
matrix having $m/r$ consecutive $1$'s. In that case the automorphism $\sigma$ has a simple action, 
namely it permutes the matrices $\{e_1,\ldots,e_r\}$. Let $y$ be a block $r\times r$-matrix whose 
blocks are all $m/r\times m/r$ zero matrices except at positions $i,i^{\sigma}$ ($i^\sigma$ is defined 
by $e_i^\sigma=e_{i^{\sigma}}$), where the block is the $m/r\times m/r$ identity matrix. Clearly then, 
$e_{i^\sigma}=y^{-1}e_iy$ for all $1\le i \le r$ and hence $x^{\sigma}=y^{-1}xy$ for every 
$x\in\overline\Balg$ by extending the equalities linearly to $\overline\Balg$.
\end{proof}

In the second Lemma we show that a conjugation automorphism of prime order of a commutative 
semisimple subalgebra corresponds to a zero divisor of the original algebra.

\begin{lemma}\label{lem-minpolns}
Let $\Aalg$ be a finite dimensional algebra over the perfect field $F$ and let 
$\Balg\le\Aalg$ be a commutative semisimple algebra containing $F1_\Aalg$. 
Let $r$ be a 
prime different from $\chara F$ and let $y\in\Aalg$ be of order $r$ such that: 
$y^{-1}\Balg y=\Balg$ but there is an element $x\in\Balg$ with $y^{-1}xy\neq x$. 
Then the minimal polynomial of $y$ over $F$ is in fact $(X^r-1)$. 
As a consequence, $(y-1)$ and
$(1+y+\ldots+y^{r-1})$ is a pair of zero divisors in $\Aalg$.
\end{lemma}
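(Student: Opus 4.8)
The plan is to reduce first to the case where the ground field is algebraically closed. Since $\overline{F}\otimes_F(-)$ is flat, the powers $1,y,y^2,\dots$ are $F$-linearly independent in $\Aalg$ if and only if $1\otimes 1,\,1\otimes y,\,1\otimes y^2,\dots$ are $\overline{F}$-linearly independent in $\overline{\Aalg}:=\overline{F}\otimes_F\Aalg$, so the minimal polynomial of $y$ over $F$ coincides with that of $\overline{y}:=1\otimes y$ over $\overline{F}$. Writing $\overline{\Balg}:=\overline{F}\otimes_F\Balg$ (which embeds into $\overline{\Aalg}$), the element $\overline{y}$ still has multiplicative order $r$, still normalizes $\overline{\Balg}$, and conjugation by $\overline{y}$ still induces a nontrivial automorphism $\sigma$ of $\overline{\Balg}$ — it extends the given one on $\Balg$. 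Hence I may assume $F=\overline{F}$; then $\Balg$, being a finite-dimensional commutative semisimple algebra over an algebraically closed field, is a product $F^n$ of copies of $F$, and $\sigma$ is nothing but a permutation of the $n$ primitive idempotents of $\Balg$.

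Next I would locate an $r$-cycle among these idempotents. Because $y^r=1$ we have $\sigma^r=\mathrm{id}$, so the permutation induced by $\sigma$ has order dividing $r$; it is not the identity permutation since $\sigma\neq\mathrm{id}$; and since $r$ is prime it therefore contains a cycle of length exactly $r$. Relabelling, pick pairwise orthogonal primitive idempotents $e_1,\dots,e_r$ of $\Balg$ with $\sigma(e_j)=e_{j+1}$, indices read modulo $r$ in $\{1,\dots,r\}$. The identity $\sigma(x)=y^{-1}xy$ now reads $y\,e_{j+1}=e_j\,y$, i.e. $y\,e_k=e_{k-1}\,y$, and by induction $y^i e_k=e_{k-i}\,y^i$ for all $i\ge 0$. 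In particular $y^i e_1=e_{1-i}\,y^i$, and as $i$ ranges over $0,1,\dots,r-1$ the idempotents $e_{1-i}$ range over all of $e_1,\dots,e_r$, since the residues $1-i$ exhaust $\Z/r$.

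For the linear independence I would then argue directly: suppose $\sum_{i=0}^{r-1}c_i y^i=0$ with $c_i\in F$. Right-multiplying by $e_1$ and using $y^i e_1=e_{1-i}y^i$ gives $\sum_{i=0}^{r-1}c_i\,e_{1-i}\,y^i=0$. Fix $m\in\{0,\dots,r-1\}$ and left-multiply by $e_{1-m}$: all terms with $i\neq m$ die, because the $e_{1-i}$ are pairwise orthogonal, leaving $c_m\,e_{1-m}y^m=0$. But $e_{1-m}y^m=y^m e_1\neq 0$, as $y^m$ is a unit and $e_1\neq 0$, and $c_m\in F$ is a scalar, so $c_m=0$. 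Thus $1,y,\dots,y^{r-1}$ are linearly independent over $\overline{F}$, hence over $F$; combined with $y^r=1$, and since $X^r-1$ is monic of degree $r$, the minimal polynomial of $y$ over $F$ is exactly $X^r-1$. For the stated consequence, $y^r-1=0$ factors in $\Aalg$ as $(y-1)(1+y+\cdots+y^{r-1})=0$; the linear independence just proved forces both factors to be nonzero ($y\neq 1$, and the second has a nonzero $y^0$-coefficient), and the two factors commute, being polynomials in $y$, so each annihilates the other and is therefore a zero divisor of $\Aalg$.

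The only point I expect to require care is the opening reduction to $F=\overline{F}$: working over $F$ directly, the case in which $\sigma$ fixes every simple component of $\Balg$ but acts as an automorphism of order $r$ on one inert component cannot be captured by a permutation of the idempotents of $\Balg$ itself, and it is exactly the passage to $\overline{F}$ — legitimate because $F$ is perfect, so $\Balg$ is étale and splits — that turns $\sigma$ into an honest coordinate permutation and yields the needed $r$-cycle uniformly. Everything after that is elementary manipulation with the commutation relation $y^i e_k=e_{k-i}y^i$.
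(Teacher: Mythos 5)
Your proof is correct and follows essentially the same route as the paper: base-change to the algebraic closure (legitimate since $F$ is perfect so $\Balg$ stays \'etale/semisimple), extract an $r$-cycle of primitive idempotents from the order-$r$ conjugation action (using that $r$ is prime), and use the resulting commutation relation to force $F$-linear independence of $1,y,\dots,y^{r-1}$, whence the minimal polynomial is $X^r-1$ and the factorization $(y-1)(1+y+\cdots+y^{r-1})=0$ yields the zero divisors. The only cosmetic difference is in the independence step, where the paper phrases it via linear independence of the left ideals $\Aalg e_j$ and nonvanishing of $y^je_j$, while you isolate each coefficient explicitly by left-multiplying by $e_{1-m}$ and using orthogonality of the idempotents --- same content, slightly more spelled out.
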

\begin{proof}
Let $\overline F$ be the algebraic closure of $F$. Note that in $\overline\Aalg:=
{\overline F}\otimes_F \Aalg$, the 
minimal polynomial of $1\otimes y$ is the same as that of $y$ in $\Aalg$, 
$\overline\Balg:={\overline F}\otimes\Balg$ remains commutative semisimple and 
conjugation by $1\otimes y$ 
acts on it as an automorphism of order $r$. Thus for the rest of the proof we can assume 
$F$ to be algebraically closed.

As conjugation by $y$ does not fix $\Balg$, there exists a primitive idempotent
$e$ of  $\Balg$ for 
which the elements $e_j=y^{-j}ey^j$ ($j=1,\ldots,r$) are pairwise orthogonal primitive idempotents
of $\Balg$. This means that the corresponding left ideals $L_j:=\Aalg e_j$ are linearly independent 
over $F$. Assume now that the minimal polynomial of $y$ has degree less than $r$. So there are 
elements $\alpha_0,\ldots,\alpha_{r-1}\in F$, not all zero, such that $\sum_{j=0}^{r-1}\alpha_j y^j=0$. 
Implying that $e\sum_{j=0}^{r-1}\alpha_jy^j=\sum_{j=0}^{r-1}\alpha_jy^je_j=0$, this together 
with the fact that $y^je_j$'s are all nonzero, contradicts the linear independence of $L_1,\ldots,L_r$.
\end{proof}

\subsection{Proof of Application 1}

In this subsection we give the proof of Application 1: given a noncommutative algebra $\Aalg$ 
over a finite field $k$, one can unconditionally find zero divisors of $\Aalg$ in deterministic 
subexponential time. By the {\em preprocessing} discussed in the beginning of the section it is 
clear that we need to only handle the case of $\Aalg\cong M_m(\Calg)$, where $\Calg$ is a 
commutative semisimple algebra over $k$. The basic idea in the algorithm then is to find a 
maximal commutative semisimple subalgebra $\Dalg\le\Aalg$, find a $\Calg$-automorphism $\sigma$ 
of $\Dalg$, use it to define a subalgebra of $\Aalg$ which is a so called {\em cyclic algebra}, 
and then find a zero divisor in this cyclic algebra by the method of \cite{vdW}. 
The {\em cyclic algebras} $\Aalg'$ over $\Calg$ 
we encounter have two generators $x, y$ such that for a prime
$r$: $xy=\zeta_ryx$ and the multiplicative orders of $x, y$ are powers of $r$. These algebras
have the {\em ring of quaternions} as their classic special case, when $x^2=y^2=-1$ and $xy=-yx$.

\rem{We assume that $|k|>\dim_k\Aalg^2$.{\bf (Do we need this assumption??????)} }

Given the algebra $\Aalg$ (with an unknown isomorphism to $M_m(\Calg)$) in basis form over the 
finite field $k$. We can compute easily the center of $\Aalg$, and it will be $\Calg$. We can 
also compute a maximal commutative semisimple subalgebra $\Dalg$ of $\Aalg$ by the 
deterministic polynomial time algorithm of \cite{GI} ($\Dalg$ has an unknown isomorphism to the 
subalgebra of diagonal matrices of $M_m(\Calg)$). Being maximal, $\Dalg$ is a free module 
over $\Calg$ of rank $m$. By Theorem \ref{thm-Evditerate} we can, in deterministic 
$poly(m^{\log m}, \log |\Aalg|)$ time, either find a zero divisor in $\Dalg$ or compute a 
semiregular automorphism $\sigma$ of $\Dalg$ such that $\Dalg_\sigma=\Calg$. In the former 
case we are done, so it is the latter case that we now assume. By Lemma 
\ref{lem-noetherskolem}, there certainly exists a $y\in\Aalg$ such that $d^{\sigma}=y^{-1}dy$ 
for every $d\in\Dalg$, so by picking a nonzero solution of the corresponding system of linear 
equations we either find a zero divisor of $\Aalg$ or we find such a $y$. So suppose we find 
a $y$ such that $d^{\sigma}=y^{-1}dy\ne d$ for every $d\in\Dalg\setminus\Calg$.

We can efficiently obtain a multiple $M$ of the multiplicative order of $y$, $ord(y)$, just
by looking at the degrees of the irreducible factors of the minimal polynomial of $y$ over 
$k$ (this can be done deterministically without actually computing the factorization). Fix
a prime factor $r|m$, as $\sigma$ is a semiregular $\Calg$-automorphism of $\Dalg$, $\sigma$ 
is of order $m$, hence using $M$ we can replace $y$ and $\sigma$ by an appropriate power 
such that $ord(y)$ is a power of $r$ while $ord(\sigma)=r$. By this construction, conjugation 
by $y$ is now a $\Calg$-automorphism $\sigma$ of $\Dalg$ of order $r$. Put $z:=y^r$, thus 
$d=d^{\sigma^r}=z^{-1}dz$ for every $d\in\Dalg$. Note that we can assume $z\ne1$ as otherwise 
$(y-1)$ is a zero divisor of $\Aalg$ by Lemma \ref{lem-minpolns}. 
Thus an appropriate power, say $\zeta_r$, of $z$ has order $r$. Consider the subalgebra
$\Dalg[z]$, it is commutative by the action of $z$ on $\Dalg$ as seen before, it can 
also be assumed to be semisimple as otherwise we can find many zero divisors by just
computing its radical. So $\Dalg[z]$ is a commutative semisimple algebra. By the maximality
of $\Dalg$ we deduce that $\Dalg[z]=\Dalg$, hence $z\in\Dalg$ and $\zeta_r\in\Dalg$. So
by Lemma \ref{lem-auteigen} we can find efficiently either a zero divisor in $\Dalg$ or an 
$x\in \Dalg^*$ such that $x^\sigma=\zeta_r x$. We assume the latter case and we replace $x$ 
by an appropriate power so that $ord(x)$ is an $r$-power. Let $w:=x^r$, as $\sigma$ fixes 
$w$, it has to be in $\Calg$. 

Let $\Aalg':=\Calg[x,y]$, $\Dalg_x:=\Calg[x]\le\Aalg'$, $\Dalg_y:=\Calg[y]\le\Aalg'$ and $\Calg':=
\Calg[w,z]\le\Aalg'$. Note that by the definitions of $w, z$ it is easy to deduce that $\Calg'$ 
is in the center of $\Aalg'$ and $x, y\not\in\Calg'$. Furthermore by $xy=\zeta_r yx$ it follows that 
the set $\{x^iy^j | 1\le i,j\le (r-1)\}$ is a system of generators for $\Aalg'$ as a $\Calg'$-module. The relation 
$xy=\zeta_r yx$ also implies, that conjugation by $y$ acts on $\Dalg_x$ as an automorphism 
of order $r$ and that the conjugation by $x$ acts on $\Dalg_y$ as an automorphism of order 
$r$. We can assume that both these $\Calg'$-automorphisms are semiregular as otherwise we 
can find a 
zero divisor by Proposition \ref{pro-regularaut}. Thus both $\Dalg_x$ and $\Dalg_y$ are free 
modules over $\Calg$ of rank $r$, furthermore assume $\Aalg'$ to be a free $\Calg$-module 
(also free $\Calg'$-module) or else we find a zero divisor in $\Calg$ (or $\Calg'$) by Lemma 
\ref{lem-nonfree}.

We can assume that $w,z$ generate a cyclic subgroup of $\Calg'$ otherwise by Lemma 
\ref{lem-discretelog} we can find a zero divisor in $\Calg'$. If
the order of $z$ is larger than the order of $w$ then there is a $u\in\Calg'$ with $u^r=w$.
Put $x':=u^{-1}x$, then $x'^r=1$ and $x'y=\zeta_r yx'$, thus conjugation by $x'$ gives an
automorphism of $\Dalg_y$, whence $(x'-1)$ is a zero divisor
by Lemma \ref{lem-minpolns}. Similarly, we find a zero divisor if the order of $w$ is larger than 
the order of $z$. Thus we can assume that $w$ and $z$ have equal orders, say $r^t$.
By looking at the elements $w^{r^{t-1}}$ and $z^{r^{t-1}}$, both of which have order 
$r$ and they generate a cyclic group, we can find a unique $0<j<r$ such that 
$ord(w^jz)<r^t$. We now follow the method of the proof of Theorem 5.1 of \cite{vdW} to 
find a zero divisor in $\Aalg'$. 

Define $y':=x^jy$, and using $(yxy^{-1}=\zeta_r^{-1}x)$ repeatedly we 
get, $y'^r=(x^jy)^{r-2}(x^jy)(x^jy)$ $=(x^jy)^{r-3}(x^jy)(\zeta_r^{-j}x^{2j}y^2)=\cdots=
\zeta_r^{-jr(r-1)/2}x^{rj}y^r=\zeta_r^{-jr(r-1)/2}w^jz$. Thus if $r$ is odd then $y'^r=w^jz$,
and replacing $y$ with $y'$ leads to the case discussed above where the order of the new $z$ 
(i.e. $w^jz$) is less than that of $w$ (remember that $xy'=\zeta_r y'x$ still holds), and we 
already get a zero divisor. If $r=2$ then $y'^2=-wz$ ($j=1$), and the argument of the odd 
$r$ case can be repeated except when $ord(-wz)$ does not fall, i.e. orders are 
such that $ord(wz)<ord(w)=ord(z)=ord(-wz)$. This case is only possible (recall $z\ne1$) when 
$w=z=-1$, so $x^2=y^2=-1$ and $y^{-1}xy=-x$. Notice that in this case $\Aalg'$ is like a ring 
of quaternions and we handle this case next in a standard way.

To treat this case, by Theorem 6.1 of \cite{vdW}, one can efficiently find $\alpha,\beta\in k$ 
such that $\alpha^2+\beta^2=-1$.  Put $u:=(\alpha y+\beta)\in \Dalg_y$ and $x':=ux$. If 
$x'\in \Dalg_y$ then $x\in u^{-1}\Dalg_y=\Dalg_y$ which is a contradiction. Thus, 
$x'\not\in \Dalg_y$, in particular $x'\neq \pm 1$. While using $xy=-yx$ we can deduce that
$x'^2=(\alpha y+\beta)x(\alpha y+\beta)x=(\alpha y+\beta)(-\alpha y+\beta)x^2=
(\alpha^2+\beta^2)(-1)=1$. Thus $(x'-1)$ is a zero divisor. This finishes the proof of 
Application 1 in all cases.

\subsection{Further Results on Finding Zero Divisors in $M_m(\Calg)$}

In this part we briefly outline an alternative
of the approach of Application~1. Formal statements
and details of proofs will be subject of a
subsequent paper.

Assume that $\Aalg\cong M_m(\Calg)$ for some
commutative semisimple algebra $\Calg$ over the finite field
$k$. As in the proof of Application 1, we use 
the method of \cite{GI} to find a maximal
semisimple subalgebra $\Dalg$ of $\Aalg$. Note
that $\Dalg$ is a free module over $\Calg$ of rank
$m$. Let $r$ be a prime divisor of $m$. Then
we can use the algorithm of
Theorem~\ref{thm-constructaut} to
find an automorphism of a {\em subalgebra}
$\Balg$ of order $r$ in time
$poly(m^r, \log |\Aalg|)$. The remaining
part of the proof of Application 1 can
be modified so that an automorphism
of prime order of a subalgebra of $\Dalg$
rather than one of the whole $\Dalg$
can be used to find a zero divisor in $\Aalg$
in polynomial time. This way
we obtain a deterministic algorithm 
of complexity $poly(m^r, \log |\Aalg|)$
for finding a zero divisor in an
algebra $\Aalg$ isomorphic to $M_m(\Calg)$,
where $r$ is the smallest prime divisor of $m$.

Using a generalization \cite{ChIK} of 
a method of \cite{BR} we can use the
zero divisor obtained above to compute
a subalgebra of $\Aalg$ (in the broader sense, thus
a subalgebra of a one-sided ideal of $\Aalg$) isomorphic
to $M_{m'}(\Calg)$, where $m'$ is a certain divisor of $m$.
Iterating this method we ultimately find
a zero divisor $z$ of $\Aalg$ which is
equivalent to an elementary matrix (a matrix
having just one nonzero entry)
wrt an isomorphism $\Aalg\cong M_m(\Calg)$.
Then the left ideal $\Aalg z$ is isomorphic 
to the standard module for $M_m(\Calg)$ (the
module of column vectors of length $m$ over $\Calg$).
Finding such a module is equivalent to
constructing an explicit isomorphism
with $M_m(\Calg)$. The time complexity
is $poly(m^r, \log |\Aalg|)$,
where $r$ is the {\em largest} prime divisor of $m$.
In particular, if $\Aalg\cong M_{2^\ell}(\Calg)$,
our method computes such an isomorphism in
deterministic {\em polynomial time}.

\section{Special Finite Fields: Proof of Application 4}\label{sec-p-1}

In this section we assume that $k=\F_p$ for a prime $p>3$
and the prime factors of $(p-1)$ are bounded by $S$.
We also assume that all the algebras that appear in the section are 
completely {\em split} semisimple algebras over $k$,
i.e. isomorphic to direct sums of copies of $k$.

We first show an algorithm that constructs an $r$-th Kummer extension of an 
algebra given a prime $r|(p-1)$. We basically generalize Lemma~2.3~of~\cite{ro2} 
to the following form:

\begin{lemma}\label{lem-smooth1}
Assume that $\Aalg$ is a free module over its subalgebra
$\Balg$ of rank $d$. Then in time $poly(\log|\Aalg|,S)$
we can find either a zero divisor in $\Aalg$ or an element $x\in \Aalg^*$ with a 
power of $r$ order, for a prime $r|(p-1)$, satisfying one of the following 
conditions:
\\(1) $r\neq d$, $x\not\in\Balg$ and $x^r\in\Balg$,
\\(2) $r=d$, $x^r\not\in\Balg$ and $x^{r^2}\in\Balg$,
\end{lemma}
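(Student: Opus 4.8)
\medskip\noindent\textbf{Proof plan.}
The plan is to read the required element directly off the structure of the $r$-Sylow subgroup of $\Aalg^*$ sitting above that of $\Balg^*$, for a suitable prime $r\mid(p-1)$, after disposing of a bounded–characteristic case. First I would assume $d\ge 2$ (otherwise $\Aalg=\Balg$ and there is nothing to prove) and dispatch the case $p\le(\dim_k\Aalg)^2$: here Berlekamp's deterministic algorithm runs in time $poly(\dim_k\Aalg,p)=poly(\log|\Aalg|)$ and splits $\Aalg$ into its simple components, and since $\Aalg$ is completely split and $d\ge2$ it is not a field, so this yields a zero divisor. So assume $p>(\dim_k\Aalg)^2$; in particular $p>n:=\dim_k\Aalg$, so for any $w\in\Aalg$ some $t\in k$ makes $w+t\cdot1_\Aalg$ a unit (the norm of $w+t\cdot1_\Aalg$ is a monic polynomial in $t$ of degree $n<p$). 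Running this unit-ification inside a greedy span computation I would build a $k$-basis $\gamma_1,\dots,\gamma_n$ of $\Aalg$ consisting of units. Throughout, whenever a subalgebra $\Balg[\beta]$ met below fails to be a free $\Balg$-module, Lemma~\ref{lem-nonfree} returns a zero divisor and we stop.

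Since the $\gamma_i$ span $\Aalg\supsetneq\Balg$, some $\gamma_i\notin\Balg$; fix such a unit $u$. Write $m:=|\Aalg^*|=(p-1)^n$, and for each prime $r\mid(p-1)$ let $r^{e_r}$ be the exact power of $r$ dividing $m$ and $u_r:=u^{m/r^{e_r}}$ its $r$-Sylow component, all computable by fast exponentiation. Since $u=\prod_{r\mid p-1}u_r$ and $u\notin\Balg$, at least one factor $u_r$ lies outside $\Balg$. More is true: because $\Aalg,\Balg$ are completely split, the Sylow count gives, for every prime $r\mid(p-1)$, that $v_r(|\Aalg^*|)=(\dim_k\Aalg)\,v_r(p-1)>(\dim_k\Balg)\,v_r(p-1)=v_r(|\Balg^*|)$, so the $r$-Sylow of $\Aalg^*$ is never contained in $\Balg$; hence $r$-elements outside $\Balg$ exist for every $r\mid(p-1)$, and a short search (using that this Sylow subgroup even spans $\Aalg$ over $k$) produces one. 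I would choose such a prime $r$ with $r\ne d$ whenever the factorization of $p-1$ permits it, set $a$ to be the corresponding $r$-element outside $\Balg$, and then perform the \emph{boundary descent}: since $a^{r^{J}}=1_\Aalg\in\Balg$ for large $J$ while $a=a^{r^{0}}\notin\Balg$, there is a largest $j_0\ge0$ with $a^{r^{j_0}}\notin\Balg$.

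If the prime obtained is $r\ne d$, then $x:=a^{r^{j_0}}$ has $r$-power order, $x\notin\Balg$, and $x^{r}\in\Balg$ --- exactly case~(1), and we are done. The remaining case $r=d$ can (with the careful witness search above) only be \emph{forced} when $d$ is the unique prime divisor of $p-1$; since $p-1$ is even and $p>3$ this means $d=2$ and $p-1$ is a power of $2$, so in fact \emph{every} unit of $\Aalg$ is a $2$-element. In this regime I want case~(2): a $2$-element $x$ with $x^{2}\notin\Balg$ and $x^{4}\in\Balg$. If already $a^{2}\notin\Balg$, keep $a$; otherwise $a^{2}\in\Balg$, and for $t\in k^*$ with $a':=a+t\cdot1_\Aalg$ a unit (all but $\le n<p$ values of $t$) we have $(a')^{2}=a^{2}+2t\,a+t^{2}1_\Aalg$ with $a^{2}+t^{2}1_\Aalg\in\Balg$ and $2t\,a\notin\Balg$, so $(a')^{2}\notin\Balg$; replace $a$ by this $a'$. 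Now $a$ is a $2$-element with $a^{2}\notin\Balg$, and the boundary descent applied to $a$ gives $j_0\ge1$ with $a^{2^{j_0}}\notin\Balg$, $a^{2^{j_0+1}}\in\Balg$; then $x:=a^{2^{\,j_0-1}}$ satisfies $x^{2}=a^{2^{j_0}}\notin\Balg$ and $x^{4}=a^{2^{j_0+1}}\in\Balg$, which is case~(2).

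Every step is a fast exponentiation, a linear-algebra computation over $k$, an invocation of Berlekamp in bounded characteristic, or an invocation of Lemma~\ref{lem-nonfree} (Lemma~\ref{lem-discretelog} would also be invoked if one prefers to route the Sylow bookkeeping through discrete logarithms); the only dependence on $S$ enters through the at most $\log p$ primes $r\mid(p-1)$, each $\le S$, so the running time is $poly(\log|\Aalg|,S)$. The step I expect to be the main obstacle is making the $r=d$ case airtight --- in particular certifying, without recourse to the GRH-dependent extraction of $r$-th roots in $\F_p$, that the witness search can always steer to a prime $r\ne d$ unless $d$ itself is the unique prime factor of $p-1$ (so that $d=2$ and the elementary perturbation trick applies); this is precisely where the hypothesis that $\Aalg$ and $\Balg$ are completely split semisimple, together with the exact Sylow count, is doing the real work.
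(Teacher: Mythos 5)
Your plan diverges from the paper's proof in a way that creates a genuine gap, and it is worth being precise about where. The paper never goes after the $r$-Sylow component of a unit; it works with the \emph{low-degree} condition $(y+a)^{r}\notin\Balg$ for \emph{all} primes $r\mid(p-1)$ simultaneously. Because this is a degree-$r$ polynomial condition in the shift $a$, the set of bad $a$'s has size at most $\sum_{\ell} r_\ell \le S\log p$, so a good shift is found by trying $a=0,1,\dots,\lfloor S\log p\rfloor$. With $z:=y+a$ one then has $z^{1}\notin\Balg$, $z^{r}\notin\Balg$ for every prime $r\mid(p-1)$, and $z^{p-1}=1\in\Balg$; descending in the divisor lattice yields a non-prime divisor $m\mid(p-1)$ with $z^{m}\in\Balg$ but $z^{m/r}\notin\Balg$ for every prime $r\mid m$, and any two prime factors $r_1,r_2$ of $m$ (with multiplicity) give the straddle after replacing $z$ by $z^{m/(r_1r_2)}$. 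One can take $r_1\ne d$ unless $m$ is a pure power of $d$, which lands in case (2). Only at the very end is the element raised to the power $(p-1)/r^{e}$ to make it an $r$-element, and that step is safe because the relevant ``discrepancy'' $x_{j}/x_{j'}\in\F_p^*$ is already a nontrivial $r$-power root of unity, and exponentiation by anything coprime to $r$ preserves that.

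Your approach instead asks for an $r$-element of $\Aalg^*$ outside $\Balg$ for a \emph{chosen} prime $r\ne d$, and the step ``a short search (using that this Sylow subgroup even spans $\Aalg$ over $k$) produces one'' is exactly what is missing. The Sylow count guarantees existence, but no efficient search is in sight. The $r$-components $\theta_r(\gamma_\ell):=\gamma_\ell^{(p-1)/r^{e}}$ of your unit basis need not contain anything outside $\Balg$: already in $\Aalg=\F_{13}\oplus\F_{13}$, $\Balg$ the diagonal, $r=2$ (so $(p-1)/r^e=3$), the unit basis $(1,1),(1,3)$ has both $2$-components equal to $(1,1)\in\Balg$ because $3$ is a cube in $\F_{13}^*$. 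And the natural shift argument breaks down here: $\theta_r(u+t)\notin\Balg$ is a degree-$(p-1)/r^{e}$ condition in $t$, so the bad set of shifts can have size as large as $(p-1)/r^{e}=\Theta(p)$, which is not $poly(\log p,S)$. Your flagged worry --- steering to $r\ne d$ --- is actually fine (you argue correctly that $d$ is the unique prime of $p-1$ only when $d=2$, $p-1=2^e$); and your $d=2$ perturbation $(a+t)^2=a^2+2ta+t^2$ is a correct special instance of the paper's shift idea. The real obstruction is that outside that special regime you have no way to produce the $r$-element you want to descend from, and the descent-from-an-$r$-element strategy gives nothing if, for the only prime $r$ you can reach, $r=d$ and the descent stops at $j_0=0$ (yielding $x\notin\Balg$, $x^d\in\Balg$, which matches neither case). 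The paper's ``shift so that $z^{r}\notin\Balg$ for \emph{all} primes $r$'' is precisely what rules this out, and your proof does not have a substitute for it.
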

\begin{proof}
As $\Balg$ is a completely split semisimple algebra, say of dimension $n$ 
over $k$, there are orthogonal primitive idempotents $f_1,\ldots,f_n$ such 
that $f_i\Balg\cong k$ for all $i$. For an $i\in\{1,\ldots,n\}$, we can 
project the hypothesis to the $f_i$ component, thus $\dim_k f_i\Aalg=d$ and 
there are orthogonal primitive idempotents $e_{i,1},\ldots,e_{i,d}$ of 
$\Aalg$ such that $f_i\Aalg=e_{i,1}\Aalg\oplus\cdots\oplus e_{i,d}\Aalg$. 
As $f_i$ is an identity element of $f_i\Aalg$ we further get that 
$f_i=(e_{i,1}+\cdots+e_{i,d})$.

Now pick an $y\in\Aalg\setminus\Balg$. Suppose (for the sake of contradiction) 
for all $1\le i\le n$ there is a single $y_i^*\in k$ that satisfies for all 
$1\le j\le d$, $ye_{i,j}=y_i^* e_{i,j}$. Then their sum gives us that 
$y=\sum_{i=1}^n y_i^*f_i$, as each $y_i^*f_i\in\Balg$ we further get that 
$y\in\Balg$. This contradiction shows that there is an $i\in\{1,\ldots,n\}$ and
distinct $j,j'\in\{1,\ldots,d\}$ such that $ye_{i,j}=y_1 e_{i,j}$ and 
$ye_{i,j'}=y_2 e_{i,j'}$ for some $y_1\ne y_2\in k$. Let us fix these 
$i,j,j',y_1,y_2$ for the rest of the proof, we do not compute them but use their 
existence for the correctness of the algorithm. We can assume $y\in\Aalg^*$ 
otherwise we have a zero divisor and we are done.

Let $r_1,\ldots,r_t$ be the prime divisors of $(p-1)$. Let us assume 
$p\geq (S\log p+1)$ as otherwise we can just invoke Berlekamp's polynomial 
factoring algorithm to find a complete split of $\Aalg$, and we are done.
As $p\geq (S\log p+1)$ then 
there is an integer $0\leq a<(S\log p+1)$ such that $(y_1+a)^{r_\ell}\ne
(y_2+a)^{r_\ell}$ for all $\ell\in\{1,\ldots,t\}$ (since there can be at most $tS$ 
elements in $\F_p$ satisfying at least one of these equations). We could also assume 
$(y+a)$ to be invertible as otherwise we are done. Note that $(y+a)^{r_\ell}e_{i,j}=
(y_1+a)^{r_\ell}e_{i,j}$ and $(y+a)^{r_\ell}e_{i,j'}=(y_2+a)^{r_\ell} e_{i,j'}$ which 
together with $(y_1+a)^{r_\ell}\ne(y_2+a)^{r_\ell}$ implies that
$(y+a)^{r_\ell}\not\in\Balg$. Thus $z:=(y+a)$ is an element in $\Aalg^*$ for which  
$z^{r_\ell}\not\in\Balg$ for $\ell\in\{1,\ldots,t\}$.

Note that $z^{p-1}=1$, in particular $z^{p-1}\in\Balg$. Thus we can find two, not 
necessarily distinct, prime divisors $r_1$ and $r_2$ of $(p-1)$ such that replacing 
$z$ with an appropriate power of it we have $z^{r_1},z^{r_2}\not\in\Balg$
but $z^{r_1r_2}\in\Balg$. Either $r_1=r_2=d$ and we take $(x,r)=(z,d)$, or 
$r_1\neq r_2$ in which case say wlog $r_1\neq d$ and we take $(x,r)=(z^{r_2},r_1)$.
Finally we can raise $x$ by a suitable power (coprime to $r$) so that $x$ has a 
power of $r$ order together with the other properties.
\end{proof}

For an integer $m$ we denote by $\Phi_m(X)$ the $m$th cyclotomic polynomial in 
$k[X]$. Let $r_1,\ldots,r_t$ be the prime divisors of $(p-1)$. Then for a subset $I$ 
of $\{1,\ldots,t\}$ we denote the product $\prod_{i\in I}r_i$ by $r_I$.
We now give an algorithm that either finds a zero divisor in $\Aalg$ or a homomorphism
from an $r_I$-th cyclotomic extension onto $\Aalg$.

\begin{lemma}\label{lem-smooth2}
Let $\Balg<\Aalg$. Assume that we are also given
a surjective homomorphism from $k[X]/(\Phi_{r_I}(X))$
onto $\Balg$ for some subset $I$ of $\{1,\ldots,t\}$.
Then in time $poly(\log|\Aalg|, S)$ we can compute
either a zero divisor in $\Aalg$ or a subalgebra $\Balg'>\Balg$ of
$\Aalg$ together with a surjective homomorphism from 
$k[X]/(\Phi_{r_{I'}}(X))$ onto $\Balg'$ for some subset 
$I'\subseteq \{1,\ldots,t\}$.
\end{lemma}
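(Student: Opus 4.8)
The plan is to peel one more prime off $\Aalg$ on top of $\Balg$. First I would clear the routine obstructions: if $\Aalg$ is not a free $\Balg$-module, Lemma~\ref{lem-nonfree} already returns a zero divisor of $\Aalg$; so assume $\Aalg$ is free over $\Balg$ of rank $d\ge 2$. Apply Lemma~\ref{lem-smooth1}: it returns either a zero divisor of $\Aalg$, or a prime $r\mid(p-1)$ and an element of $r$-power order which --- after passing to its $r$-th power in alternative~(2) --- is an $x\in\Aalg^*$ with $x\notin\Balg$ and $c:=x^{r}\in\Balg$, with $r\ne d$ in alternative~(1) and $r=d$ in alternative~(2). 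Since $x$ has $r$-power order, $c$ is componentwise an $r$-th power in $\Balg$, so $\Balg[x]$ is a homomorphic image of the Kummer algebra $\Balg[Y]/(Y^{r}-c)$, which (as $r\mid p-1$, so $\zeta_r\in k\subseteq\Balg$) is split of rank $r$ over $\Balg$ and carries the order-$r$ automorphism $\sigma\colon Y\mapsto\zeta_rY$ of Proposition~\ref{pro-rootext}. Comparing $k$-dimensions, either $\Balg[x]$ has rank exactly $r$ over $\Balg$ --- so $\Balg[x]\cong\Balg[Y]/(Y^{r}-c)$ and $\sigma$ acts on it semiregularly with fixed algebra $\cong\Balg$ --- or it does not, and then either $\Balg[x]$ fails to be $\Balg$-free (Lemma~\ref{lem-nonfree}) or the kernel of the map is not $\sigma$-invariant and Lemma~\ref{lem-refine} refines it (just as in the proof of Lemma~\ref{lem-primeext}) into a zero divisor of $\Aalg$. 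So we may assume $\Balg[x]$ is a split commutative semisimple extension of $\Balg$ of rank $r$: a proper subalgebra of $\Aalg$ in alternative~(1) (where $r\le d$ and $r\ne d$ give $r<d$), and all of $\Aalg$ in alternative~(2).

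Next I would pass to cyclotomic form. Writing $\Balg$ as the image of $k[X]/(\Phi_{r_I}(X))$, let $\xi$ be the image of $X$: a primitive $r_I$-th root of unity on every component. Using the discrete-log routine of Lemma~\ref{lem-discretelog} (which otherwise already produces a zero divisor) compute $c_0\in\Balg^*$ with $c_0^{r}=c$, and set $\eta:=xc_0^{-1}$, so $\eta^{r}=1$ and $\Balg[x]=\Balg[\eta]$. If $\eta-1$ is a zero divisor --- equivalently $\eta=1$ on some component --- we are done; otherwise $\eta$ is a primitive $r$-th root of unity on every component. If $r\mid r_I$, then $\zeta_r:=\xi^{r_I/r}\in\Balg$ and $\eta$ is an order-$r$ element outside the cyclic group it generates (else $x\in\Balg$), so that subgroup of $\Aalg^*$ is noncyclic and Lemma~\ref{lem-discretelog} applied to $\zeta_r,\eta$ yields a zero divisor of $\Aalg$. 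If $r\nmid r_I$, I would build from $\xi$ and $\eta$ an element $\xi'\in\Balg':=\Balg[\eta]$ that generates $\Balg'$ over $k$ and is a primitive $r_{I'}$-th root of unity on every component, for a suitable $I'\supseteq I$ containing the index of $r$ (enlarging $I'$ by further primes of $p-1$ if necessary), and output the surjection $k[X]/(\Phi_{r_{I'}}(X))\to\Balg'$ sending $X\mapsto\xi'$; in alternative~(2) the same construction runs with $\Balg'=\Aalg$. Every step is linear algebra over $k$, discrete logs in groups of $r$-power order with $r\le S$, and invocations of the cited lemmas, so the time bound $poly(\log|\Aalg|,S)$ holds.

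The genuinely delicate step is the last one: realising $\Balg'=\Balg[\eta]$ as a quotient of an \emph{admissible} cyclotomic algebra $k[X]/(\Phi_{r_{I'}}(X))$, i.e.\ with $r_{I'}$ a product of distinct primes of $p-1$. This amounts to choosing, for each of the finitely many components of $\Balg'$, a primitive $r_{I'}$-th root of unity in $k$, all distinct, and there are only $\phi(r_{I'})=\prod_{i\in I'}(r_i-1)$ of them, which must be at least $\dim_k\Balg'$. When the prime factors of $p-1$ are too few for any admissible $r_{I'}$ to suffice --- for example $p-1$ a power of $2$ and $\Balg=k$, where the only candidates have $\phi(r_{I'})=1$ --- the argument must instead be driven into the zero-divisor alternative, by showing that then $\eta$ (or $x$) already exposes a zero divisor of $\Aalg$ directly. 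Establishing this dichotomy cleanly is the crux.
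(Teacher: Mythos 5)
Your proof has genuine gaps, and the most serious one is the step ``compute $c_0\in\Balg^*$ with $c_0^r=c$''. Lemma~\ref{lem-discretelog} computes discrete logarithms between two \emph{given} $r$-elements; it does not extract $r$-th roots, and deterministic $r$-th root extraction in $\F_p$ (equivalently, finding a generator of the $r$-Sylow subgroup of $\F_p^*$) is exactly the GRH-dependent primitive this paper is designed to avoid. If $r\nmid r_I$ there is not even an obvious $r$-element of $\Balg$ to run the discrete log against. The paper's proof never normalizes $x$ to an $r$-th root of unity; instead it argues that the only subcase that does not already yield a zero divisor is the one where $x^r=1$ to begin with. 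Concretely: in alternative~(2) of Lemma~\ref{lem-smooth1} it forms the norm $z:=\prod_{i=0}^{r-1}x^{\sigma^i}\in\Balg$ with $z^r=\zeta^{r(r-1)/2}x^{r^2}$ and exhibits $(z-\zeta^i x^r)$ (resp.\ $(wz-x^2)$ when $r=2$, via Schoof's deterministic $\sqrt{-1}$) as a zero divisor; and in alternative~(1) with $x^r\neq 1$ it notes that the endomorphism $x\mapsto\zeta x$ is either non-injective or an order-$r$ $\Balg$-automorphism of $\Aalg$, which cannot be semiregular because $\dim_\Balg\Aalg=d<r$, so Proposition~\ref{pro-regularaut} gives a zero divisor. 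Only when $x^r=1$ does the construction proceed, and then there is nothing to normalize.

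A second gap: you assert that after ruling out zero divisors, $\Balg[x]$ has rank exactly $r$ over $\Balg$, by comparing with $\Balg[Y]/(Y^r-c)$ and invoking Lemma~\ref{lem-nonfree} or Lemma~\ref{lem-refine} on the kernel. This does not hold; the kernel can be a proper $\sigma$-invariant ideal (e.g.\ $\Balg=k$, $x=(1,\zeta_r)\in k\oplus k$ with $r>2$ gives $\Balg[x]$ of rank $2<r$), and neither cited lemma manufactures a zero divisor of $\Aalg$ from it. In fact the paper's base case has $\Aalg=\Balg[x]$ of rank $d<r$ over $\Balg$, so the rank is generically \emph{not} $r$, and the sign of the inequality you derive ($r<d$) is the opposite of what actually happens ($d<r$). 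Finally, the ``delicate step'' you flag at the end --- realizing $\Balg'$ as a cyclotomic quotient --- is handled in the paper not by hunting for an element $\xi'$ with pairwise distinct component values, but by the Chinese-Remainder decomposition $k[X]/(\Phi_{r_{I'}})\cong\Calg_1\otimes_k\Calg_2$ with $\Calg_1\cong k[X_1]/(\Phi_{r_I})$, $\Calg_2\cong k[X_2]/(\Phi_r)$, $X_1=X^{q_1}$, $X_2=X^{q_2}$; the map $\psi'$ sending $X\mapsto\psi_1(X_1)\psi_2(X_2)=\xi x$ is then automatically well-defined and surjective, since $(\xi x)^{q_1}=\xi$ and $(\xi x)^{q_2}=x$ already recover both generators of $\Aalg=\Balg[x]$. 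So the ``not enough distinct roots'' dichotomy you worried about never arises; your instinct that this step was the crux was right, but its resolution is elementary once the CRT structure is used, whereas the real difficulty lies in the root-extraction issue above.
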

\begin{proof}
We may clearly assume that $\Aalg$ is a free module
(of rank $d$) over $\Balg$.
Let the prime $r$ and the element $x\in\Aalg^*$ be the result of
an application of the algorithm of Lemma \ref{lem-smooth1}.
If $\Balg[x]$ is a proper subalgebra of $\Aalg$ then we can solve the 
problem by two recursive calls: first on $(\Balg[x],\Balg)$ and then 
on $(\Aalg,\Balg[x])$. Thus the base case of the recursion is when 
$\Aalg=\Balg[x]$. We handle this case now. In this case clearly $d\le r$. 

Assume case (2) i.e. $d=r$. We can assume $\Aalg=\Balg[x^r]$ as otherwise 
the subalgebra $\Balg[x^r]$ is a proper subalgebra of $\Aalg$ and we can 
find a zero divisor because $\Aalg$ cannot be a free module over this
subalgebra (as $dim_\Balg\Aalg=r$ is a prime). It follows that 
$\Phi_r(x^r)\neq 0$ because otherwise the rank of $\Aalg$ as a $\Balg$-module
would be at most $\phi(r)<r$, a contradiction. {}So we can assume $x^{r^2}\ne1$ 
as otherwise $\Phi_r(x^r)|(x^{r^2}-1)$ is a zero divisor and we are done. 
Thus we can find a power $\zeta\neq 1$ of
$x^{r^2}$ for which $\zeta^r=1$. This means, in particular, that a primitive 
$r$-th root of unity is in $\Balg$, and we have $\Aalg\cong\Balg[X]/(X^r-x^{r^2})$.
So we get a $\Balg$-automorphism $\sigma$ of $\Aalg$ that sends 
$x^r\mapsto\zeta x^r$. 
The automorphism $\sigma$ is of order $r$, is semiregular and satisfies
$\Aalg_\sigma=\Balg$. We compute the element
$z:=\prod_{i=0}^{r-1} x^{\sigma^i}$. Then $z^\sigma=z$, therefore $z\in\Balg$. Also,
$z^r=\prod_{i=0}^{r-1} {(x^r)}^{\sigma^i}=\zeta^{r(r-1)/2}x^{r^2}$.
If $r$ is odd then $z^r=x^{r^2}$ while $z\ne \zeta^i x^r$ for all $i$ 
($z, \zeta^i\in\Balg$ but $x^r\not\in\Balg$), thus $(z-\zeta^ix^r)$ is a zero divisor of 
$\Aalg$, for some $i$, and we are done. If $r=2$ then $z^2=-x^4$. We use the algorithm 
of \cite{Schoof} 
for finding a square root $w$ of $-1$ in $k$, observe that $(wz)^2=x^4$. Again as 
$wz\ne\pm x^2$ ($z, w\in\Balg$ but $x^2\not\in\Balg$), thus $(wz-x^2)$ is a zero divisor 
of $\Aalg$ and we are done.

Assume case (1) i.e. $d<r$, with $x^r\ne1$. We could assume $\Aalg=\Balg[x]$ to be a free
$\Balg$-module with the free basis $\{1,x,\ldots,x^{d-1}\}$, as otherwise we can find a 
zero divisor in $\Balg$ by Lemma \ref{lem-nonfree}. Also we can find a power $\zeta\neq 1$ 
of $x^r$ for which $\zeta^r=1$. These two facts mean that there is a well defined 
endomorphism $\phi$ of $\Aalg$ that maps $x$ to $\zeta x$ and fixes $\Balg$. Compute the 
kernel $J\subsetneq\Aalg$ of this endomorphism. If $J$ is nonzero then the elements of $J$
are zero divisors of $\Aalg$ (as $\phi$ cannot send a unit to zero), and we are done.
If $J$ is zero then $\phi$ is a $\Balg$-automorphism of $\Aalg$, clearly of order $r$. As 
$\dim_\Balg\Aalg<r$, $\phi$ cannot be semiregular, so we get a zero divisor by 
Proposition \ref{pro-regularaut} and we are done.

Finally assume again case (1) i.e. $d<r$, with $x^r=1$.
Let $\psi$ denote the given map $k[X]/(\Phi_{r_{I}}(X))$
onto $\Balg$. If $r\in I$ then put $y:=\psi(X^{r_I/r})$. Then $y\in\Balg^*\setminus\{1\}$
because $X^{r_I/r}, (X^{r_I/r}-1)$ are coprime to $\Phi_{r_I}(X)$ and are thus units. 
As $x^r=y^r$ but $x\ne x^i y$ for all $i$ ($y\in\Balg$ while $x\not\in\Balg$), we deduce
that $(x-x^i y)$ is a zero divisor for some $i$, and we are done.
Assume that $r\not\in I$. Let $I':=I\cup\{r\}$ 
and let $\Calg=k[X]/(\Phi_{r_{I'}}(X))$. We now break $\Calg$ using Chinese Remaindering.
Let $q_1$ be a multiple of $r$ which is congruent to 1 modulo $r_I$ and let $q_2$ be
a multiple of $r_I$ congruent 1 modulo $r$. Let
$X_1:=X^{q_1}$, $X_2:=X^{q_2}$ and let $\Calg_1$ resp.~$\Calg_2$ be the subalgebras
of $\Calg$ generated by $X_1$ resp.~$X_2$. Then
 $\Calg_1\cong k[X_1]/(\Phi_{r_{I}}(X_1))$ and
 $\Calg_2\cong k[X_2]/(\Phi_{r}(X_2))$. Let $\psi_1$ be the given
surjective map from $\Calg_1$ onto $\Balg$ 
and let $\psi_2$ be the map from $\Calg_2$ sending $X_2$ to
$x$. Let $\psi'$ be the map from $\Calg\cong\Calg_1\oplus\Calg_2$ into $\Aalg$
that is the linear extension of the map sending $X^i=(X_1^i,X_2^i)$ to 
$\psi_1(X_1^i)\psi_2(X_2^i)$. Clearly, $\psi'$ 
is a homomorphism from $\Calg$ to $\Aalg$ and is onto (as $\Aalg=\Balg[x]$).
This finishes the proof.  
\end{proof}

Using Lemma \ref{lem-smooth2} as an induction tool,
we obtain the following.

\begin{theorem}\label{thm-smooth}
Let $f(X)$ be a polynomial of degree $n$ which completely splits into
linear factors over $\F_p$. Let $r_1<\ldots<r_t$ be the prime factors
of $(p-1)$. Then by a deterministic algorithm of running time
$poly(r_t,n,\log p)$, we can either
find a nontrivial factor of $f(X)$ or compute a surjective homomorphism
$\psi$ from $\F_p[X]/(\Phi_{r_I}[X])$ to
$\F_p[X]/(f(X))$, where $r_I=\prod_{i\in I}r_i$ for some
subset $I$ of $\{1,\ldots,t\}$ and
$\Phi_{r_I}(X)$ is the cyclotomic polynomial of degree
$\prod_{i\in I}(r_i-1)$.
\end{theorem}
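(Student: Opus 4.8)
The plan is to run Lemma~\ref{lem-smooth2} as the inductive step of a short, dimension-increasing loop. First I would put $\Aalg:=\F_p[X]/(f(X))$; by the standing assumptions of this section (possibly after the routine preprocessing that reduces to the squarefree case, either exhibiting a proper factor of $f$ or leaving $f$ squarefree) $\Aalg$ is a completely split semisimple $k$-algebra with $\dim_k\Aalg=n$, where $k=\F_p$. Since a zero divisor $g(X)\bmod f(X)$ of $\Aalg$ yields the nontrivial factor $\gcd(g,f)$ of $f$ in $poly(n,\log p)$ time, it is enough to produce, within the stated bound, either a zero divisor in $\Aalg$ or the required surjection. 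I would initialise the loop at the base case $\Balg_0:=k\subseteq\Aalg$, $I_0:=\emptyset$: here $r_{I_0}=1$ is the empty product, $\Phi_{1}(X)=X-1$, and $k[X]/(\Phi_1(X))\cong k$, so the identity map $k[X]/(\Phi_{r_{I_0}}(X))\to\Balg_0$ is a surjective homomorphism of the desired shape, of degree $1=\prod_{i\in\emptyset}(r_i-1)$.

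Next I would iterate: while the current subalgebra satisfies $\Balg_j\subsetneq\Aalg$, apply Lemma~\ref{lem-smooth2} to the pair $\Balg_j<\Aalg$ equipped with the surjection $k[X]/(\Phi_{r_{I_j}}(X))\twoheadrightarrow\Balg_j$. In $poly(\log|\Aalg|,S)$ time the lemma returns either a zero divisor in $\Aalg$---in which case I stop and output the corresponding nontrivial factor of $f$---or a strictly larger subalgebra $\Balg_{j+1}>\Balg_j$ of $\Aalg$ together with a surjection $k[X]/(\Phi_{r_{I_{j+1}}}(X))\twoheadrightarrow\Balg_{j+1}$ for some $I_{j+1}\subseteq\{1,\dots,t\}$, which I feed back into the loop. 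Because $\dim_k\Balg_j$ strictly increases at each successful step and is bounded by $\dim_k\Aalg=n$, the loop executes at most $n$ times; it terminates either having produced a nontrivial factor of $f$, or with $\Balg_m=\Aalg$, in which case the accumulated map $\psi: k[X]/(\Phi_{r_I}(X))\to\Aalg=\F_p[X]/(f(X))$ with $I:=I_m\subseteq\{1,\dots,t\}$ is precisely the object asserted.

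Finally I would settle the bookkeeping. As $r_I=\prod_{i\in I}r_i$ is a product of distinct primes it is squarefree, hence $\deg\Phi_{r_I}=\prod_{i\in I}(r_i-1)$, as claimed. For the running time, $|\Aalg|=p^{n}$ and every prime factor of $p-1$ is at most $S\le r_t$, so each invocation of Lemma~\ref{lem-smooth2} costs $poly(n\log p,\,r_t)$, the concluding gcd costs $poly(n,\log p)$, and there are at most $n$ invocations; altogether the algorithm runs in $poly(r_t,n,\log p)$ time. I do not expect a genuine obstacle at this level: all the substance---the Kummer- and cyclotomic-extension constructions, together with the case analysis that delivers either a zero divisor or a strictly larger cyclotomic subalgebra---is already encapsulated in Lemma~\ref{lem-smooth2} (and, beneath it, Lemma~\ref{lem-smooth1}); the only points needing a moment's care are handling the degenerate base case $I=\emptyset$ correctly and justifying termination via the monotone dimension, both routine.
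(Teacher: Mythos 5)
Your proof is correct and follows exactly the paper's intended approach: the paper's own proof is simply the one-line remark ``Using Lemma~\ref{lem-smooth2} as an induction tool,'' and you have carried this out explicitly with the natural base case $\Balg_0=k$, $I_0=\emptyset$, $\Phi_1(X)=X-1$, termination by strict dimension growth, and the correct $poly(r_t,n,\log p)$ accounting.
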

\qed

Note that if $\psi$ is not an isomorphism then we can break the cyclotomic
ring above and find its invariant decomposition into ideals by Lemma \ref{lem-refine}. 
As we know the automorphism group of cyclotomic extension rings over $\F_p$
(and of their ideals as well), this theorem immediately implies the statement 
of Application 4. 

\else


\fi

\end{document}